\newtheorem{lemma}{Lemma}[section]
\newtheorem{remark}{Remark}[section]
\newcommand{\G}{\mathbf{G}}
\newcommand{\R}{\mathbf{R}}
\DeclareMathOperator*{\argmax}{arg\,max}
\begin{document}
\title{Multilevel Polarization of Polar Codes Over Arbitrary Discrete Memoryless Channels}
\author{\IEEEauthorblockN{Aria G. Sahebi and S. Sandeep Pradhan\\ \thanks{This work was presented in part in the 49th Annual Allerton Conference, Allerton, IL, USA, September 28 - 30, 2011.} \thanks{This work was supported by NSF grants CCF-0915619 and CCF-1116021.}}
\IEEEauthorblockA{Department of Electrical Engineering and Computer Science,\\
University of Michigan, Ann Arbor, MI 48109, USA.\\
Email: \tt\small ariaghs@umich.edu, pradhanv@umich.edu}}


\maketitle

\begin{abstract}
It is shown that polar codes achieve the symmetric capacity of discrete memoryless channels with arbitrary input alphabet sizes. It is shown that in general, channel polarization happens in several, rather than only two levels so that the synthesized channels are either useless, perfect or ``partially perfect''. Any subset of the channel input alphabet which is closed under addition, induces a coset partition of the alphabet through its shifts. For any such partition of the input alphabet, there exists a corresponding partially perfect channel whose outputs uniquely determine the coset to which the channel input belongs. By a slight modification of the encoding and decoding rules, it is shown that perfect transmission of certain information symbols over partially perfect channels is possible. Our result is general regarding both the cardinality and the algebraic structure of the channel input alphabet; i.e we show that for any channel input alphabet size and any Abelian group structure on the alphabet, polar codes are optimal. It is also shown through an example that polar codes when considered as group/coset codes, do not achieve the capacity achievable using coset codes over arbitrary channels.
\end{abstract}

\begin{IEEEkeywords}
Polar codes, Channel polarization, Group codes, Discrete memoryless channels
\end{IEEEkeywords}

\IEEEpeerreviewmaketitle

\section{Introduction}
\IEEEPARstart{P}olar codes were originally proposed by Arikan in \cite{arikan_polar} for discrete memoryless channels with a binary input alphabet. Polar codes over binary input channels are shifted linear (coset) codes capable of achieving the symmetric capacity of channels. These codes are constructed based on the Kronecker power of the $2\times 2$ matrix $\left[\begin{array}{cc}1&0\\1&1\\\end{array}\right]$ and are the first known class of capacity achieving codes with an explicit construction.\\

It is known that non-binary codes outperform binary codes in certain communication settings. Therefore, constructing capacity achieving codes for channels of arbitrary input alphabet sizes is of great interest. In order to construct capacity achieving codes over non-binary channels, there have been attempts to extend polar coding techniques to channels of arbitrary input alphabet sizes. It is shown in \cite{sasoglu_polar_q} that polar codes achieve the symmetric capacity of channels when the size of the input alphabet is a prime. For channels of arbitrary input alphabet sizes, it is shown in \cite{sasoglu_polar_q} that the original construction of polar codes does not necessarily achieve the symmetric capacity of the channel due to the fact that polarization (into two levels) may not occur for arbitrary channels. In the same paper, a randomized construction of polar codes based on permutations is proposed. In this approach, the existence of a polarizing transformation is shown by a (small) random coding argument over the ensemble of permutations of the input alphabet. In another approach in \cite{sasoglu_polar_q}, a code construction method is proposed which is based on the decomposition of the composite input channel into sub-channels of prime input alphabet sizes. In this multilevel code construction method, a separate polar code is designed for each sub-channel of prime input alphabet size. It is shown in \cite{mori_polar} that for channels for which the input alphabet size is a prime power, polar codes defined on the input alphabet can achieve the symmetric capacity without the need to use multilevel code construction methods.\\

Another related work is \cite{Abbe_Polar_Mac}, in which the authors have shown that polar codes are sufficient to achieve the uniform sum rate on any binary input MAC and it is stated that the same technique can be used for the point-to-point problem to achieve the symmetric capacity of the channel when the size of the alphabet is a power of $2$. In a recent work, it has been shown in \cite{Park_Barg_Polar} that polar codes achieve the capacity of channels with input alphabet size a power of $2$.\\

In this paper, we show that with a slight modification of the encoding and decoding rules, standard polar codes are sufficient to achieve the symmetric capacity of all discrete memoryless channels. Our result is general regarding both the cardinality and the algebraic structure of the channel input alphabet; i.e we show that for any channel input alphabet size and any Abelian group structure on the alphabet, polar codes are optimal. This result was first reported in \cite{Sahebi_polar_allerton2011}. We use a combination of algebraic and coding techniques and show that in general, channel polarization occurs in several levels rather than only two: Suppose the channel input alphabet is $\G$ and is endowed with an Abelian group structure. Then for any subset $H$ of the channel input alphabet $\G$ which is closed under addition (i.e any subgroup of $\G$), there may exist a corresponding polarized channel which can perfectly transmit the index of the shift (coset) of $H$ in $\G$ which contains the input. As an example, for a channel of input $\mathds{Z}_6$, there are four subgroups of the input alphabet: \textbf{i)} $\{0\}$ with cosets $\{0\}$, $\{1\}$, $\{2\}$, $\{3\}$, $\{4\}$ and $\{5\}$, \textbf{ii)} $\{0,3\}$ with cosets $\{0,3\}$, $\{1,4\}$ and $\{2,5\}$, \textbf{iii)} $\{0,2,4\}$ with cosets $\{0,2,4\}$ and $\{1,3,5\}$ and \textbf{iv)} $\mathds{Z}_6$. For polar codes over $\mathds{Z}_6$, the asymptotic synthesized channels can exist in four forms: \textbf{i)} can determine which one of the cosets $\{0\}$, $\{1\}$, $\{2\}$, $\{3\}$, $\{4\}$ or $\{5\}$ contains the input symbol, (perfect channels with capacity $\log_2 6$ bits per channel use), \textbf{ii)} can determine which one of the cosets $\{0,3\}$, $\{1,4\}$ or $\{2,5\}$ contains the input symbol (partially perfect channels with capacity $\log_2 3$ bits per channel use), \textbf{iii)} can determine which one of the cosets $\{0,2,4\}$ or $\{1,3,5\}$ contains the input symbol (partially perfect channels with capacity $1$ bit per channel use), \textbf{iv)} can only determine the input belongs to $\{0,1,2,3,4,5\}$ (useless channel). Cases \textbf{i},\textbf{ii},\textbf{iii} and \textbf{iv} correspond to coset decompositions of $\mathds{Z}_6$ based on subgroups $\{0\}$, $\{0,3\}$, $\{0,2,4\}$ and $\{0,1,2,3,4,5\}$ respectively.\\

Although standard binary polar codes are group (linear) codes, the class of capacity achieving codes constructed and analyzed in this paper are not group codes. It is known that group codes do not generally achieve the symmetric capacity of discrete memoryless channels \cite{ahlswede_group}. Hence, one could have predicted that standard polar codes cannot achieve the symmetric capacity of arbitrary channels and a modification of the encoding rule is indeed necessary to achieve that goal. Due to the modifications we make to the encoding rule of polar codes, the constructed codes fall into a larger class of structured codes called nested group codes.\\

The paper is organized as follows: In Section \ref{prel}, some definitions and basic facts are stated which are used in the paper. In Section \ref{section:example}, we present two motivating examples of $4$-ary and $6$-ary channels and observe the polarization effect on these channels. In Section \ref{rings}, we show that polar codes achieve the symmetric capacity of channels with input alphabet size $q=p^r$ where $p$ is a prime and $r$ is an integer. This result is generalized to arbitrary channels in Section \ref{section:abelian}. In Section \ref{section:examples}, the relation of polar codes to group codes is discussed and two examples of channels over $\mathds{Z}_4$ are provided. In the first example, we show that polar codes approach the capacity of channels achievable using group codes. The intent of the second example is to show that this is not generally the case; i.e. polar codes do not generally approach the capacity of channels achievable using group/coset codes.\\

\section{Preliminaries} \label{prel}
\subsubsection{Source and Channel Models}
We consider discrete memoryless and stationary channels used without feedback. We associate two finite sets $\mathcal{X}$ and $\mathcal{Y}$ with the channel as the channel input and output alphabets. These channels can be characterized by a conditional probability law $W(y|x)$ for $x\in \mathcal{X}$ and $y\in \mathcal{Y}$. The channel is specified by $(\mathcal{X},\mathcal{Y},W)$. The source of information generates messages over the set $\{1,2,\ldots,M\}$ uniformly for some positive integer $M$.\\

\subsubsection{Achievability and Capacity}
A transmission system with parameters $(n,M,\tau)$ for reliable communication over a given channel $(\mathcal{X},\mathcal{Y},W)$ consists of an encoding mapping $e:\{1,2,\ldots,M\}\rightarrow \mathcal{X}^n$ and a decoding mapping $d:\mathcal{Y}^n\rightarrow\{1,2,\ldots,M\}$ such that
\begin{align*}
\frac{1}{M}\sum_{m=1}^{M}W^n\left(d(Y^n)\ne
m|X^n=e(m)\right)\le \tau
\end{align*}
Given a channel $(\mathcal{X},\mathcal{Y},W)$, the rate $R$ is said to be achievable if for all $\epsilon>0$ and for all sufficiently large $n$, there exists a transmission system for reliable communication with parameters $(n,M,\tau)$ such that
\begin{align*}
\frac{1}{n}\log M \ge R-\epsilon,\qquad\qquad \tau\le \epsilon
\end{align*}
\\
\subsubsection{Symmetric Capacity and the Bhattacharyya Parameter}
For a channel $(\mathcal{X},\mathcal{Y},W)$, the symmetric capacity is defined as $I^0(W)=I(X;Y)$ where the channel input $X$ is uniformly distributed over $\mathcal{X}$ and $Y$ is the output of the channel; i.e. for $q=|\mathcal{X}|$,
\begin{align*}
I^0(W)=\sum_{x\in\mathcal{X}}\sum_{y\in\mathcal{Y}}\frac{1}{q}W(y|x)\log\frac{W(y|x)}{\displaystyle\sum_{{\tilde{x} \in\mathcal{X}}}\frac{1}{q}W(y|\tilde{x})}
\end{align*}
The Bhattacharyya distance between two distinct input symbols $x$ and $\tilde{x}$ is defined as
\begin{align*}
Z(W_{\{x,\tilde{x}\}})=\sum_{y\in\mathcal{Y}}\sqrt{W(y|x)W(y|\tilde{x})}
\end{align*}
and the average Bhattacharyya distance is defined as
\begin{align*}
Z(W)=\sum_{\substack{x,\tilde{x}\in \mathcal{X}\\x\ne\tilde{x}}}\frac{1}{q(q-1)}Z(W_{\{x,\tilde{x}\}})
\end{align*}
\\
\subsubsection{Binary Polar Codes}
For any $N=2^n$, a polar code of length $N$ designed for the channel $(\mathds{Z}_2,\mathcal{Y},W)$ is a linear code characterized by a generator matrix $G_N$ and a set of indices $A\subseteq \{1,\cdots,N\}$ of \emph{perfect channels}. The generator matrix for polar codes is defined as $G_N=B_NF^{\otimes n}$ where $B_N$ is a permutation of rows, $F=\left[ \begin{array}{cc}1 & 0\\1 & 1\end{array} \right]$ and $\otimes$ denotes the Kronecker product. The set $A$ is a function of the channel. The decoding algorithm for polar codes is a specific form of successive cancellation \cite{arikan_polar}.\\

\subsubsection{Groups, Rings and Fields}
All groups referred to in this paper are \emph{Abelian groups}. Given a group $(\G,+)$, a subset $H$ of $\G$ is called a \emph{subgroup} of $\G$ if it is closed under the group operation. In this case, $(H,+)$ is a group on its own right. This is denoted by $H\le \G$. A \emph{coset} $C$ of a subgroup $H$ is a shift of $H$ by an arbitrary element $a\in \G$ (i.e. $C=a+H$ for some $a\in\G$). For any subgroup $H$ of $\G$, its cosets partition the group $\G$. A \emph{transversal} $T$ of a subgroup $H$ of $\G$ is a subset of $\G$ containing one and only one element from each coset (shift) of $H$.\\
We give some examples in the following: The simplest non-trivial example of groups is $\mathds{Z}_2$ with addition mod-$2$ which is a \emph{ring} and a \emph{field} with multiplication mod-$2$. The group $\mathds{Z}_2\times \mathds{Z}_2$ is also a ring and a field under component-wise mod-$2$ addition and a carefully defined multiplication. The group $\mathds{Z}_4$ with mod-$4$ addition and multiplication is a ring but not a field since the element $2\in\mathds{Z}_4$ does not have a multiplicative inverse. The subset $\{0,2\}$ is a subgroup of $\mathds{Z}_4$ since it is closed under mod-$4$ addition. $\{0\}$ and $\mathds{Z}_4$ are the two other subgroups of $\mathds{Z}_4$. The group $\mathds{Z}_6$ is neither a field nor a ring. Subgroups of $\mathds{Z}_6$ are: $\{0\}$, $\{0,3\}$, $\{0,2,4\}$ and $\mathds{Z}_6$.\\

\subsubsection{Polar Codes Over Abelian Groups}
For any discrete memoryless channel, there always exists an {Abelian group} of the same size as that of the channel input alphabet. In general, for an Abelian group, there may not exist a multiplication operation. Since polar encoders are characterized by a matrix multiplication, before using these codes for channels of arbitrary input alphabet sizes, a generator matrix for codes over Abelian groups needs to be properly defined. In Appendix \ref{section:polar_abelian}, a convention is introduced to generate codes over groups using $\{0,1\}$-valued generator matrices.\\

\subsubsection{Group Codes}
Let the channel input alphabet $\mathcal{X}$ be equipped with the structure of a finite Abelian group $\G$ of the same size. Then the channel is specified by $(\G,\mathcal{Y},W)$. A group code over $\G$ of length $N$ for this channel is any {subgroup} of $\G^N$. The group capacity of a channel $(\G,\mathcal{Y},W)$ is the maximum achievable rate using group codes over $\G$ for this channel. Group codes generalize the notion of linear codes over {fields} to channels with composite input alphabet sizes. A coset code is a shift of a group code by a constant vector.\\

\subsubsection{Notation}
We denote by $O(\epsilon)$ any function of $\epsilon$ which is right-continuous around $0$ and that $O(\epsilon)\rightarrow 0$ as $\epsilon\downarrow 0$. We denote by $a\approx_\epsilon b$ to mean $a=b+O(\epsilon)$.\\
For positive integers $N$ and $r$, let $\{A_0,A_1,\cdots,A_r\}$ be a partition of the index set $\{1,2,\cdots,N\}$. Given sets $T_t$ for $t=0,\cdots,r$, the direct sum $\bigoplus_{t=0}^r T_t^{A_t}$ is defined as the set of all tuples $u_1^N=(u_1,\cdots,u_N)$ such that $u_i\in T_t$ whenever $i\in A_t$.\\

\section{Motivating Examples}\label{section:example}
A key property of the basic polarizing transforms used for binary polar codes is that they have perfect and useless channels as their ``fixed points''; in the sense that, if these transforms are applied to a perfect (useless) channel, the resulting channel is also perfect (useless). In the following, we try to demonstrate that for non-binary channels, the basic transforms have fixed points which are neither perfect nor useless. Consider a $4$-ary channel $(\mathds{Z}_4,\mathcal{Y},W)$ and assume the channel is such that $W(y|u)=W(y|u+2)$ for all $y\in \mathcal{Y}$ and all $u\in \mathds{Z}_4$; i.e. the channel cannot distinguish between inputs $u$ and $u+2$. Consider the transformed channels $W^-$ and $W^+$ originally introduced in \cite{arikan_polar} (Refer to Equations \eqref{eqn:channel_transform1} and \eqref{eqn:channel_transform2} of the current paper). It turns out that
\begin{align*}
&W^+(y_1,y_2,u_1|u_2)=W^+(y_1,y_2,u_1|u_2+2)\\
&W^-(y_1,y_2|u_1)=W^-(y_1,y_2|u_1+2)
\end{align*}
for all $y_1,y_2\in \mathcal{Y}$ and all $u_1,u_2\in \mathds{Z}_4$. This observation is closely related to the fact that $\{0,2\}$ is closed under addition mod-$4$; i.e. the fact that $\{0,2\}$ forms a subgroup of $\mathds{Z}_4$. This means that the transformed channels inherit this characteristic feature of the original channel, in the sense that they cannot distinguish between inputs $u_i$ and $u_i+2$ ($i=2$ for $W^+$ and $i=1$ for $W^-$). This suggests that even in the asymptotic regime, the transformed channels can only distinguish between the sets $\{0,2\}$ and $\{1,3\}$, and not within each set. In the following, we give an example for which such cases indeed exist in the asymptotic regime.\\

Consider the channel depicted in Figure \ref{fig:channel}. For this channel, the symmetric capacity is equal to $C=I(X;Y)=2-\epsilon-2\lambda$. Depending on the values of the parameters $\epsilon$ and $\lambda$, this channel can present three extreme cases: 1) If $\lambda =1$, this channel is useless. 2) If $\epsilon=1$, this channel cannot distinguish between inputs $u$ and $u+2$ and has a capacity of $1$ bit per channel use. 3) If $\epsilon=\lambda=0$, this channel is perfect and has a capacity of $2$ bits per channel use.

\begin{figure}[h]
\centering
\includegraphics[scale=.55]{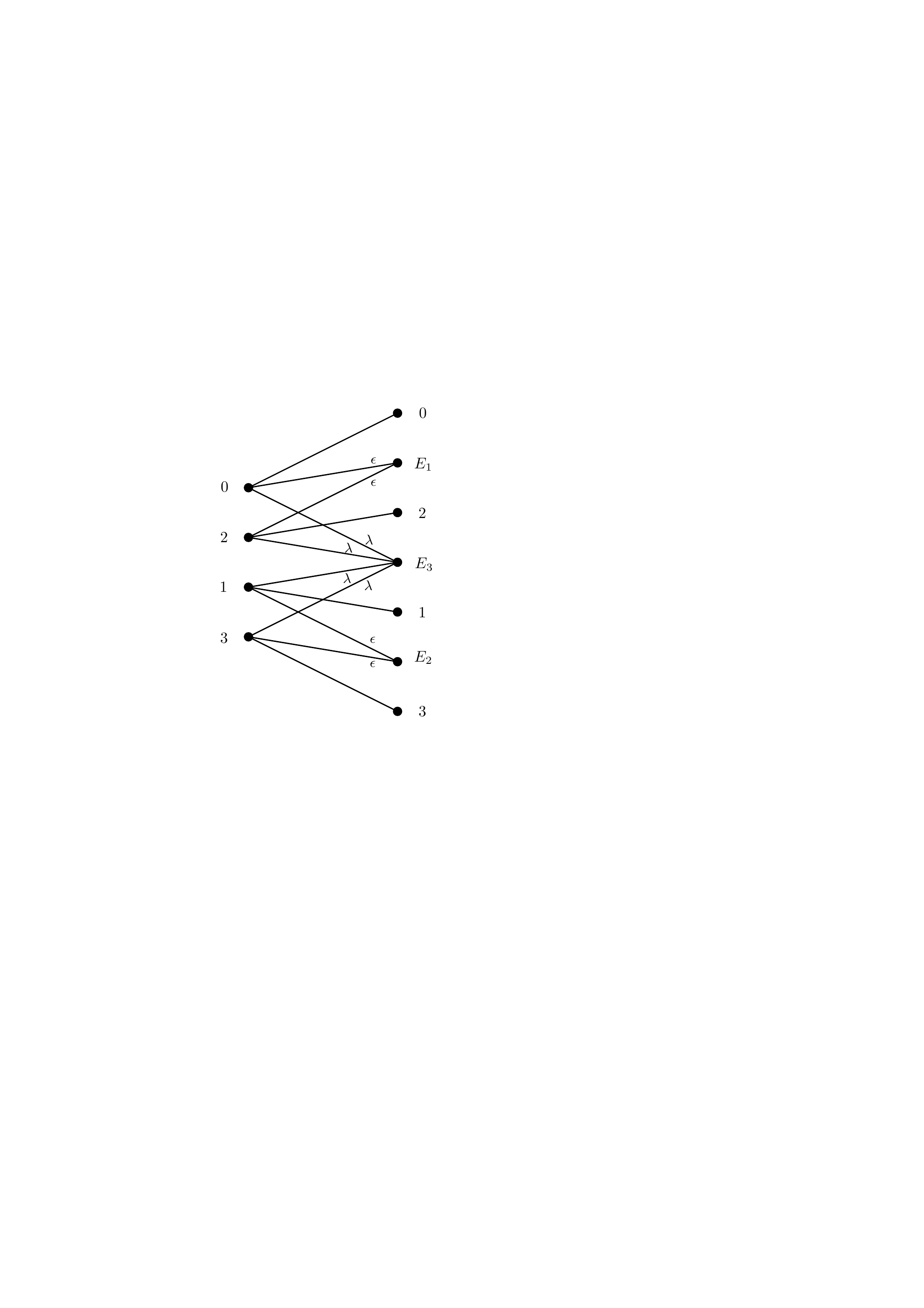}
\caption{\small Channel 1: The input of the channel has the structure of the group $\mathds{Z}_4$. The parameters $\epsilon$ and $\lambda$ take values from $[0,1]$ such that $\epsilon+\lambda\le 1$. $E_1$ and $E_2$ are erasures connected to cosets of the subgroup $\{0,2\}$. The lines connecting the output symbols $0,2,1,3$ to their corresponding inputs, represent a conditional probability of $1-\epsilon-\lambda$. For this channel, the process $I(W^{b_1b_2\cdots b_n})$ can be explicitly found for each $n$ and the multilevel polarization can be observed.}
\label{fig:channel}
\end{figure}
Given a sequence of bits $b_1b_2\cdots b_n$, define $W^{b_1b_2\cdots b_n}$ as in \cite[Section IV]{arikan_polar}, and let $I(W^{b_1b_2\cdots b_n})$ be the mutual information between the input and output of $W^{b_1b_2\cdots b_n}$ when the input is uniformly distributed. We can find $I(W^{b_1b_2\cdots b_n})$ using the following recursion for which the proof can be found in Appendix \ref{appendix:recursion}.\\
Define $\epsilon_0=\epsilon$ and $\lambda_0=\lambda$. For $i=1,\cdots,n$,
\begin{itemize}
\item If $b_i=1$, let
\begin{align}
\label{eqn:recursion1}
\left\{\begin{array}{l}
\epsilon_i=\epsilon_{i-1}^2+2\epsilon_{i-1}\lambda_{i-1}\\
\lambda_i=\lambda_{i-1}^2
\end{array}\right.
\end{align}

\item If $b_i=0$, let
\begin{align}\label{eqn:recursion2}
\left\{\begin{array}{l}
\epsilon_i=2\epsilon_{i-1}-\left(\epsilon_{i-1}^2+2\epsilon_{i-1}\lambda_{i-1}\right)\\
\lambda_i=2\lambda_{i-1}-\lambda_{i-1}^2
\end{array}\right.
\end{align}
\end{itemize}
Then we have $I(W^{b_1b_2\cdots b_n})=2-\epsilon_n-2\lambda_n$.\\
Consider the function $f:[0,1]^2\rightarrow [0,1]^2$, $f(\epsilon,\lambda)=(\epsilon^2+2\epsilon\lambda,\lambda^2)$ corresponding to Equation \eqref{eqn:recursion1}. The fixed points of this function are given by $(0,1)$, $(1,0)$ and $(0,0)$. Similarly, consider the function $g:[0,1]^2\rightarrow [0,1]^2$, $g(\epsilon,\lambda) =(2\epsilon-(\epsilon^2+2\epsilon\lambda),2\lambda-\lambda^2)$ corresponding to Equation \eqref{eqn:recursion2}. It turns out that the fixed points of $g$ are the same as those of $f$. This suggests that in the limit, the transformed channels converge to one of three extreme cases discussed above. Figures \ref{fig:pol2} and \ref{fig:pol} show that it is indeed the case and depicts the three level polarization of the mutual information process $I(W^{b_1b_2\cdots b_n})$ to a discrete random variable $I^{\infty}$ as $n$ grows.

\begin{figure}[h]
\centering
\includegraphics[scale=.30]{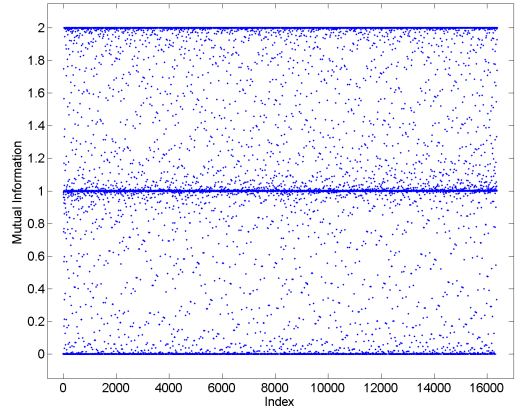}
\caption{\small The behavior of $I(W^{b_1b_2\cdots b_n})$ for $n=14$ for Channel 1 when $\epsilon=0.4$ and $\lambda=0.2$. The three solid lines represent the three discrete values of $I^\infty$ with positive probability.}
\label{fig:pol2}
\end{figure}

\begin{figure}[h]
\centering
\includegraphics[scale=.30]{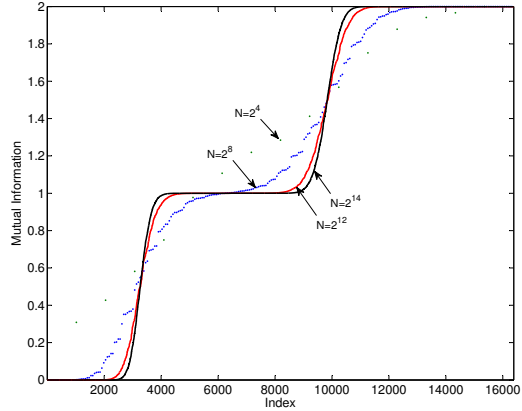}
\caption{\small The asymptotic behavior of $I(W^{b_1b_2\cdots b_n})$, $N=2^n=2^4,2^8,2^{12},2^{14}$ for Channel 1 when the data is sorted. We observe that for this channel, all three extreme cases appear with positive probability. In general, it is possible to have fewer cases in the asymptotic regime.}
\label{fig:pol}
\end{figure}
When $N=2^n$ is large, let $N_0$ be the number of useless channels (corresponding to the width of the first step in Figure \ref{fig:pol}), $N_1$ be the number of partially perfect channels (corresponding to the width of the second step in Figure \ref{fig:pol}) and $N_2$ be the number of perfect channels (corresponding to the width of the third step in Figure \ref{fig:pol}). Since the mutual information process is a martingale, it follows that
\begin{align*}
C=\mathds{E}\{I^{\infty}\}\approx\frac{N_0}{N}\times 0+\frac{N_1}{N}\times 1+\frac{N_2}{N}\times 2
\end{align*}
where $C$ is the symmetric capacity of the channel. Consider the following encoding rule: For indices corresponding to useless channels, let the input symbol take values from $\{0\}$ (from the transversal of the subgroup $\mathds{Z}_4$ of $\mathds{Z}_4$ i.e. fix the input). For indices corresponding to partially perfect channels, let the input symbol take values from $\{0,1\}$ (from the transversal of the subgroup $\{0,2\}$ of $\mathds{Z}_4$). For indices corresponding to perfect channels, let the input symbol take values from $\mathds{Z}_4$ (choose information symbols from the transversal of the subgroup $\{0\}$ of $\mathds{Z}_4$). It turns out that this encoding rule used with an appropriate decoding rule has a vanishingly small probability of error as $N$ becomes large. The rate of this code is equal to
\begin{align*}
R=\frac{1}{N}\left(N_0\log_2 1+N_1\log_2 2+N_2\log_2 4\right)
\end{align*}
This means $R=C$ is achievable using polar codes.\\

Next, we consider a channel with a composite input alphabet size. Consider the channel depicted in Figure \ref{fig:channel_Z6}. We call this Channel 2.
\begin{figure}[h]
\centering
\includegraphics[scale=.7]{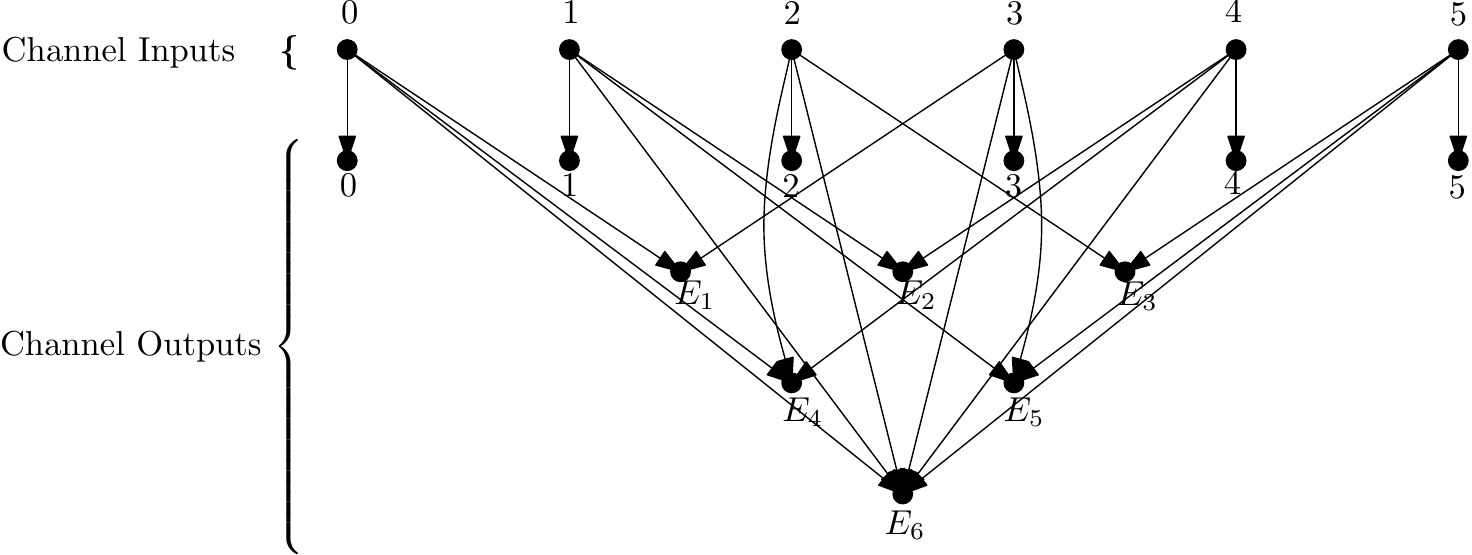}
\caption{\small Channel 2: A channel with a composite input alphabet size. For this channel, the process $I^n$ can be explicitly found for each $n$ and the multilevel polarization can be observed. $E_1$, $E_2$ and $E_3$ are erasures corresponding to cosets of the subgroup $\{0,3\}$ and $E_4$ and $E_5$ are erasures corresponding to cosets of the subgroup $\{0,2,4\}$. The lines connected to outputs $E_1,E_2$ and $E_3$ correspond to a conditional probability of $\gamma$, the lines connected to outputs $E_4$ and $E_5$ correspond to a conditional probability of $\epsilon$, the lines connected to the output $E_6$ correspond to a conditional probability of $\lambda$, and the lines connected to outputs $0,1,2,3,4$ and $5$ correspond to a conditional probability of $1-\gamma-\epsilon-\lambda$. The parameters $\gamma,\epsilon,\lambda$ take values from $[0,1]$ such that $\gamma+\epsilon+\lambda\le 1$.}
\label{fig:channel_Z6}
\end{figure}
It turns out that given a sequence of bits $b_1b_2\cdots b_n$, the transformed channel $W^{b_1b_2\cdots b_n}$ is (equivalent to) a channel of the same type as Channel $2$ but with possibly different parameters $\epsilon, \lambda$ and $\gamma$. At each step $n$, the corresponding parameters can be found using the following recursion:
Define $\epsilon_0=\epsilon$, $\lambda_0=\lambda$ and $\gamma_0=\gamma$. For $i=1,\cdots,n$,
\begin{itemize}
\item If $b_i=1$, let
\begin{align}\label{eqn:recursion3}
\left\{\begin{array}{l}
\gamma_i=\gamma_{i-1}^2+2\gamma_{i-1}\lambda_{i-1}\\
\epsilon_i=\epsilon_{i-1}^2+2\epsilon_{i-1}\lambda_{i-1}\\
\lambda_i=\lambda_{i-1}^2
\end{array}\right.
\end{align}
\item If $b_i=0$, let
\begin{align}\label{eqn:recursion4}
\left\{\begin{array}{l}
\gamma_i=2\gamma_{i-1}-\left(\gamma_{i-1}^2+2\gamma_{i-1}\lambda_{i-1}\right)\\
\epsilon_i=2\epsilon_{i-1}-\left(\epsilon_{i-1}^2+2\epsilon_{i-1}\lambda_{i-1}\right)\\
\lambda_i=2\lambda_{i-1}-\left(\lambda_{i-1}^2\right)
\end{array}\right.
\end{align}
\end{itemize}
Then we have
\begin{align*}
I(W^{b_1b_2\cdots b_n})=\log_2 6-\gamma_n\log_2 2-\epsilon_n\log_2 3-\lambda_n\log_2 6
\end{align*}
The proof of the recursion formulas for Channel $2$ is similar to that of Channel $1$ and is omitted. The fixed points of the functions corresponding to Equations \eqref{eqn:recursion3} and \eqref{eqn:recursion4} are given by $(0,0,0)$, $(1,0,0)$, $(0,1,0)$, $(1,1,0)$, $(0,0,1)$, $(-1,0,1)$, $(0,-1,1)$ and $(-1,-1,1)$, out of which $(0,0,0)$, $(1,0,0)$, $(0,1,0)$ and $(0,0,1)$ are admissible. Note that $(0,0,0)$ corresponds to a perfect channel with a capacity of $\log_2 6$ bits per channel use, $(1,0,0)$ corresponds to a partially perfect channel which can perfectly send the index of the coset of the subgroup $\{0,3\}$ to which the input belongs and has a capacity of $\log_2 3$ bits per channel use, $(0,1,0)$ corresponds to a partially perfect channel which can perfectly send the index of the coset of the subgroup $\{0,2,4\}$ to which the input belongs and has a capacity of $\log_2 2$ bits per channel use, and $(0,0,1)$ corresponds to a useless channel. This suggests that in the limit, the transformed channels converge to one of these four extreme cases. This can be confirmed using the recursion formulas for this channel as depicted in Figures \ref{fig:polarization_Z6_1} and \ref{fig:polarization_Z6_2}. With encoding and decoding rules similar to those of Channel 1, we can show that polar codes achieve the symmetric capacity of this channel.
\begin{figure}[!h]
\begin{minipage}[b]{0.5\linewidth}
\centering
\includegraphics[scale=.3]{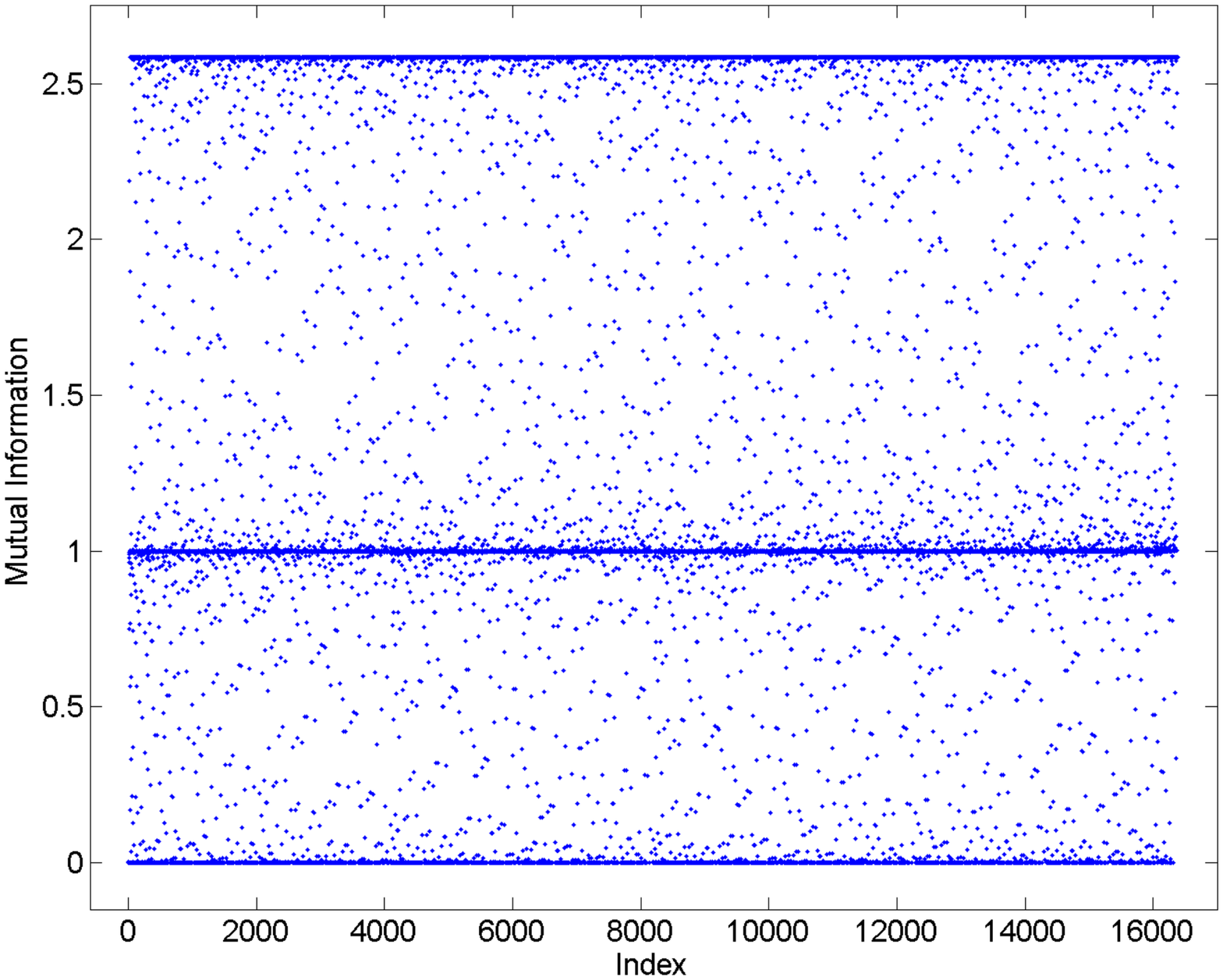}
\caption{\small Polarization of Channel 2 with parameters $\gamma=0, \epsilon=0.4, \lambda=0.2$. The middle line represents the subgroup $\{0,2,4\}$ of $\mathds{Z}_6$.}
\label{fig:polarization_Z6_1}
\end{minipage}
\hspace{0.5cm}
\begin{minipage}[b]{0.5\linewidth}
\centering
\includegraphics[scale=.3]{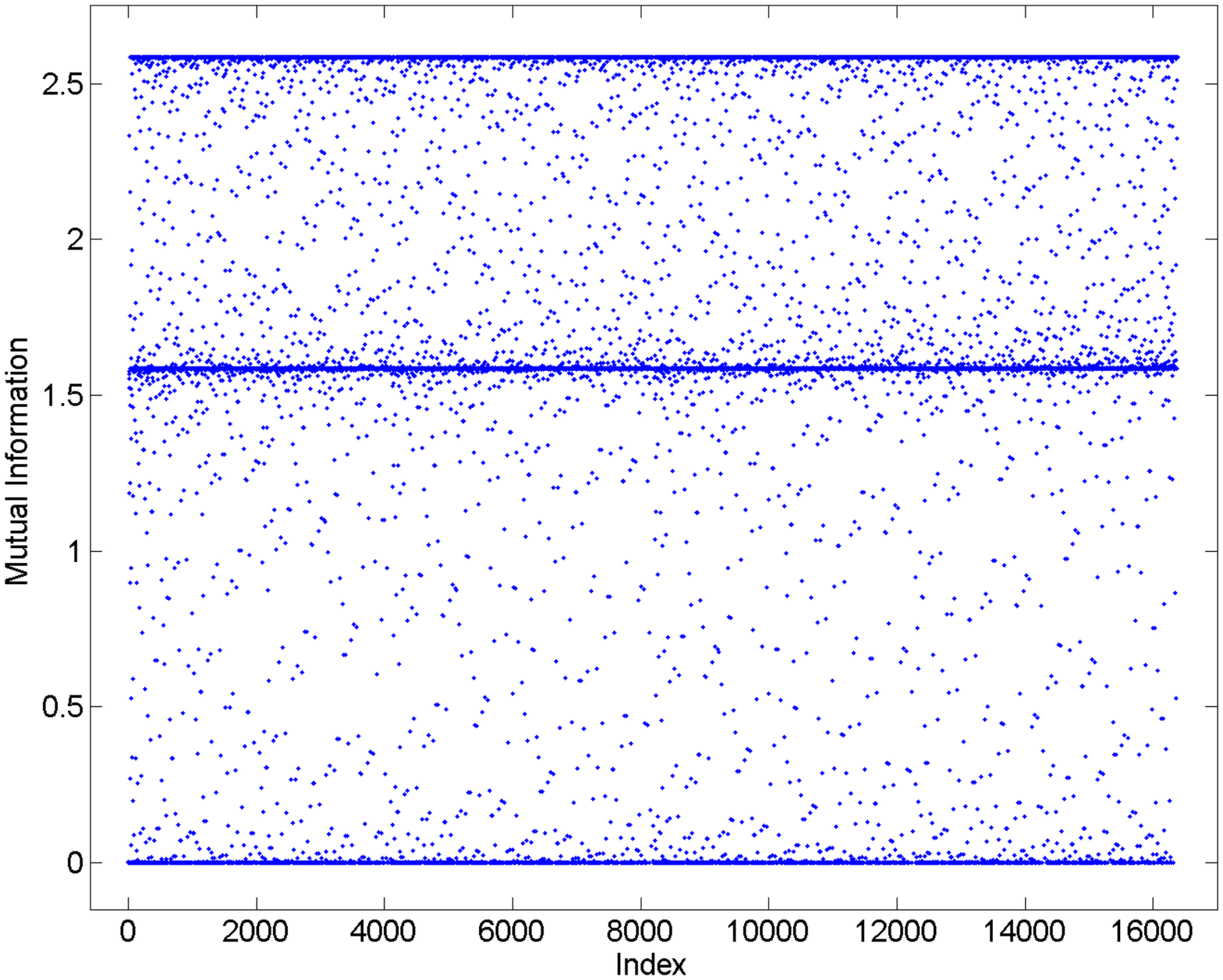}
\caption{\small Polarization of Channel 2 with parameters $\gamma=0.4, \epsilon=0, \lambda=0.2$. The middle line represents the subgroup $\{0,3\}$ of $\mathds{Z}_6$.}
\label{fig:polarization_Z6_2}
\end{minipage}
\end{figure}

In the next section, we show that polar codes achieve the symmetric capacity of channels with input alphabet size equal to a power of a prime.
\section{Polar Codes Over Channels with input $\mathds{Z}_{p^r}$}\label{rings}
In this section, we consider channels of input alphabet size $q=p^r$ for some prime number $p$ and a positive integer $r$. In this case, the input alphabet of the channel can be considered as a {ring} with addition and multiplication modulo $p^r$. We prove the achievability of the symmetric capacity of these channels using polar codes and later in Section \ref{section:abelian} we will generalize this result to channels of arbitrary input alphabet sizes and arbitrary group operations. We note that $O(\epsilon)$ functions used in this paper do not depend on the size of the channel output alphabet.

\subsection{$\mathds{Z}_{p^r}$ Rings}
Let $\G=\mathds{Z}_{p^r}=\{0,1,2,\cdots,p^r-1\}$ with addition and multiplication modulo $p^r$ be the input alphabet of the channel, where $p$ is a prime and $r$ is an integer. For $t=0,1,\cdots,r$, define the {subgroups} $H_t$ of $\G$ as the set:
\begin{align*}
H_t=p^t\G=\{0,p^t,2p^t,\cdots,(p^{r-t}-1)p^t\}
\end{align*}
and for $t=0,1,\cdots,r$, define the subsets $K_t$ of $\G$ as $K_t=H_t\backslash H_{t+1}$; i.e. $K_t$ is defined as the set of elements of $\G$ which are a multiple of $p^t$ but are not a multiple of $p^{t+1}$. Note that $K_0$ is the set of all invertible elements of $\G$ and $K_r=\{0\}$. One can sort the sets $K_0>K_1>\cdots>K_r$ in a decreasing order of ``invertibility'' of its elements. Let $T_t$ be a {transversal} of $H_t$ in $\G$; i.e. a subset of $\G$ containing one and only one element from each coset of $H_t$ in $\G$. One valid choice for $T_t$ is $\{0,1,\cdots,p^t-1\}$. Note that given $H_t$ and $T_t$, each element $g$ of $\G$ can be represented uniquely as a sum $g=\hat{g}+\tilde{g}$ where $\hat{g}\in T_t$ and $\tilde{g}\in H_t$.
\subsection{Recursive Channel Transformation}
\subsubsection{The Basic Channel Transforms}
It has been shown in \cite{arikan_polar} that the error probability of polar codes over binary input channels is upper bounded by the sum of Bhattacharyya parameters of certain channels defined by a recursive channel transformation. The same set of synthesized channels appear for polar codes over channels with arbitrary input alphabet sizes. The channel transformations are given by:
\begin{align}
&\label{eqn:channel_transform1} W^-(y_1,y_2|u_1)=\sum_{u_2^\prime\in \G}\frac{1}{q}W(y_1|u_1+u_2^\prime)W(y_2|u_2^\prime)\\
&\label{eqn:channel_transform2} W^+(y_1,y_2,u_1|u_2)=\frac{1}{q}W(y_1|u_1+u_2)W(y_2|u_2)
\end{align}
for $y_1,y_2\in\mathcal{Y}$ and $u_1,u_2\in \G$. Repeating these operations $n$ times recursively, we obtain $N=2^n$ channels $W_N^{(1)},\cdots,W_N^{(N)}$. For $i=1,\cdots,N$, these channels are given by:
\begin{align*}
W_N^{(i)}(y_1^N,u_1^{i-1}|u_i)=\sum_{u_{i+1}^N\in \G^{N-i}}\frac{1}{q^{N-1}}W^N(y_1^N|u_1^N G_N)
\end{align*}
where $G_N$ is the generator matrix for polar codes.\\
For the case of binary input channels, it has been shown in \cite{arikan_polar} that as $N\rightarrow \infty$, these channels polarize in the sense that their Bhattacharyya parameters gets either close to zero (perfect channels) or close to one (useless channels). In the next part, we show that in general, when the input alphabet is a prime power, polarization happens in multiple levels so that as $N\rightarrow \infty$ channels get useless, perfect or ``partially perfect''.\\
For an integer $n$, let $J_n$ be a uniform random variable over the set $\{1,2,\cdots,N=2^n\}$ and define the random variable $I^n(W)$ as
\begin{align}\label{eqn:Iprocess}
I^n(W)=I(X;Y)
\end{align}
where $X$ and $Y$ are the input and output of $W_N^{(J_n)}$ respectively and $X$ is uniformly distributed. It has been shown in \cite{sasoglu_polar_q} that the process $I^0,I^1,I^2,\cdots$ is a martingale; hence $\mathds{E}\{I^n\}=I^0$. For an integer $n$ and for $d\in\G$, define the random variable $Z_d^n(W)=Z_d(W_N^{(J_n)})$ where for a channel $(\G,\mathcal{Y},W)$,
\begin{align}\label{eqn:Zd}
Z_d(W)=\frac{1}{q}\sum_{x\in \G}\sum_{y\in \mathcal{Y}}\sqrt{W(y|x)W(y|x+d)}=\frac{1}{q}\sum_{x\in \G}Z(W_{\{x,x+d\}})
\end{align}
This quantity has been defined in \cite{sasoglu_polar_q}. Other than the processes $I^n(W)$ and $Z_d^n(W)$, in the proof of polarization, we need another set of processes $I^n_H(W)$ for $H\le \G$ which we define in the following. Let $H$ be an arbitrary subgroup of $\G$. Note that any uniform random variable defined over $\G$ can be decomposed into two uniform and independent random variables $\hat{X}$ and $\tilde{X}$ where $\hat{X}$ takes values from the transversal $T$ of $H$ and $\tilde{X}$ takes values from $H$. For an integer $n$, define the random variable $I^n_H(W)$ as
\begin{align}\label{eqn:Iprocess2}
I^n_H(W)=I(X;Y|\hat{X})=I(\tilde{X};Y|\hat{X})
\end{align}
where $X$ and $Y$ are the input and output of $W_N^{(J_n)}$ respectively. Next lemma shows that $I^n_H(W)$ is a super-martingale.

\begin{lemma}\label{lemma:supermartingale}
For an arbitrary group $\G$ and for any subgroup $H$ of $\G$, the random process $I^n_H(W)$ defined above is a super-martingale.
\end{lemma}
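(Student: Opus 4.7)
The plan is to reduce the claim to the per-channel, single-step inequality $I_H(W^-) + I_H(W^+) \le 2 I_H(W)$, valid for every channel $W$ with input group $\G$. This is the right reduction because, conditioning on the first $n$ branching bits, the level-$(n+1)$ synthesized channel is $(W_N^{(J_n)})^-$ or $(W_N^{(J_n)})^+$ with equal probability, so the supermartingale condition $\mathds{E}[I^{n+1}_H(W)\mid \mathcal{F}_n]\le I^n_H(W)$ becomes exactly the per-channel inequality applied to $W_N^{(J_n)}$.

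To prove the single-step inequality, I take $U_1, U_2$ i.i.d.\ uniform on $\G$, set $X_1 = U_1 + U_2$, $X_2 = U_2$, and let $Y_i$ be the output of an independent copy of $W$ on input $X_i$. Writing $U_i = \hat{U}_i + \tilde{U}_i$ with $\hat{U}_i\in T$ and $\tilde{U}_i\in H$, one has $I_H(W^-) = I(\tilde{U}_1; Y_1, Y_2\mid \hat{U}_1)$ and $I_H(W^+) = I(\tilde{U}_2; Y_1, Y_2, U_1\mid \hat{U}_2)$. Two elementary manipulations set up the bound. First, since $\tilde{U}_1$ is independent of $\hat{U}_2$ given $\hat{U}_1$, the chain rule gives $I_H(W^-) \le I(\tilde{U}_1; Y_1, Y_2 \mid \hat{U}_1, \hat{U}_2)$, with slack equal to the nonnegative term $I(\tilde{U}_1; \hat{U}_2 \mid \hat{U}_1, Y_1, Y_2)$. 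Second, because $\tilde{U}_2$ is independent of $(U_1,\hat{U}_2)$ (using $U_1 \perp U_2$ and $\tilde{U}_2 \perp \hat{U}_2$), we get $I_H(W^+) = I(\tilde{U}_2; Y_1, Y_2 \mid U_1, \hat{U}_2) = I(\tilde{U}_2; Y_1, Y_2 \mid \tilde{U}_1, \hat{U}_1, \hat{U}_2)$. Adding these and applying the chain rule one more time collapses the sum into
\[
I_H(W^-) + I_H(W^+) \le I(\tilde{U}_1, \tilde{U}_2;\, Y_1, Y_2 \mid \hat{U}_1, \hat{U}_2).
\]

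It then remains to identify the right-hand side with $2I_H(W)$, which is done by switching to the coordinates $(X_1, X_2)$. The key observation is that the map $(\hat{U}_1, \hat{U}_2)\mapsto(\hat{X}_1, \hat{X}_2)$, defined by $\hat{X}_2 = \hat{U}_2$ and $\hat{X}_1$ equal to the unique element of $T$ congruent to $\hat{U}_1 + \hat{U}_2$ modulo $H$, is a bijection on $T\times T$, so it generates the same $\sigma$-algebra. Conditioning on these transversal parts, $(\tilde{U}_1, \tilde{U}_2) \mapsto (\tilde{X}_1, \tilde{X}_2)$ is an affine bijection on $H \times H$. Hence $I(\tilde{U}_1, \tilde{U}_2; Y_1, Y_2 \mid \hat{U}_1, \hat{U}_2) = I(\tilde{X}_1, \tilde{X}_2; Y_1, Y_2 \mid \hat{X}_1, \hat{X}_2)$, and since the two channel uses are independent (so the triples $(\hat{X}_1, \tilde{X}_1, Y_1)$ and $(\hat{X}_2, \tilde{X}_2, Y_2)$ are independent), this splits as $I(\tilde{X}_1; Y_1 \mid \hat{X}_1) + I(\tilde{X}_2; Y_2 \mid \hat{X}_2) = 2I_H(W)$. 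The only point requiring real care is that the transversal $T$ is not closed under addition, so $\hat{U}_1+\hat{U}_2$ in general carries a stray $H$-component; this is absorbed into the canonical representative map, and because the offset is a deterministic function of $(\hat{U}_1, \hat{U}_2)$ it is invisible to the $\sigma$-algebras in the conditional mutual informations.
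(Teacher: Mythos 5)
Your proof is correct, and it takes a recognizably different route from the paper's. The paper passes to the complementary \emph{quotient} informations: using the identity $I(\tilde{X}_1;Y_1\mid\hat{X}_1)=I(X_1;Y_1)-I(\hat{X}_1;Y_1)$ and invoking the already-established martingale property of $I^n$, it converts the supermartingale claim for $I^n_H$ into a \emph{sub}martingale claim for the coset-label process, namely $I(\hat{U}_1;Y_1Y_2)+I(\hat{U}_2;Y_1Y_2U_1)\ge 2I(\hat{X}_1;Y_1)$, which it settles by dropping $\tilde{U}_1$ from the conditioning and then applying the chain rule plus the observation that $(\hat{U}_1,\hat{U}_2)\leftrightarrow(\hat{X}_1,\hat{X}_2)$ is a bijection. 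You instead work directly on the subgroup side: the independences furnished by $U_1\perp U_2$ and $\hat{U}_2\perp\tilde{U}_2$ let you insert $\hat{U}_2$ into the conditioning of $I_H(W^-)$ and rewrite $I_H(W^+)$ with conditioning $(\tilde{U}_1,\hat{U}_1,\hat{U}_2)$, after which a single chain-rule collapse gives the bound $I(\tilde{U}_1,\tilde{U}_2;Y_1,Y_2\mid\hat{U}_1,\hat{U}_2)$, and the conditionally affine bijection $(\tilde{U}_1,\tilde{U}_2)\mapsto(\tilde{X}_1,\tilde{X}_2)$ over $H\times H$ closes the argument. What your route buys is self-containment --- you never need to cite the martingale property of $I^n$ --- while the paper buys brevity by leveraging that known fact and working with the unconditional quotient informations, avoiding the conditional change of variables. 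The two proofs are essentially dual, and both pivot on the same structural fact that the coset components $(\hat{U}_1,\hat{U}_2)$ and $(\hat{X}_1,\hat{X}_2)$ determine one another. One minor point worth making explicit in your write-up: the final splitting $I(\tilde{X}_1,\tilde{X}_2;Y_1,Y_2\mid\hat{X}_1,\hat{X}_2)=I(\tilde{X}_1;Y_1\mid\hat{X}_1)+I(\tilde{X}_2;Y_2\mid\hat{X}_2)$ uses not just independence of the two channel uses but also $X_1\perp X_2$, which holds here because $U_1,U_2$ are i.i.d.\ uniform so that $X_1=U_1+U_2$ is independent of $X_2=U_2$.
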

\begin{proof}
Define the channels $W^{-}$ and $W^{+}$ as in (\ref{eqn:channel_transform1}) and (\ref{eqn:channel_transform2}). Define the random variables $U_1$, $U_2$, $X_1$, $X_2$, $Y_1$ and $Y_2$ where $U_1$ and $U_2$ are uniformly distributed over $\G$, $X_1=U_1+U_2$ where addition is the group operation, $X_2=U_2$ and $Y_1$ (respectively $Y_2$) is the channel output when the input is $X_1$ (respectively $X_2$). Decompose the random variable  $U_1$ into two uniform and independent random variables $\hat{U}_1$ and $\tilde{U}_1$ where $\hat{U}_1$ takes values from the transversal $T$ of $H$ and $\tilde{U}_1$ takes values from $H$. Similarly define, $\hat{U}_2,\hat{X}_1,\hat{X}_2$ and $\tilde{U}_2,\tilde{X}_1,\tilde{X}_2$. We need to show that
\begin{align*}
I(\tilde{U}_1;Y_1Y_2|\hat{U}_1)+I(\tilde{U}_2;Y_1Y_2U_1|\hat{U}_2)\le 2I(\tilde{X}_1;Y_1|\hat{X}_1)
\end{align*}
Note that since $I^n$ is a martingale and $I(\tilde{X}_1;Y_1|\hat{X}_1)=I(X_1;Y_1)-I(\hat{X}_1;Y_1)$, it suffices to show
\begin{align*}
I(\hat{U}_1;Y_1Y_2)+I(\hat{U}_2;Y_1Y_2U_1)\ge 2I(\hat{X}_1;Y_1)
\end{align*}
We have
\begin{align*}
I(\hat{U}_2;Y_1Y_2U_1)&=I(\hat{U}_2;Y_1Y_2\hat{U}_1\tilde{U}_1)\\
&=I(\hat{U}_2;Y_1Y_2\hat{U}_1)+I(\hat{U}_2;\tilde{U}_1|Y_1Y_2\hat{U}_1)\\
&\ge I(\hat{U}_2;Y_1Y_2\hat{U}_1)
\end{align*}
Hence,
\begin{align*}
I(\hat{U}_1;Y_1Y_2)+I(\hat{U}_2;Y_1Y_2U_1)&\ge I(\hat{U}_1;Y_1Y_2)+ I(\hat{U}_2;Y_1Y_2\hat{U}_1)\\
&= I(\hat{U}_1 \hat{U}_2;Y_1Y_2)\\
&\stackrel{(a)}{=} I(\hat{X}_1 \hat{X}_2;Y_1Y_2)=2I(\hat{X_1};Y_1)
\end{align*}
where $(a)$ follows since $\hat{U}_1$ and $\hat{U}_2$ are recoverable from $\hat{X}_1$ and $\hat{X}_2$. To see this, let $U_1'$ and $U_2'$ take values form $\G$ and let $X_1'=U_1'+U_2'$ and $X_2'=U_2'$. We need to show that if $X_1'$ is in the same coset of $H$ as $X_1$ (i.e. if $X_1'-X_1\in H$ or equivalently $\hat{X}'_1=\hat{X}_1$) and $X_2'$ is in the same coset of $H$ as $X_2$ (i.e. if $X_2'-X_2\in H$ or equivalently $\hat{X}'_2=\hat{X}_2$), then $U_1'$ is in the same coset of $H$ as $U_1$ (i.e. $U_1'-U_1\in H$ or equivalently $\hat{U}'_1=\hat{U}_1$) and $U_2'$ is in the same coset of $H$ as $U_2$ (i.e. $U_2'-U_2\in H$ or equivalently $\hat{U}'_2=\hat{U}_2$). Note that $X_2'-X_2\in H$ implies $U_2'-U_2\in H$ and  $X_1'-X_1\in H$ implies $U_1'+U_2'-U_1-U_2\in H$. Since $U_2'-U_2\in H$ (and hence $U_2-U_2'\in H$), it follows that $U_1'-U_1\in H+U_2-U_2'=H$. This concludes the lemma.
\end{proof}
\subsubsection{Asymptotic Behavior of Synthesized Channels}
We restate Lemma 2 of \cite{sasoglu_polar_q} with a slight generalization:

\begin{lemma}\label{lemma:super_lemma2}
Suppose $B_n$, $n\in \mathds{Z}^+$ is a $\{-,+\}$-valued process with $P(B_n=-)=P(B_n=+)=\frac{1}{2}$. Suppose $I_n$ and $T_n$ are two processes adapted to the process $B_n$ satisfying the following conditions
\begin{enumerate}
\item $I_n$ takes values in the interval $[0,1]$.
\item $I_n$ converges almost surely to a random variable $I_\infty$.
\item $T_n$ takes values in the interval $[0,1]$.
\item $T_{n+1}=T_n^2$ when $B_{n+1}=+$.
\item If $T_n<\epsilon$ for all $n$, then $I_n>1-O(\epsilon)$ for all $n$, in the sense that there exists a function $f$ which is $O(\epsilon)$ and $T_n<\epsilon\Rightarrow I_n>1-f(\epsilon)$ for all $n$.
\item If $T_n>1-\epsilon$ for all $n$, then $I_n<O(\epsilon)$ for all $n$, in the sense that there exists a function $g$ which is $O(\epsilon)$ and $T_n>1-\epsilon\Rightarrow I_n<g(\epsilon)$ for all $n$.
\end{enumerate}
Then $I_\infty=\lim_{n\rightarrow \infty} I_n$ and $T_\infty=\lim_{n\rightarrow \infty} T_n$ both exist with probability $1$ and take values in $\{0,1\}$.
\end{lemma}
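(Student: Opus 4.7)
The plan is to first show that every subsequential limit of the process $(T_n)$ lies in $\{0,1\}$ by exploiting condition 4 via a second Borel--Cantelli argument on stopping times, then upgrade this to almost-sure convergence of $T_n$ using the pointwise bounds in conditions 5 and 6 together with the almost-sure convergence of $I_n$ given by condition 2, and finally read off $I_\infty\in\{0,1\}$ from conditions 5 and 6.

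For the first step, let $L\subseteq[0,1]$ denote the (random) set of subsequential limits of $(T_n)$. Fix a deterministic candidate $x\in[0,1]$ and define stopping times $\tau_k=\inf\{n\ge k:|T_n-x|<1/k\}$, with $\tau_k=\infty$ if the infimum is not attained. On the event $\{x\in L\}$ all $\tau_k$ are finite; since $B_{\tau_k+1}$ is uniform on $\{-,+\}$ conditionally on $\mathcal{F}_{\tau_k}$, the conditional form of the second Borel--Cantelli lemma yields $B_{\tau_k+1}=+$ for infinitely many $k$. Along this further subsequence condition 4 gives $T_{\tau_k+1}=T_{\tau_k}^2\to x^2$, so $x^2\in L$. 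Iterating, $x^{2^m}\in L$ for every $m\ge 0$; since $x^{2^m}\to 0$ for every $x\in[0,1)$ and $L$ is closed, either $L=\{1\}$ or $0\in L$.

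For the second step, combine conditions 5, 6 with convergence of $I_n$. Along any subsequence $T_{n_k}\to 0$, condition 5 gives $I_{n_k}\ge 1-f(T_{n_k})\to 1$, so $I_\infty=1$ on $\{0\in L\}$; symmetrically $I_\infty=0$ on $\{1\in L\}$ by condition 6, and in particular $\{0,1\}\not\subseteq L$ almost surely. On the event $\{I_\infty=1\}$, the contrapositive of condition 6 forces $T_n\le 1-\eta$ for all sufficiently large $n$ (for any fixed $\eta>0$ with $g(\eta)<1$), so $\limsup T_n\le 1-\eta$; together with the squaring closure from the first step --- which drives every point of $L\cap[0,1-\eta]$ arbitrarily close to $0$ after iteration and would otherwise produce infinitely many subsequential limits contradicting the one-sided bounds --- this rules out subsequential limits in $(0,1)$ and forces $L=\{0\}$, hence $T_n\to 0$. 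The symmetric argument on $\{I_\infty=0\}$ gives $T_n\to 1$. Conditions 5 and 6 then deliver $I_\infty\in\{0,1\}$ directly.

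The main obstacle is the passage from ``$0\in L$'' to ``$L=\{0\}$'' (and its mirror image at $1$). Because condition 4 constrains $T_{n+1}$ only on $+$-steps, $T_n$ is not automatically a super-martingale and Arikan's classical martingale-convergence argument for the Bhattacharyya process cannot be applied to $T_n$ directly; one must combine the one-sided pointwise bounds in conditions 5 and 6 (which pin down $I_\infty$ from subsequential limits of $T_n$ near the endpoints) with the squaring closure to exclude subsequential limits in the open interval $(0,1)$. Care is also required in the Borel--Cantelli step, since the stopping times $\tau_k$ are random; the conditional form is applied with respect to the filtration generated by $(B_n)$.
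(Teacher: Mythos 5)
The paper's own proof is little more than a pointer to Lemma 2 of \cite{sasoglu_polar_q}, together with a check that conditions 5--6 here recover that lemma's condition $(i\&t.1)$. Your proposal attempts a self-contained argument, and its first two steps are essentially sound: the stopping-time/conditional Borel--Cantelli construction correctly shows that, almost surely, either $L=\{1\}$ or $0\in L$ (though since $L$ is a random set, the argument should be run over the countable family of intervals $(1/k,1-1/k)$ rather than over a single deterministic $x$, which $L$ need not hit), and pairing this with conditions 5--6 and the a.s.\ convergence of $I_n$ correctly yields $I_\infty\in\{0,1\}$ and $\{0,1\}\not\subseteq L$ almost surely.

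The gap is in your third step, where you conclude that $L$ is a singleton. The squaring closure $x\in L\Rightarrow x^2\in L$ is a one-way, downward mechanism: it \emph{adds} subsequential limits, it never removes any, and having infinitely many subsequential limits of $T_n$ contradicts nothing in the hypotheses --- indeed your own iteration produces countably many points of $L$ whenever $L\ne\{0\},\{1\}$. On $\{I_\infty=1\}$, conditions 5--6 give only one-sided control: the contrapositive of 6 keeps $T_n$ bounded away from $1$, while the contrapositive of 5 keeps $T_n$ bounded away from $0$ only when $I_n$ is bounded away from $1$, which it is not. Nothing in conditions 1--6 prevents $T_n$ from being ``reset'' to $1/2$ on every $-$-step; then $T_n$ takes values in $\{2^{-2^j}:j\ge 0\}$, $\liminf T_n=0$, $\limsup T_n=1/2$, and one can let $I_n\uparrow 1$ deterministically and slowly (for instance $I_n=1-1/n$) so that an $O(\epsilon)$ function $f$ verifying condition 5 exists, since reaching $T_n\le 2^{-2^k}$ requires $n\ge k$ and hence $I_n\ge 1-1/k$. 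So under the hypotheses as written, $T_n$ need not converge. What closes the gap is the extra hypothesis $T_{n+1}\le K\,T_n$ on $-$-steps, which is present in the cited lemma of \cite{sasoglu_polar_q} but dropped in this restatement; that bound is precisely what keeps $T_n$ near $0$ once it gets there, and without it your step 3 cannot be made rigorous.
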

\begin{proof}
The proof follows from Lemma 2 of \cite{sasoglu_polar_q}. A sufficient condition for $I_n$ to converge is when $I_n$ is a bounded super-martingale. Note that condition $(i\&t.1)$ of Lemma 2 of \cite{sasoglu_polar_q} can be recovered from the last two conditions of this lemma. We use this notation to be consistent throughout the paper. To see this, note that $(5)$ and $(6)$ imply that there exist functions $f(\cdot),g(\cdot):\mathds{R}\rightarrow \mathds{R}$ such that $\lim_{\delta\downarrow 0} f(\delta)=0$ and $\lim_{\delta\downarrow 0} g(\delta)=0$ and that $T_n<\delta$ implies $I_n>1-f(\delta)$ and $T_n>1-\delta$ implies $I_n<g(\delta)$. For an arbitrary $\epsilon>0$, since the limit of both functions at zero is zero, let $\delta>0$ be such that $f(\delta)<\epsilon$ and $g(\delta)<\epsilon$. For this choice of $\delta$ we have
\begin{align*}
&T_n<\delta \Rightarrow I_n>1-f(\delta)>1-\epsilon\\
&T_n>1-\delta \Rightarrow I_n<g(\delta)<\epsilon
\end{align*}
Hence for any (sufficiently small) $\epsilon>0$, there exists a $\delta>0$ such that $T_n<\delta$ implies $I_n>1-\epsilon$ and $T_n>1-\delta$ implies $I_n<\epsilon$. Equivalently, for any $\epsilon>0$, there exists a $\delta>0$ such that $\epsilon\le I_n\le 1-\epsilon$ implies $\delta\le T_n\le 1-\delta$.
\end{proof}
In the next lemma, we show that for any $d\in\G$, the random process $Z_d^n$ converges to a Bernoulli random variable.

\begin{lemma}\label{Zd_0_1_valued}
For all $d\in \G$, $Z^n_d(W)$ converges to a $\{0,1\}$-valued random variable $Z^\infty_d(W)$ as $n$ grows. Moreover, if $\tilde{d}\in\G$ is such that $\langle \tilde{d}\rangle=\langle d\rangle$ then $Z^\infty_{\tilde{d}}(W)=Z^\infty_d(W)$ almost surely; i.e. the random processes $Z^n_{\tilde{d}}(W)$ and $Z^n_d(W)$ converge to the same random variable.
\end{lemma}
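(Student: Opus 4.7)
My plan is to combine a super-martingale argument with Lemma~\ref{lemma:super_lemma2} to get convergence of $Z_d^n$ to a $\{0,1\}$-valued limit, and then to derive the invariance $Z_d^\infty = Z_{\tilde d}^\infty$ on $\{\langle d\rangle = \langle\tilde d\rangle\}$ from a Hellinger-type subadditivity of $1-Z_g$. The only input beyond these two ingredients is the standard recursion of $Z_d$ under the basic polar transforms.

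The first step is to establish the recursions $Z_d(W^+) = Z_d(W)^2$ (by direct manipulation of \eqref{eqn:channel_transform2}, factoring the square root inside $Z(W^+_{\{x,x+d\}})$ and using translation invariance of $\sum_{u_1}Z(W_{\{u_1,u_1+d\}})$ to recover $qZ_d(W)$) and $Z_d(W^-)\le 2Z_d(W)-Z_d(W)^2$ (Arikan's Cauchy--Schwarz bound, applied pairwise to $\{x,x+d\}$ and averaged). Together these give $\mathds{E}[Z_d^{n+1}\mid \mathcal{F}_n]\le Z_d^n$, so $(Z_d^n)$ is a bounded super-martingale on $[0,1]$ and converges almost surely to some $Z_d^\infty\in[0,1]$. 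The $\{0,1\}$-valuedness then follows from Lemma~\ref{lemma:super_lemma2} with $T_n=Z_d^n$ and the trivially convergent companion $I_n=1-Z_d^n$, for which conditions (1)--(6) are immediate; equivalently, on any sample path the identity $Z_d^{n+1}=(Z_d^n)^2$ along the infinitely many $+$-steps forces $Z_d^\infty=(Z_d^\infty)^2$, hence $Z_d^\infty\in\{0,1\}$.

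For the second claim, the key tool is the triangle inequality for Hellinger distance applied to $W(\cdot|x),W(\cdot|x+d_1),W(\cdot|x+d_1+d_2)$, which (after squaring and using $(a+b)^2\le 2a^2+2b^2$) yields, at the level of individual pairs,
\begin{align*}
1-Z(W_{\{x,x+d_1+d_2\}})\le 2\bigl(1-Z(W_{\{x,x+d_1\}})\bigr)+2\bigl(1-Z(W_{\{x+d_1,x+d_1+d_2\}})\bigr).
\end{align*}
Averaging over $x\in\G$ and using translation invariance gives the subadditivity $1-Z_{d_1+d_2}(W)\le 2(1-Z_{d_1}(W))+2(1-Z_{d_2}(W))$. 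Applying this pointwise to each $W_N^{(J_n)}$ and passing to the (almost sure) limit, I conclude that the random set $H^\infty:=\{g\in\G:Z_g^\infty(W)=1\}$ is almost surely closed under addition; combined with $Z_0\equiv 1$ and $Z_{-g}=Z_g$, $H^\infty$ is almost surely a subgroup of $\G$. Hence $Z_d^\infty=1$ is equivalent to $\langle d\rangle\subseteq H^\infty$, a condition depending on $d$ only through $\langle d\rangle$, and the invariance $Z_d^\infty=Z_{\tilde d}^\infty$ on $\{\langle d\rangle = \langle\tilde d\rangle\}$ follows immediately from the $\{0,1\}$-valuedness proved above. The main technical content lies in this last step: identifying the Hellinger subadditivity as the mechanism that upgrades separate polarization of each $Z_d^n$ into a coherent subgroup structure for the limit $H^\infty$, which is the algebraic backbone needed for the rest of the paper.
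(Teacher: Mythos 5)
The central gap is your claim that $Z_d(W^-)\le 2Z_d(W)-Z_d(W)^2$, from which you would conclude that $(Z_d^n)$ is a bounded supermartingale; this inequality is not valid in general. A concrete counterexample over $\mathds{Z}_4$: let $W$ be the additive-noise channel $W(y|x)=p(y-x)$ with noise law $p=(1/2,\,1/4,\,0,\,1/4)$. Then $Z_2(W)=1/2$ and $Z_2(W^+)=Z_2(W)^2=1/4$, while $W^-$ is again additive with noise law $p'$ given by $p'(\Delta)=\sum_v p(v)p(v+\Delta)$, i.e.\ $p'=(3/8,\,1/4,\,1/8,\,1/4)$, and one computes $Z_2(W^-)=2\sqrt{p'(0)p'(2)}+2\sqrt{p'(1)p'(3)}=\sqrt{3}/4+1/2\approx 0.933$. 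So $Z_2(W^-)>3/4=2Z_2(W)-Z_2(W)^2$, and indeed $\tfrac12 Z_2(W^+)+\tfrac12 Z_2(W^-)\approx 0.59>1/2=Z_2(W)$: the process $Z_2^n$ increases in expectation at this step, so it is not a supermartingale. What you have dropped is precisely the cross-term contribution $\sum_{\Delta\ne 0,-d}Z_\Delta(W)Z_{d+\Delta}(W)$ in the minus-transform bound of \cite{sasoglu_polar_q}; here $Z_1(W)=Z_3(W)=1/\sqrt{2}$ are far from negligible. Consequently your choice $I_n=1-Z_d^n$ in Lemma~\ref{lemma:super_lemma2} does not repair the argument: condition~(2) of that lemma requires almost-sure convergence of $I_n$, and for your choice that is exactly the convergence of $Z_d^n$ you set out to prove, not a ``trivially convergent companion.'' The pathwise argument suffers the same circularity, since it starts from an assumed limit $Z_d^\infty$.

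The paper instead supplies a genuinely convergent companion $I_n$, namely $I^n_H(W)-I^n_M(W)$ with $H=\langle d\rangle$ and $M$ a maximal subgroup of $H$; each term is a bounded supermartingale by the information-theoretic chain argument of Lemma~\ref{lemma:supermartingale}, which is independent of any Bhattacharyya recursion. That process is fed into Lemma~\ref{lemma:super_lemma2} together with $T_n=Z^n_{d'}(W)$, where $d'=\arg\max_{a\in H\setminus M}Z_a(W)$ is re-selected at each step and the square recursion under the $+$ transform does hold. Your Hellinger-subadditivity observation---that $\{g\in\G:Z_g^\infty=1\}$ is closed under addition and hence a subgroup---is correct, and is essentially the device the paper applies (via \cite[Lemma~4]{sasoglu_polar_q} and Lemma~\ref{lemma:Zd_sg_gen_by}) for the ``moreover'' clause and again in Section~\ref{section:abelian}; but it presupposes convergence and $\{0,1\}$-valuedness and does not by itself deliver them.
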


\begin{proof}
This lemma has been proved in \cite[Theorem 1]{sasoglu_polar_q} for $d=\arg\max_{a\ne 0} Z_a(W)$ when the underlying group is a field. The proof for an arbitrary $d$ and an arbitrary group is given in the following. Let $H=\langle d \rangle$ be the subgroup of $\G$ generated by $d$ and let $M$ be a maximal subgroup of $H$. Then the proof provided in \cite{sasoglu_polar_q} suffices for this lemma if we consider the quotient group $H\slash M$ which is of prime order. We will elaborate on this in the following: Let
\begin{align}\label{eqn:d_maximal_sg}
d^\prime=\arg\max_{\substack{a\in H\\a\notin M}} Z_a(W)
\end{align}
In Lemma \ref{lemma:super_lemma2}, let $I^n$ (Here we use the notation $I^n$ instead of $I_n$ for notational convenience) be equal to the process $I^n_H(W)-I^n_M(W)$ where $I^n_H(W)$ and $I^n_M(W)$ are defined by Equation \eqref{eqn:Iprocess2} and let $T_n$ be equal to the process $Z_{d^\prime}^n(W)$ defined in (\ref{eqn:Zd}). We claim that $I^n$ and $T_n$ satisfy the conditions of Lemma \ref{lemma:super_lemma2}. The proof is given in the following:\\ \\
Note that in the case of $\mathds{Z}_p$ fields, the only maximal subgroup of the group is the trivial subgroup $\{0\}$. Hence, (\ref{eqn:d_maximal_sg}) can be viewed as a straightforward generalization of the the definition made in \cite{sasoglu_polar_q}. Let $M$ be a maximal subgroup of $H=\langle d\rangle$. Recall that a uniform random variable $X$ over $\G$ can be decomposed into two uniform and independent random variables $\tilde{X}$ taking values from $H$ and $\hat{X}$ taking values from the transversal of $H$ in $\G$. Similarly, the uniform random variable $\tilde{X}$ over $H$ can be decomposed into two uniform and independent random variables $\tilde{\tilde{X}}$ taking values from $M\le H$ and $\hat{\tilde{X}}$ taking values from the transversal of $M$ in $H$. Using the chain rule we have:
\begin{align*}
I(\tilde{X};Y|\hat{X})&=I(\tilde{\tilde{X}}\hat{\tilde{X}};Y|\hat{X})\\
&=I(\hat{\tilde{X}};Y|\hat{X})+I(\tilde{\tilde{X}};Y|\hat{X}\hat{\tilde{X}})
\end{align*}
Note that $\tilde{\tilde{X}}\in M$ and $(\hat{X},\hat{\tilde{X}})$ indicate the coset of $M$ in $\G$ to which $X$ belongs. Therefore, the equation above implies that for each $n$, $I^n_H(W)-I^n_M(W)=I(\hat{\tilde{X}};Y|\hat{X})$ where $X$ and $Y$ are the input and the output of the channel $W_N^{(J_n)}$. Since $\hat{\tilde{X}}$ can at most take $\frac{|H|}{|M|}$ values, by choosing the base of the $\log$ function to be equal to $\frac{|H|}{|M|}$ condition $(1)$ of Lemma \ref{lemma:super_lemma2} satisfies.\\ \\
We have shown in Lemma \ref{lemma:supermartingale} that both processes $I^n_H(W)$ and $I^n_M(W)$ are super-martingales and hence both converge almost surely. This means that the vector valued random process $(I^n_H(W),I^n_M(W))$ converges almost surely (refer to Proposition 5.25 of \cite{Karr_Probability}).
Hence condition $(2)$ is satisfied.\\ \\
Condition $(3)$ trivially holds and condition $(4)$ is shown to be satisfied in the proof of Theorem 1 of \cite{sasoglu_polar_q}.\\ \\
To show (5), assume $Z_{d^\prime}^n(W)<\epsilon$. Let $T_H$ be a transversal of $H$ in $\G$ and let $T_M$ be a transversal of $M$ in $H$. Given $X\in t_H+H$ for some $t_H \in T_H$, the joint probability distribution of cosets of $M$ in $t_H+H$ and the channel output is given by:
\begin{align*}
\bar{p}(t_H+t_M+M,y)& \triangleq\sum_{m\in M}P(X=t_H+t_M+m,Y=y|X\in t_H+H)\\
&=\sum_{m\in M}\frac{P(X=t_H+t_M+m,Y=y)}{P(X\in t_H+H)}\\
&=\sum_{m\in M}\frac{P(X=t_H+t_M+m,Y=y)}{|H|/|\G|}\\
&=\frac{|\G|}{|H|}\sum_{m\in M}\frac{1}{|\G|}W(y|t_H+t_M+m)\\
&=\frac{1}{|H|}\sum_{m\in M}W(y|t_H+t_M+m)
\end{align*}
where $t_M$ takes values from $T_M$. The corresponding channel is defined as:
\begin{align}\label{eq:W_bar_def}
\nonumber \bar{W}(y|t_H+t_M+M)&=\frac{1}{P(X\in t_H+t_M+M|X\in t_H+H)}\frac{1}{|H|}\sum_{m\in M}W(y|t_H+t_M+m)\\
&=\frac{1}{|M|}\sum_{m\in M}W(y|t_H+t_M+m)
\end{align}
Note that the input of this channel takes values from the set $\{t_H+t_M+M|t_M\in T_M\}$ uniformly and the size of the input alphabet is $\frac{|H|}{|M|}\triangleq \bar{q}$ which is a prime (since $M$ is maximal in $H$). Furthermore, by definition $I(\bar{W})=I(\hat{\tilde{X}};Y|\hat{X}=t_H)$. It is shown in Appendix \ref{section:Upper_Bound_ZWbar} that $Z_{d^\prime}(W)<\epsilon$ implies $Z(\bar{W})<C\epsilon$ for some constant $C=\frac{|M|\cdot |H|\cdot |G|}{|H|-|M|}$. Therefore, \cite[Prop. 3]{sasoglu_polar_q} implies $I(\bar{W})=\log \frac{|H|}{|M|}-O(\epsilon)$. This result is valid for all $t_H\in T_H$. Therefore
\begin{align*}
I_H(W)-I_M(W)&=\sum_{t_H\in T_H} P(\hat{X}=t_H) I(\hat{\tilde{X}};Y|\hat{X}=t_H)\\
&=\log \frac{|H|}{|M|}-O(\epsilon)
\end{align*}

To show condition (6), assume that $Z_{d^\prime}^n(W)>1-\epsilon$. For the channel $\bar{W}$ defined as above, it is shown in Appendix \ref{section:Lower_Bound_ZWbar1} (An alternate proof for the $\mathds{Z}_{p^r}$ case can be found in Appendix \ref{section:Alt_Lower_Bound_ZWbar}) that $Z_{d^\prime}(W)>1-\epsilon$ implies $Z_{d'+t_H+M}(\bar{W})>1-\frac{2q(2\epsilon-\epsilon^2)}{\bar{q}|M|}=1-O(\epsilon)$. Since the input alphabet of the channel $\bar{W}$ has a prime size and $d'\in H\backslash M$, we can use \cite[Lemma 4]{sasoglu_polar_q} to conclude that $Z(\bar{W})>1-\frac{2q\bar{q}^2(2\epsilon-\epsilon^2)}{|M|}=1-O(\epsilon)$. Now we use \cite[Prop. 3]{sasoglu_polar_q} to conclude $I(\bar{W})<O(\epsilon)$. This implies:
\begin{align*}
I_H(W)-I_M(W)&=\sum_{t_H\in T_H} P(\hat{X}=t_H) I(\hat{\tilde{X}};Y|\hat{X}=t_H)\\
&<O(\epsilon)
\end{align*}

So far, we have shown that for any $d\in G$, for $H=\langle d\rangle$ and $d'$ defined as in \eqref{eqn:d_maximal_sg}, the random variable $Z_{d'}^n(W)$ converges to a Bernoulli random variable. Note that so far the proof is general and applies to arbitrary groups as well. We will use this part of the proof later in Section \ref{section:abelian}. Next, we show that when $G=\mathds{Z}_{p^r}$, for any $\tilde{d}\in H\backslash M$ (including $d$ itself), $Z_{\tilde{d}}^n(W)$ converges to a Bernoulli random variable. Moreover, all such $\tilde{d}$'s converge to the same random variable. To see this, note that if $Z_{d'}^n<\epsilon$, it follows that $Z_{\tilde{d}}^n<\epsilon$ for all $\tilde{d}\in H\backslash M$ and if $Z_{d'}^n>1-\epsilon$ we show that for all $\tilde{d}\in \langle d'\rangle=H$, $Z_{\tilde{d}}^n>1-O(\epsilon)$. For any $\tilde{d}\in H=\langle d\rangle$ we can write $\tilde{d}=id'$ for some integer $i$. The condition $Z_{d'}> 1-\epsilon$ implies $1-Z(W_{\{x,x+d'\}})\le q\epsilon$ for all $x\in\G$. It has been shown in the proof of \cite[Lemma 4]{sasoglu_polar_q} that
\begin{align*}
\sqrt{1-Z(W_{\{x,x+2d'\}})}\le \sqrt{1-Z(W_{\{x,x+d'\}})}+\sqrt{1-Z(W_{\{x+d,x+2d'\}})}\le 2\sqrt{q\epsilon}
\end{align*}
Repeated application of this inequality for $i$ times yields $\sqrt{1-Z(W_{\{x,x+\tilde{d}\}})}\le i\sqrt{q\epsilon}\le q\sqrt{q\epsilon}$ or equivalently $Z(W_{\{x,x+\tilde{d}\}})\ge 1-q^3\epsilon$. It then follows that $Z_{\tilde{d}}\ge 1-q^3\epsilon$. Note that when $\G=\mathds{Z}_{p^r}$, $H\backslash M$ is the set of all elements $\tilde{d}$ such that $\langle \tilde{d}\rangle=\langle d\rangle$. This completes the proof of the lemma.
\end{proof}

The next lemma gives a sufficient condition for two processes $Z_d^n$ and $Z_{\tilde{d}}^n$ to converge to the same random variable. Recall that for $0\le t\le r-1$, $K_t=H_t\backslash H_{t+1}$.

\begin{lemma}\label{lemma:d_dprime_in_Kt}
If $d,\tilde{d}\in K_t$ for some $0\le t\le r-1$, then $Z_d^n$ and $Z_{\tilde{d}}^n$ converge to the same Bernoulli random variable.
\end{lemma}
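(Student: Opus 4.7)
The plan is to reduce this lemma to the second part of Lemma \ref{Zd_0_1_valued}, which already states that whenever $\langle \tilde{d}\rangle = \langle d\rangle$, the processes $Z_d^n$ and $Z_{\tilde d}^n$ converge almost surely to the same $\{0,1\}$-valued random variable. So it suffices to verify the purely algebraic claim that every element of $K_t$ generates the same cyclic subgroup of $\G = \mathds{Z}_{p^r}$, namely $H_t$.

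First I would unpack the definition $K_t = H_t \setminus H_{t+1}$ inside $\mathds{Z}_{p^r}$. Any $d \in K_t$ can be written uniquely as $d = p^t m$ with $0 \le m < p^{r-t}$ and $\gcd(m,p)=1$ (otherwise $d$ would be divisible by $p^{t+1}$ and lie in $H_{t+1}$). The inclusion $\langle d\rangle \subseteq H_t$ is then immediate since $d$ is a multiple of $p^t$. For the reverse inclusion $H_t \subseteq \langle d\rangle$, the main observation is that because $\gcd(m,p)=1$, the residue $m$ is a unit modulo $p^{r-t}$: there exists $m^{-1}$ with $m\,m^{-1} \equiv 1 \pmod{p^{r-t}}$, i.e. $m\,m^{-1} = 1 + k p^{r-t}$ for some integer $k$. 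Multiplying by $p^t$ gives $m^{-1} d = p^t m m^{-1} = p^t + k p^r \equiv p^t \pmod{p^r}$, so $p^t \in \langle d\rangle$. Since $H_t = \langle p^t\rangle$, this yields $H_t \subseteq \langle d\rangle$, and hence $\langle d\rangle = H_t$.

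Applying the same argument to $\tilde d \in K_t$ gives $\langle \tilde d\rangle = H_t = \langle d\rangle$, and then Lemma \ref{Zd_0_1_valued} immediately implies $Z^\infty_d(W) = Z^\infty_{\tilde d}(W)$ almost surely, with the common limit being Bernoulli as asserted. There is essentially no obstacle here: the whole content is the unit-structure computation in $\mathds{Z}_{p^r}$ that identifies $K_t$ with the set of generators of the unique subgroup of order $p^{r-t}$. The only reason a separate lemma is warranted is that in more general Abelian groups (treated later in Section \ref{section:abelian}), the partition into ``same-subgroup-generators'' classes is not as cleanly described by a single chain $H_0 > H_1 > \cdots > H_r$, so this clean $\mathds{Z}_{p^r}$-specific statement is worth recording before moving on.
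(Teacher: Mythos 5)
Your proposal is correct and follows the same route as the paper: reduce to Lemma \ref{Zd_0_1_valued} by observing that $d,\tilde d \in K_t$ forces $\langle d\rangle = \langle\tilde d\rangle = H_t$. The paper states this algebraic fact without elaboration, whereas you spell out the unit-structure argument in $\mathds{Z}_{p^r}$; the substance is identical.
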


\begin{proof}
Note that $d,\tilde{d}\in K_t$ implies $\langle d\rangle=\langle \tilde{d}\rangle=H_t$. Therefore, Lemma \ref{Zd_0_1_valued} implies $Z_d^n$ and $Z_{\tilde{d}}^n$ converge to the same Bernoulli random variable.

\end{proof}

For $t=0,1,\cdots,r-1$, pick an arbitrary element $k_t\in K_t$. The lemma above suggests that we only need to study $Z_{k_t}$'s rather than all $Z_d$'s.

\begin{lemma}\label{lemma:k_tk_s}
If $Z_{k_t}> 1-\epsilon$ then $Z_{k_s}\approx_{\epsilon} 1$ for all $t\le s\le r-1$.
\end{lemma}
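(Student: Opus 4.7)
The plan is to reduce this statement to an application of the iterated Bhattacharyya triangle inequality that was already used inside the proof of Lemma~\ref{Zd_0_1_valued}. The key structural observation is that in $\G=\mathds{Z}_{p^r}$ the subgroup chain $H_0\supseteq H_1\supseteq\cdots\supseteq H_r$ satisfies $H_s\subseteq H_t$ whenever $s\ge t$, and $\langle k_t\rangle = H_t$ for any $k_t\in K_t$. Consequently, for every $s$ with $t\le s\le r-1$, the element $k_s\in K_s\subseteq H_s\subseteq H_t=\langle k_t\rangle$, so we may write $k_s=i\,k_t$ in $\mathds{Z}_{p^r}$ for some integer $i$ with $1\le i\le q$.

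Next I would recall the elementary inequality already invoked in the last part of the proof of Lemma~\ref{Zd_0_1_valued} (and proved in \cite[Lemma~4]{sasoglu_polar_q}):
\begin{align*}
\sqrt{1-Z(W_{\{x,x+2k_t\}})}\le \sqrt{1-Z(W_{\{x,x+k_t\}})}+\sqrt{1-Z(W_{\{x+k_t,x+2k_t\}})}.
\end{align*}
The hypothesis $Z_{k_t}(W)>1-\epsilon$ means $\frac{1}{q}\sum_{x\in\G}Z(W_{\{x,x+k_t\}})>1-\epsilon$, which combined with $Z(W_{\{x,x+k_t\}})\le 1$ yields $1-Z(W_{\{x,x+k_t\}})\le q\epsilon$ for every $x\in\G$. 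Iterating the displayed inequality $i-1\le q-1$ times along the arithmetic progression $x,x+k_t,x+2k_t,\ldots,x+ik_t=x+k_s$, I obtain
\begin{align*}
\sqrt{1-Z(W_{\{x,x+k_s\}})}\le i\sqrt{q\epsilon}\le q\sqrt{q\epsilon}
\end{align*}
for every $x\in\G$, i.e.\ $Z(W_{\{x,x+k_s\}})\ge 1-q^3\epsilon$. Averaging over $x$ then gives $Z_{k_s}(W)\ge 1-q^3\epsilon$, and since $Z_{k_s}(W)\le 1$ always, this is exactly $Z_{k_s}\approx_\epsilon 1$.

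Finally, since this argument is carried out channel-by-channel on $W_N^{(J_n)}$, the conclusion lifts immediately to the random processes $Z_{k_t}^n, Z_{k_s}^n$. The only step that requires any care is the bookkeeping: one needs that the multiplier $i$ is bounded by a constant depending only on $q$ (which it is, since $i\le p^{r-t}\le q$), so that the resulting $O(\epsilon)$ bound is uniform in $n$ and in the realization of the channel. I do not anticipate a real obstacle — the main content is the algebraic fact $\langle k_t\rangle\supseteq\langle k_s\rangle$ for $s\ge t$, which is specific to the cyclic $p$-group structure of $\mathds{Z}_{p^r}$ and which trivializes the problem once spotted.
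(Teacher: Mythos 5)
Your proof is correct and follows essentially the same route as the paper: observe $k_s\in\langle k_t\rangle$ so $k_s=ik_t$, convert the hypothesis $Z_{k_t}>1-\epsilon$ into the pointwise bound $1-Z(W_{\{x,x+k_t\}})\le q\epsilon$, and iterate the $\sqrt{1-Z}$ triangle inequality from \cite[Lemma~4]{sasoglu_polar_q} along the progression $x,x+k_t,\ldots,x+ik_t$ to get $Z_{k_s}\ge1-O(\epsilon)$. You supply a bit more bookkeeping (the explicit bound $i\le q$ and the average-to-pointwise step) than the paper, but the argument is identical in substance.
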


\begin{proof}
Note that $k_s\in\langle k_t \rangle$ and let $d=k_t$ and $k_s=id$ for some integer $i$. The condition $Z_{k_t}> 1-\epsilon$ implies $1-Z(W_{\{x,x+d\}})\le q\epsilon$ for all $x\in\G$. It has been shown in the proof of \cite[Lemma 4]{sasoglu_polar_q} that  for all $x\in\G$
\begin{align*}
\sqrt{1-Z(W_{\{x,x+2d\}})}\le 2\sqrt{q\epsilon}
\end{align*}
Repeated application of this inequality for $i$ times yields $\sqrt{1-Z(W_{\{x,x+k_s\}})}\le i\sqrt{q\epsilon}$  for all $x\in\G$. It follows that $Z_{k_s}\ge 1-O(\epsilon)$.
\end{proof}

This lemma implies that for the group $\G=\mathds{Z}_{p^r}$ all possible  asymptotic cases are:
\begin{itemize}
\item \textbf{Case 0:} $Z_{k_0}= 1,Z_{k_1}= 1,Z_{k_2}= 1,\cdots,Z_{k_{r-1}}= 1$
\item \textbf{Case 1:} $Z_{k_0}= 0,Z_{k_1}= 1,Z_{k_2}= 1,\cdots,Z_{k_{r-1}}= 1$
\item \textbf{Case 2:} $Z_{k_0}= 0,Z_{k_1}= 0,Z_{k_2}= 1,\cdots,Z_{k_{r-1}}= 1$\\
\vdots
\item \textbf{Case r:} $Z_{k_0}= 0,Z_{k_1}= 0,Z_{k_2}= 0,\cdots,Z_{k_{r-1}}= 0$,
\end{itemize}
where for $t=0,\cdots,r$, case $t$ happens with some probability $p_t$.\\
Next, we study the behavior of $I^n$ in each of these asymptotic cases.

\begin{lemma}\label{lemma:I_t}
For a channel $(\mathds{Z}_{p^r},\mathcal{Y},W)$ and  for $t=0,1,\cdots,r$, if $Z_{k_0}<\epsilon ,Z_{k_1}<\epsilon,\cdots,Z_{k_{t-1}}<\epsilon, Z_{k_t}>1-\epsilon,\cdots,Z_{k_{r-1}}>1-\epsilon$, then $t\log p -O(\epsilon)<I^0(W)< t\log p+O(\epsilon)$.
\end{lemma}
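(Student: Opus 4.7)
My plan is to decompose $I^0(W)$ telescopically along the nested chain of subgroups $\G = H_0 \supset H_1 \supset \cdots \supset H_r = \{0\}$, and then reduce each level to an instance of the argument already carried out inside the proof of Lemma~\ref{Zd_0_1_valued}. Since $H_0 = \G$ has trivial transversal and $H_r = \{0\}$, one has $I^0_{H_0}(W) = I(X;Y) = I^0(W)$ and $I^0_{H_r}(W) = 0$, so telescoping gives
\begin{align*}
I^0(W) \;=\; \sum_{s=0}^{r-1}\Bigl[I^0_{H_s}(W) - I^0_{H_{s+1}}(W)\Bigr].
\end{align*}

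Writing $X$ via its successive ``digits'' with respect to this nested chain and applying the chain rule, each summand equals the conditional mutual information between the level-$s$ digit --- which takes values in the $p$-element transversal $T_M$ of $H_{s+1}$ in $H_s$ --- and $Y$, given the higher-level digits. Equivalently, this summand equals $\sum_{t_H} P(\hat{X}_s = t_H)\, I(\bar{W}_{t_H})$, where $\bar{W}_{t_H}$ is the prime-input quotient channel defined by~(\ref{eq:W_bar_def}) with $H = H_s$, $M = H_{s+1}$, and $t_H$ ranging over the transversal of $H_s$ in $\G$. So each summand is trivially bounded by $\log p$.

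For each level $s$ I then invoke the two branches already developed in the proof of Lemma~\ref{Zd_0_1_valued}. For $s \ge t$ the hypothesis $Z_{k_s}(W) > 1-\epsilon$ combined with $Z_{d'_s}(W) \ge Z_{k_s}(W)$ (where $d'_s = \arg\max_{a \in K_s} Z_a(W)$) gives $Z_{d'_s}(W) > 1-\epsilon$; Appendix~\ref{section:Lower_Bound_ZWbar1} and \cite[Lemma~4]{sasoglu_polar_q} then yield $Z(\bar{W}_{t_H}) > 1 - O(\epsilon)$, so \cite[Prop.~3]{sasoglu_polar_q} forces $I(\bar{W}_{t_H}) < O(\epsilon)$ and the $s$-th summand is $O(\epsilon)$. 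For $s < t$ the hypothesis $Z_{k_s}(W) < \epsilon$ is fed into the Appendix~\ref{section:Upper_Bound_ZWbar} bound to produce $Z(\bar{W}_{t_H}) < O(\epsilon)$, and then \cite[Prop.~3]{sasoglu_polar_q} gives $I(\bar{W}_{t_H}) > \log p - O(\epsilon)$, making the $s$-th summand equal to $\log p - O(\epsilon)$. Summing, the $t$ small-$Z$ levels contribute $t\log p - O(\epsilon)$ and the remaining $r-t$ large-$Z$ levels contribute $O(\epsilon)$, which is precisely the claim.

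The main obstacle I anticipate is the $s<t$ case, because the proof of Lemma~\ref{Zd_0_1_valued} was written in terms of the specific element $d'_s = \arg\max_{a\in K_s} Z_a(W)$, and the implication ``$Z_{k_s}(W)$ small $\Rightarrow$ $Z_{d'_s}(W)$ small'' is \emph{not} immediate from the argmax property. I will need to verify that the bound $Z(\bar{W}) \le C\,Z_d(W)$ established in Appendix~\ref{section:Upper_Bound_ZWbar} remains valid with $d = k_s$ in place of $d = d'_s$. For $\G = \mathds{Z}_{p^r}$ this should be natural, since $T_M = \{0,k_s,2k_s,\ldots,(p-1)k_s\}$ is a valid transversal of $H_{s+1}$ in $H_s$, so the pairwise Bhattacharyya parameters of $\bar{W}_{t_H}$ along consecutive elements of this transversal are directly controlled by $Z_{k_s}(W)$, with the averaging over $m \in M = H_{s+1}$ governed by the already-assumed bounds on $Z_{k_{s'}}(W)$ for $s' > s$.
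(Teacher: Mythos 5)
Your telescoping decomposition along $\G = H_0 \supset H_1 \supset \cdots \supset H_r = \{0\}$ is precisely the paper's argument: the paper writes $I^0(W) = I_\G(W) - I_{\{0\}}(W)$ as a sum of successive differences $I_{H_s}(W) - I_{H_{s+1}}(W)$, evaluates each difference by appealing to the condition-(5)/(6) steps of the proof of Lemma~\ref{Zd_0_1_valued} (i.e.\ Appendices~\ref{section:Upper_Bound_ZWbar} and~\ref{section:Lower_Bound_ZWbar1} together with \cite[Prop.~3]{sasoglu_polar_q}, \cite[Lemma~4]{sasoglu_polar_q}), and sums $t\log p + (r-t)\cdot 0$. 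So the overall route is identical.

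You are also right to be suspicious about the $s<t$ case. The quantity Appendix~\ref{section:Upper_Bound_ZWbar} actually uses is \emph{not} $Z_{k_s}(W)$ for an arbitrary $k_s$ but the uniform bound $Z_{\tilde d}(W) < \epsilon$ for \emph{every} $\tilde d \in H_s\setminus H_{s+1} = K_s$; the appendix obtains it from $Z_{d'}(W)<\epsilon$ only by virtue of $d'$ being the argmax over $K_s$. The hypothesis of Lemma~\ref{lemma:I_t} only gives you one $k_s$, and in general one small $Z_{k_s}$ does not force the others to be small (for $\mathds{Z}_p$ with $p\ge 5$ it is easy to make $Z_1$ and $Z_2$ very different). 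The paper's own proof silently makes the same identification; it is harmless in the paper because by Lemmas~\ref{Zd_0_1_valued} and \ref{lemma:d_dprime_in_Kt} all the processes $Z^n_{\tilde d}$ with $\tilde d\in K_s$ converge almost surely to the \emph{same} Bernoulli limit, so in every context where the lemma is invoked the hypothesis holds simultaneously for all of $K_s$, not just the chosen representative.

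However, your proposed repair does not close the gap. Choosing $T_M=\{0,k_s,2k_s,\ldots,(p-1)k_s\}$ only lets $Z_{k_s}(W)$ control the Bhattacharyya distance between \emph{adjacent} inputs $jk_s+M$ and $(j+1)k_s+M$ of $\bar W$. But $Z(\bar W)$ in Appendix~\ref{section:Upper_Bound_ZWbar} averages over \emph{all} ordered pairs $t_M\ne t_M'$, and for non-adjacent pairs the relevant shift is $(j'-j)k_s$, which is another element of $K_s$ over which you have no control. There is no triangle-type inequality in the ``small-$Z$'' direction (the Hellinger triangle inequality propagates $Z\approx 1$, not $Z\approx 0$), so the non-adjacent pairs cannot be bounded from the adjacent ones. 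Your remark about the ``averaging over $m\in M$ governed by $Z_{k_{s'}}$, $s'>s$'' also does not address this, since the troublesome differences lie in $K_s$, not in $H_{s+1}$. The correct fix is simply to strengthen the hypothesis as it is actually used: read ``$Z_{k_s}<\epsilon$'' as ``$Z_{\tilde d}<\epsilon$ for every $\tilde d\in K_s$,'' i.e.\ $Z_{d'_s}<\epsilon$ for the argmax $d'_s$, which is exactly what Appendix~\ref{section:Upper_Bound_ZWbar} consumes, and observe that this stronger hypothesis is automatically available in every application by the joint convergence of $\{Z^n_{\tilde d}\}_{\tilde d\in K_s}$ to a common $\{0,1\}$-valued limit.
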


\begin{proof}
Note that for all $s=0,\cdots,r-1$, $M_s\triangleq\langle k_{s+1}\rangle$ is a maximal subgroup of $\langle k_{s}\rangle$. In the proof of Lemma \ref{Zd_0_1_valued}, if we let $d=k_0$ and $M_0=\langle k_{1}\rangle$, we get $I_{\G}(W)-I_{M_0}(W)=I(W)-I_{M_0}(W)\approx_\epsilon \log p$ (Here we take the base of the $\log$ function to be equal to $2$). Similarly, it follows that $I_{M_s}(W)-I_{M_{s+1}}(W)\approx_\epsilon \log p$ for all $0\le s\le t-1$. For $s\ge t$ we have, $I_{M_s}-I_{M_{s+1}}\approx_{\epsilon} 0$. Therefore,
\begin{align*}
I^0(W)=I_{\G}(W)&=\sum_{s=0}^{r-1}I_{M_s}(W)-I_{M_{s+1}}(W)\\
&=\sum_{s=0}^{t-1} I_{M_s}(W)-I_{M_{s+1}}(W)+\sum_{s=t}^{r-1} I_{M_s}(W)-I_{M_{s+1}}(W)\\
&\approx_{\epsilon} \sum_{s=0}^{t-1} \log p+\sum_{s=t}^{r-1} 0\\
&=t\log p
\end{align*}
\end{proof}

We have shown that the process $I^n$ converges to the following $r+1$ valued discrete random variable: $I^{\infty}=t\log p$ with probability $p_t$ for $t=0,\cdots,r$.\\
For $t=0,\cdots,r$, define the random variable $Z^t(W_N^{(i)})=\sum_{d\notin H_t}Z_d(W_N^{(i)})$ and the random process $(Z^t)^{(n)}(W)=Z^t(W^{(J_n)}_N)$ where $J_n$ is a uniform random variable over $\{1,2,\cdots,N=2^n\}$. Note that $(Z^t)^{(n)}(W)$ converges to a random variable $(Z^t)^{(\infty)}(W)$ almost surely and $P\left((Z^t)^{(\infty)}=0\right)=\sum_{s=t}^r p_s$.

\subsubsection{Summary of Channel Transformation}
For the channel $(\mathds{Z}_{p^r},\mathcal{Y},W)$, consider the vector random process $\mathbf{V}^n=(Z_{k_0}^n,Z_{k_1}^n,\cdots,Z_{k_{r-1}}^n,I^n)$. We have seen in the previous section that each component of this vector random process converges almost surely. Proposition 5.25 of \cite{Karr_Probability} implies that the vector random process $\mathbf{V}^n$ also converges almost surely to a random vector $\mathbf{V}^{\infty}$. The random vector $\mathbf{V}^{\infty}$ is a discrete random variable defined as follows:
\begin{align*}
P\left(\mathbf{V}^{\infty}=(\overbrace{0,\cdots,0}^{t\mbox{ \scriptsize times}},\overbrace{1,\cdots,1}^{r-t\mbox{ \scriptsize times}},t\log p)\right)=p_t
\end{align*}
for $t=0,1,\cdots,t$ where $p_t$'s are some probabilities. This implies that for all $\epsilon>0$, there exists a number $N=N(\epsilon)=2^{n(\epsilon)}$ and a partition $\{A_0^\epsilon,A_1^\epsilon, \cdots, A_r^\epsilon\}$ of $\{1,\cdots,N\}$ such that for $t=0,\cdots,r$ and $i\in A_t^\epsilon$, $Z_{k_s}(W_N^{(i)})<O(\epsilon)$ if $0\le s<t$ and $Z_{k_s}(W_N^{(i)})>1-O(\epsilon)$ if $t\le s<r$. For $t=0,\cdots,r$ and $i\in A_t^\epsilon$, we have $I(W_N^{(i)})=t\log(p)+O(\epsilon)$ and $Z^t(W_N^{(i)})=O(\epsilon)$. Moreover, as $\epsilon\rightarrow 0$, $\frac{|A_t^\epsilon|}{N}\rightarrow p_t$ for some probabilities $p_0,\cdots,p_r$.\\ \\
In Appendix \ref{section:Rate_of_Polarization}, we show that for any $\beta<\frac{1}{2}$ and for $t=0,\cdots,r$,
\begin{align}\label{rate_of_pol}
\lim_{n\rightarrow \infty}P\left((Z^t)^{(n)}<2^{-2^{\beta n}}\right)&\ge P\left((Z^t)^{(\infty)}=0\right)\\
\nonumber &=\sum_{s=t}^r p_s
\end{align}
\begin{remark}\label{remark_rate_of_pol}
This observation implies the following stronger result: For all $\epsilon>0$, there exists a number $N=N(\epsilon)=2^{n(\epsilon)}$ and a partition $\{A_0^\epsilon,A_1^\epsilon, \cdots, A_r^\epsilon\}$ of $\{1,\cdots,N\}$ such that for $t=0,\cdots,r$ and $i\in A_t^\epsilon$, $I(W_N^{(i)})=t\log(p)+O(\epsilon)$ and $Z^t(W_N^{(i)})<2^{-2^{\beta n(\epsilon)}}$. Moreover, as $\epsilon\rightarrow 0$, $\frac{|A_t^\epsilon|}{N}\rightarrow p_t$ for some probabilities $p_0,\cdots,p_r$.
\end{remark}

\subsection{Encoding and Decoding}
In the original construction of polar codes, we fix the input symbols corresponding to useless channels and send information symbols over perfect channels. Here, since the channels do not polarize into two levels, the encoding is slightly different and we send ``some'' information bits over ``partially perfect'' channels. At the encoder, if $i\in A_t^\epsilon$ for some $t=0,\cdots,r$, the information symbol is chosen from the transversal $T_t$ arbitrarily and not from the whole set $\G$. As we will see later, the channel $W_N^{(i)}$ is perfect for symbols chosen from $T_t$ and perfect decoding is possible at the decoder. Let $\mathcal{X}_N^\epsilon=\bigoplus_{t=0}^r T_t^{A_t^\epsilon}$ be the set of all valid input sequences. For the sake of analysis, as in the binary case, the message $u_1^N$ is dithered with a uniformly distributed random vector $b_1^N\in\bigoplus_{t=0}^r H_t^{A_t^\epsilon}$ revealed to both the encoder and the decoder. A message $v_1^N\in\mathcal{X}_N^\epsilon$ is encoded to the vector $x_1^N=(v_1^N+b_1^N) G_N$. Note that $u_1^N=v_1^N+b_1^N$ is uniformly distributed over $\G^N$.\\
At the decoder, after observing the output vector $y_1^N$, for $t=0,\cdots,r$ and $i\in A_t^{\epsilon}$ , use the following decoding rule:
\begin{align*}
\hat{u}_i =f_i(y_1^N,\hat{u}_1^{i-1})=\argmax_{g\in b_i+T_t}W_N^{(i)}(y_1^N,\hat{u}_1^{i-1}|g)
\end{align*}
And finally, the message is decoded as $\hat{v}_1^N=\hat{u}_1^N-b_1^N$.\\
The total number of valid input sequences is equal to
\begin{align*}
&2^{NR}=\prod_{t=0}^r |T_t|^{|A_t|}=\prod_{t=0}^r p^{t|A_t|}\approx \prod_{t=0}^r p^{tp_tN}
\end{align*}
Therefore, the rate is equal to $R=\sum_{t=0}^r p_t t\log p$. On the other hand, since $I^n$ is a martingale, we have $\mathds{E}\{I^\infty\}=I^0$. Since $\mathds{E}\{I^\infty\}=\sum_{t=0}^r p_t t\log p$, we observe that the rate $R$ is equal to the symmetric capacity $I^0$. We will see in the next section that this rate is achievable.
\subsection{Error Analysis}
Let $B_i$ be the event that the first error occurs when the decoder decodes the $i$th symbol:
\begin{align}\label{eqn:Bi}
B_i=&\left\{(u_1^N,y_1^N)\in \G^N\times \mathcal{Y}^N|\forall j<i:u_j=f_j(y_1^N, u_1^{j-1}),u_i\ne f_i(y_1^N,u_1^{i-1})\right\}\\
\nonumber &\subseteq \left\{(u_1^N,y_1^N)\in \G^N\times \mathcal{Y}^N|u_i\ne f_i(y_1^N,u_1^{i-1})\right\}
\end{align}

For $t=0,\cdots,r$ and $i\in A_t^{\epsilon}$, define
\begin{align}
\nonumber E_i=&\left\{(u_1^N,y_1^N)\in \G^N\times \mathcal{Y}^N|W_N^{(i)}(y_1^N,u_1^{i-1}|u_i)\right.\\
\label{eqn:Ei}&\left.\le W_N^{(i)}(y_1^N,u_1^{i-1}|\tilde{u}_i)\mbox{ for some }\tilde{u}_i\in b_i+T_t\right., \tilde{u}_i\ne u_i\}
\end{align}

\begin{lemma}
For $t=0,\cdots,r$ and $i\in A_t^{\epsilon}$, $P(E_i)\le q^2 Z^t(W_N^{(i)})$.
\end{lemma}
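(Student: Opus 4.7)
My plan is a standard pairwise-error analysis via the Bhattacharyya bound, adapted to the fact that at index $i\in A_t^\epsilon$ the decoder is restricted to the shifted transversal $b_i+T_t$ rather than all of $\G$. The algebraic content is simply that distinct elements of $b_i+T_t$ always differ by an element of $\G\setminus H_t$, which is exactly the coset constraint defining $Z^t$.

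\textbf{Step 1 (setup and union bound).} Under the dithered encoding, $(U_1^N,Y_1^N)$ has joint law $\tfrac{1}{q^N}W^N(y_1^N|u_1^N G_N)$; marginalizing $u_{i+1}^N$ gives $P(U_1^i=u_1^i,Y_1^N=y_1^N)=\tfrac{1}{q}W_N^{(i)}(y_1^N,u_1^{i-1}|u_i)$. A union bound over $\tilde u_i\in b_i+T_t\setminus\{u_i\}$ followed by the classical estimate $\mathds{1}[W_N^{(i)}(\cdot|u_i)\le W_N^{(i)}(\cdot|\tilde u_i)]\le\sqrt{W_N^{(i)}(\cdot|\tilde u_i)/W_N^{(i)}(\cdot|u_i)}$ yields
\begin{equation*}
P(E_i)\le\sum_{u_1^i,\,y_1^N}\sum_{\substack{\tilde u_i\in b_i+T_t\\ \tilde u_i\ne u_i}}\frac{1}{q}\sqrt{W_N^{(i)}(y_1^N,u_1^{i-1}|u_i)\,W_N^{(i)}(y_1^N,u_1^{i-1}|\tilde u_i)}.
\end{equation*}

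\textbf{Step 2 (transversal structure).} I would change variables to $d=\tilde u_i-u_i$. Since $u_i,\tilde u_i$ both lie in the shifted transversal $b_i+T_t$ and are distinct, they represent different cosets of $H_t$, so $d\in\G\setminus H_t$. Concretely, writing $u_i=v_i+b_i$ with $v_i\in T_t$, the admissible differences are precisely $(T_t-v_i)\setminus\{0\}\subseteq \G\setminus H_t$. This is exactly why the encoding restricts symbols at index $i\in A_t^\epsilon$ to $T_t$: only channels for which $Z_d$ is small for \emph{every} $d\notin H_t$ (equivalently, $Z^t$ is small) need to carry information.

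\textbf{Step 3 (recognize $Z^t$).} I would then enlarge the inner sum from $(T_t-v_i)\setminus\{0\}$ to the full set $\G\setminus H_t$, which is a valid upper bound since all summands are nonnegative. Swapping the orders of summation and applying the definition
\begin{equation*}
\frac{1}{q}\sum_{u_i\in\G}\sum_{u_1^{i-1},\,y_1^N}\sqrt{W_N^{(i)}(y_1^N,u_1^{i-1}|u_i)\,W_N^{(i)}(y_1^N,u_1^{i-1}|u_i+d)}=Z_d(W_N^{(i)})
\end{equation*}
to each $d\notin H_t$ produces $\sum_{d\notin H_t}Z_d(W_N^{(i)})=Z^t(W_N^{(i)})$. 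In fact this argument already gives the stronger bound $P(E_i)\le Z^t(W_N^{(i)})$; the extra factor $q^2$ in the statement is loose slack that comfortably absorbs the crude estimate $|T_t|-1\le q$ and the overcounting incurred when extending from $(T_t-v_i)\setminus\{0\}$ to all of $\G\setminus H_t$.

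\textbf{Main obstacle.} The proof is largely bookkeeping; the only genuinely non-routine step is the coset-disjointness observation in Step 2, which is what makes the restriction to $d\notin H_t$ (and hence the appearance of $Z^t$ rather than a full Bhattacharyya sum) correct. Once that is in hand, the remainder mirrors the Bhattacharyya error estimate of the binary case, with $Z^t(W_N^{(i)})$ replacing the single parameter $Z(W_N^{(i)})$.
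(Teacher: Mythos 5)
Your proof is correct and follows essentially the same strategy as the paper's: the Bhattacharyya union bound over competitors in the shifted transversal $b_i+T_t$, together with the key observation that distinct elements of a transversal of $H_t$ differ by an element of $\G\setminus H_t$. Your bookkeeping is in fact slightly tighter than the paper's. The paper first bounds each individual pairwise Bhattacharyya term $Z_{\{u_i,\tilde u_i\}}(W_N^{(i)})$ by $q\,Z_{d}(W_N^{(i)})\le q\,Z^t(W_N^{(i)})$ (paying a factor $q$ by picking a single summand out of the average in $Z_d$), and then multiplies by the $|T_t|-1\le q$ competitors, producing the factor $q^2$. You instead keep the sum over $u_i$ attached to the difference $d$, so that it reassembles exactly into $q\,Z_d(W_N^{(i)})$ with no loss, and then enlarge the set of differences $(T_t-v_i)\setminus\{0\}$ to all of $\G\setminus H_t$, so the sum over $d$ is exactly $\sum_{d\notin H_t}Z_d(W_N^{(i)})=Z^t(W_N^{(i)})$. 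The resulting bound $P(E_i)\le Z^t(W_N^{(i)})$ is indeed stronger than the stated one; the factor $q^2$ is harmless slack that only inflates the constant in the final block-error estimate $P(\mathrm{err})\le q^2 N 2^{-2^{\beta n}}$.
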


\begin{proof}
For $u_i\in\G$, write $u_i=b_i(u_i)+v_i(u_i)$ where $b_i(u_i)\in H_t$ and $v_i(u_i)\in T_t$. We have
\begin{align*}
P(E_i)&=\sum_{u_1^N,y_1^N}\frac{1}{q^N}W_N(y_1^N|u_1^N)\mathds{1}_{E_i}(u_1^N,y_1^N)\\
&\le \sum_{u_1^N,y_1^N}\frac{1}{q^N}W_N(y_1^N|u_1^N)\sum_{\tilde{u}_i\in b_i(u_i)+T_t,\tilde{u}_i\ne u_i}\sqrt{\frac{W_N^{(i)}(y_1^N,u_1^{i-1}|\tilde{u}_i)}{W_N^{(i)}(y_1^N,u_1^{i-1}|u_i)}}\\
&=\sum_{u_1^i,y_1^N}\frac{1}{q}\left(\sum_{u_{i+1}^N}\frac{1}{q^{N-1}}W_N(y_1^N|u_1^N)\right) \sum_{\tilde{u}_i\in b_i(u_i)+T_t,\tilde{u}_i\ne u_i}\sqrt{\frac{W_N^{(i)}(y_1^N,u_1^{i-1}|\tilde{u}_i)}{W_N^{(i)}(y_1^N,u_1^{i-1}|u_i)}}\\
&=\sum_{u_1^i,y_1^N}\frac{1}{q} W_N^{(i)}(y_1^N,u_1^{i-1}|u_i) \sum_{\tilde{u}_i\in b_i(u_i)+T_t,\tilde{u}_i\ne u_i}\sqrt{\frac{W_N^{(i)}(y_1^N,u_1^{i-1}|\tilde{u}_i)}{W_N^{(i)}(y_1^N,u_1^{i-1}|u_i)}}\\
&= \sum_{u_i\in\G} \sum_{\tilde{u}_i\in b_i(u_i)+T_t,\tilde{u}_i\ne u_i} \frac{1}{q} \sum_{u_1^{i-1},y_1^N}\sqrt{W_N^{(i)}(y_1^N,u_1^{i-1}|\tilde{u}_i)W_N^{(i)}(y_1^N,u_1^{i-1}|u_i)}\\
&= \sum_{u_i\in\G} \sum_{\tilde{u}_i\in b_i(u_i)+T_t,\tilde{u}_i\ne u_i} \frac{1}{q} Z_{\{u_i,\tilde{u}_i\}}(W_N^{(i)})
\end{align*}

For $u_i\in \G$ and $\tilde{u}_i\in b_i(u_i)+T_t$, if $u_i\ne \tilde{u}_i$, then $u_i,\tilde{u}_i$ are not in the same coset of $H_t$ and hence $u_i-\tilde{u}_i\notin H_t$. Therefore, $u_i-\tilde{u}_i\in \G\backslash H_t$. Note that for $d=u_i-\tilde{u}_i$, $Z_{\{u_i,\tilde{u}_i\}}(W_N^{(i)})\le q Z_d(W_N^{(i)})$. Since $d\in\G\backslash H_t$, we have $Z_d(W_N^{(i)})\le Z^t(W_N^{(i)})$ and hence,
\begin{align*}
Z_{\{u_i,\tilde{u}_i\}}(W_N^{(i)})&\le qZ^t(W_N^{(i)})
\end{align*}
Therefore, $P(E_i)\le q|T_t| Z^t(W_N^{(i)})\le q^2 Z^t(W_N^{(i)})$.
\end{proof}

The probability of block error is given by $P(err) = \sum_{t=0}^r\sum_{i\in A_t^{\epsilon}}P(B_i)$. Since $B_i\subseteq E_i$, we get
\begin{align}\label{Perr_Zpr}
P(err) &\le\sum_{t=0}^r\sum_{i\in A_t^{\epsilon}} q^2 Z^t(W_N^{(i)})\\
&\stackrel{(a)}{\le} \sum_{t=0}^r |A_t^{\epsilon}| q^2 2^{-2^{\beta n}}\\
&\le q^2 N 2^{-2^{\beta n}}
\end{align}
for any $\beta< \frac{1}{2}$ where $(a)$ follows from Remark \ref{remark_rate_of_pol}. Therefore, the probability of error goes to zero as $\epsilon\rightarrow 0$ (and hence $n\rightarrow \infty$).

\section{Polar Codes Over Arbitrary Channels}\label{section:abelian}
For any channel input alphabet there always exist an Abelian group of the same size. In this section, we generalize the result of the previous section to channels of arbitrary input alphabet sizes and arbitrary group operations.
\subsection{Abelian Groups}
Let the Abelian group $\G$ be the input alphabet of the channel. It is a standard fact that any Abelian group can be decomposed into a direct sum of $\mathds{Z}_{p^r}$ rings \cite{algebra_bloch}. Let $\G=\bigoplus_{l=1}^L \R_l$ with $\R_l=\mathds{Z}_{p_l^{r_l}}$ where $p_l$'s are prime numbers and $r_l$'s are positive integers. For $t=(t_1,t_2,\cdots,t_L)$ with $t_l\in\{0,1,\cdots,r_l\}$, there exists a corresponding subgroup $H$ of $\G$ defined by $H=\bigoplus_{l=1}^L p_l^{t_l} \R_l$. For a subgroup $H$ of $\G$ define $T_H$ to be a transversal of $H$ in $\G$.
\subsection{Recursive Channel Transformation}
\subsubsection{The Basic Channel Transforms}
The transformed channels $W^+$ and $W^-$ and the process $I^n(W)$ are defined the same way as the $\mathds{Z}_{p^r}$ case through Equations \eqref{eqn:channel_transform1}, \eqref{eqn:channel_transform2} and \eqref{eqn:Iprocess}.
\subsubsection{Asymptotic Behavior of Synthesized Channels}
For $d\in\G$, define $Z_d^n(W)$ same as (\ref{eqn:Zd}) where $q=|\G|$ and for $H\le \G$, define $I_H^n(W)$ by Equation \eqref{eqn:Iprocess2}. To prove the polarization for arbitrary groups, we need the following lemma:

\begin{lemma}\label{lemma:Zd_sg_gen_by}
For $d_1,d_2\in\G$, if $Z_{d_1}(W)> 1-\epsilon$ and $Z_{d_2}(W)> 1-\epsilon$, then $Z_{\tilde{d}}(W)\approx_{\epsilon} 1$ for any $\tilde{d} \in \langle d_1,d_2 \rangle$ where $\langle d_1,d_2 \rangle$ is the subgroup of $\G$ generated by $d_1$ and $d_2$.
\end{lemma}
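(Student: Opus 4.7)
The plan is to reduce the lemma to the pointwise Bhattacharyya estimate and then iterate the ``triangle inequality'' for the Bhattacharyya distance that was already used above. First I would translate the average hypothesis into a pointwise bound: since each summand $Z(W_{\{x,x+d_i\}})\in[0,1]$, the inequality $Z_{d_i}(W)>1-\epsilon$ forces $1-Z(W_{\{x,x+d_i\}})\le q\epsilon$ uniformly in $x\in\G$ for $i=1,2$. This is the same reduction performed in the proofs of Lemmas \ref{Zd_0_1_valued} and \ref{lemma:k_tk_s}.

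Next I would recall the sub-additive inequality established inside the proof of \cite[Lemma 4]{sasoglu_polar_q} and reused here:
\begin{align*}
\sqrt{1-Z(W_{\{x,x+a+b\}})}\le \sqrt{1-Z(W_{\{x,x+a\}})}+\sqrt{1-Z(W_{\{x+a,x+a+b\}})},
\end{align*}
valid for all $x,a,b\in \G$. Since $\G$ is Abelian, any element of $\langle d_1,d_2\rangle$ can be written as $\tilde{d}=i_1 d_1+i_2 d_2$ with nonnegative integers $i_1,i_2$ that are at most the orders of $d_1$ and $d_2$, hence at most $q=|\G|$. I would then apply the inequality above iteratively: first peel off the $d_1$-part one summand at a time (exactly as in the proof of Lemma \ref{lemma:k_tk_s}, which already shows $\sqrt{1-Z(W_{\{x,x+i_1 d_1\}})}\le i_1\sqrt{q\epsilon}$ uniformly in $x$), and then split off the $d_2$-part in the same way starting from the base point $x+i_1 d_1$. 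This yields, uniformly in $x$,
\begin{align*}
\sqrt{1-Z(W_{\{x,x+\tilde{d}\}})}\le i_1\sqrt{q\epsilon}+i_2\sqrt{q\epsilon}\le 2q\sqrt{q\epsilon}.
\end{align*}

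Squaring gives $1-Z(W_{\{x,x+\tilde{d}\}})\le 4q^{3}\epsilon$ for every $x$, and averaging over $x$ yields $Z_{\tilde{d}}(W)\ge 1-4q^{3}\epsilon=1-O(\epsilon)$, which is the claim (noting $O(\epsilon)$ is permitted to depend on $q=|\G|$ but not on the output alphabet, per the convention stated at the start of Section \ref{rings}). The combinatorial step is trivial; the only thing to be a little careful about is that the \textbf{main obstacle} is simply confirming that the iteration counts $i_1,i_2$ can be taken bounded by $q$. This holds because the cyclic subgroup generated by any $d_i$ has order dividing $|\G|$, so without loss of generality $0\le i_j<\mathrm{ord}(d_j)\le q$. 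Nothing in the argument uses the ring structure of $\mathds{Z}_{p^r}$, so it goes through verbatim for general Abelian $\G$, which is what the subsequent section needs.
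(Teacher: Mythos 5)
Your proof is correct and follows essentially the same approach as the paper: reduce the average hypothesis to a pointwise bound $1-Z(W_{\{x,x+d_i\}})\le q\epsilon$, write $\tilde{d}=i_1 d_1+i_2 d_2$, and iterate the square-root triangle inequality from \cite[Lemma 4]{sasoglu_polar_q}. The only difference is that you spell out the iteration and the explicit constant $4q^3$, whereas the paper abbreviates this to ``repeated application of the above inequalities.''
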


\begin{proof}
The condition $Z_{d_1}> 1-\epsilon$ implies $1-Z(W_{\{x,x+d_1\}})\le q\epsilon$ and the condition $Z_{d_2}> 1-\epsilon$ implies $1-Z(W_{\{x,x+d_2\}})\le q\epsilon$. Similar to the proof of Lemma \ref{lemma:k_tk_s}, we have
\begin{align*}
&\sqrt{1-Z(W_{\{x,x+2d_1\}})}\le 2\sqrt{q\epsilon},
&\sqrt{1-Z(W_{\{x,x+2d_2\}})}\le 2\sqrt{q\epsilon}
\end{align*}
It is also straightforward to show that
\begin{align*}
&\sqrt{1-Z(W_{\{x,x+d_1+d_2\}})}\le 2\sqrt{q\epsilon}
\end{align*}
Since $\tilde{d} \in \langle d_1,d_2 \rangle$, it can be written as $\tilde{d}=id_1+jd_2$ for some integers $i,j$. Repeated application of the above inequalities yields the lemma.
\end{proof}

\begin{remark}\label{remark:d1_dm}
This lemma is generalizable to the case where for $d_1,\cdots,d_m\in\G$, $Z_{d_1}(W)> 1-\epsilon, Z_{d_2}(W)> 1-\epsilon,\cdots, Z_{d_m}(W)> 1-\epsilon$. In this case, we have $Z_{\tilde{d}}(W)\approx_{\epsilon} 1$ for any $\tilde{d} \in \langle d_1,d_2,\cdots,d_m \rangle$.
\end{remark}

The following lemma is a restatement of Lemma \ref{Zd_0_1_valued}. Here, we prove it for arbitrary groups.

\begin{lemma}\label{Zd_0_1_valued_general}
For all $d\in \G$, $Z^n_d(W)$ converges to a $\{0,1\}$-valued random variable $Z^\infty_d(W)$ as $n$ grows. Moreover, if $\tilde{d}\in\G$ is such that $\langle \tilde{d}\rangle=\langle d\rangle$ then $Z^\infty_{\tilde{d}}(W)=Z^\infty_d(W)$ almost surely; i.e. the random processes $Z^n_{\tilde{d}}(W)$ and $Z^n_d(W)$ converge to the same random variable.
\end{lemma}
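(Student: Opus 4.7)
The plan is to inherit the portion of the proof of Lemma \ref{Zd_0_1_valued} that was explicitly flagged as valid for arbitrary Abelian groups, and to replace the $\mathds{Z}_{p^r}$-specific step by a combinatorial argument that tracks all maximal subgroups of $H=\langle d\rangle$ at once rather than just one. Since $H$ is cyclic, its maximal subgroups are precisely the groups $M_p=pH$ indexed by the primes $p$ dividing $|H|$. Enumerate these primes as $p_1,\ldots,p_k$ and, for each $j$, let $d_j^\star(W)=\argmax_{a\in H\setminus M_{p_j}} Z_a(W)$. The quotient $H/M_{p_j}$ has prime order, so the general part of the earlier proof applies verbatim to each $(M_{p_j},d_j^\star)$ and delivers the almost sure convergence of $Z_{d_j^\star}^n(W)$ to a $\{0,1\}$-valued limit $Z_j^\infty$.

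Two elementary group-theoretic facts drive the rest of the argument. First, since $d$ generates $H$ it lies outside every $M_{p_j}$, so $d\in H\setminus M_{p_j}$ and hence $Z_d(W)\le Z_{d_j^\star}(W)$ at every stage of the process. Second, any subgroup of $H$ that meets $H\setminus M_{p_j}$ for every $j$ cannot be contained in any maximal subgroup of $H$, so it must equal $H$ itself. The argument then splits on the limits: if $Z_j^\infty=0$ for some $j$, the first fact immediately gives $Z_d^n\to 0$. Otherwise $Z_j^\infty=1$ for every $j$, and at each $n$ the (random) elements $d_1^\star(W_n),\ldots,d_k^\star(W_n)$ satisfy $Z_{d_j^\star(W_n)}(W_n)\to 1$; by the second fact their generated subgroup is $H$, which contains $d$, and Remark \ref{remark:d1_dm} forces $Z_d(W_n)\ge 1-O(\epsilon_n)$ with $\epsilon_n\to 0$. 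Hence $Z_d^n$ converges almost surely to a $\{0,1\}$-valued limit, namely $\prod_{j=1}^k Z_j^\infty$.

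The second assertion is then immediate: the limit just constructed depends only on $H=\langle d\rangle$ and on its (deterministic) family of maximal subgroups, so any $\tilde d$ with $\langle\tilde d\rangle=H$ produces the same limit, giving $Z_{\tilde d}^\infty=Z_d^\infty$ almost surely.

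The main subtlety I expect is that the argmaxes $d_j^\star(W_n)$ are random and change with $n$, which could make the subgroup they generate look time-dependent. The saving grace is that the combinatorial statement ``a subgroup meeting every $H\setminus M_{p_j}$ equals $H$'' is pointwise in $n$, so the generated subgroup equals $H$ for every realization of the channel process and every $n$, and the appeal to Remark \ref{remark:d1_dm} is uniform in $n$.
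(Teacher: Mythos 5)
Your proposal is correct and follows essentially the same route as the paper's proof: both track, for each prime $p_j$ dividing $|H|$, the argmax $d_j^\star$ over $H\setminus M_{p_j}$, invoke the convergence result already established for prime-order quotients, and then use Remark \ref{remark:d1_dm} together with the fact that the argmaxes jointly generate $H$. The one genuine divergence is in proving that last group-theoretic fact $\langle d_1^\star,\ldots,d_k^\star\rangle = H$: the paper exhibits an explicit generator $\delta=\sum_i(\prod_{j\ne i}q_j)d'_i$ and verifies $\delta\not\equiv 0\pmod{q_i}$ for each $i$, whereas you argue abstractly that any subgroup of the finite cyclic group $H$ meeting the complement of every maximal subgroup cannot be proper and hence equals $H$. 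Your version is shorter and isolates the minimal combinatorial input; the paper's version is more concrete. You also correctly flag the $n$-dependence of the argmaxes and correctly observe that the generation claim is a pointwise fact holding at every stage, a subtlety the paper leaves implicit.
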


\begin{proof}
Similar to the proof of Lemma \ref{Zd_0_1_valued}, let $H=\langle d\rangle$ and let $M$ be any maximal subgroup of $H$. Define
\begin{align}\label{eqn:d_maximal_sg2}
d^\prime=\arg\max_{\substack{a\in H\\a\notin M}} Z_a(W)
\end{align}
It is relatively straightforward to show that in the general case as well, $Z^n_{d'}(W)$ converges to a $\{0,1\}$-valued random variable $Z^\infty_{d'}(W)$. Indeed this part of the proof of Lemma \ref{Zd_0_1_valued} is general enough for arbitrary Abelian groups. Here we show that this implies $Z^n_d(W)$ also converges to a Bernoulli random variable.\\ \\
Let $|H|=\prod_{i=1}^k q_i^{a_i}$ where $q_i$'s are distinct primes and $a_i$'s are positive integers. Note that $H$ is isomorphic to the cyclic group $\mathds{Z}_{|H|}$. For $i=1,\cdots,k$, define the subgroup $M_i=\langle q_i\rangle$ of $\mathds{Z}_{|H|}$ (and isomorphically of $H$) and let $d'_i=\arg\max_{\substack{a\in H\\a\notin M_i}} Z_a(W)$. Note that for $i=1,\cdots,k$, $M_i$ is a maximal subgroup of $\mathds{Z}_{|H|}$ (and isomorphically of $H$). Therefore, for $i=1,\cdots,k$, $Z^n_{d'_i}(W)$ converges to a $\{0,1\}$-valued random variable. If for some $i=1,\cdots,k$, $Z_{d'_i}(W)<\epsilon$ it follows that $Z_d(W)<\epsilon$ (since $d\in H\backslash M_i$) and if for all $i=1,\cdots,k$, $Z_{d'_i}(W)>1-\epsilon$, it follows from Remark \ref{remark:d1_dm} that $Z_{\tilde{d}}(W)>1-O(\epsilon)$ for any $\tilde{d}\in\langle d'_1,d'_2,\cdots,d'_k\rangle$. Next, we show that $\langle d'_1,d'_2,\cdots,d'_k\rangle=H$ and this will prove that if for all $i=1,\cdots,k$, $Z_{d'_i}(W)>1-\epsilon$ then $Z_{d}(W)>1-O(\epsilon)$. For $i=1,\cdots,k$, since $d'_i\notin M_i$ it follows that $d'_i \not \equiv 0 \pmod{q_i}$. Define
\begin{align*}
\delta=\sum_{i=1}^k \left(\prod_{\substack{j=1\\j\ne i}}^k q_j\right) d'_i
\end{align*}
Then we have $\delta\not\equiv 0 \pmod{q_i}$ for all $i=1,\cdots,k$. This implies $\langle \delta\rangle=H$ and hence $\langle d'_1,d'_2,\cdots,d'_k\rangle=H$. Therefore, if in the limit $Z_{d'_i}(W)=0$ for some $i=1,\cdots,k$ then $Z_d(W)=0$ and if $Z_{d'_i}(W)=0$ for all $i=1,\cdots,k$ then $Z_d(W)=1$. This proves that $Z_d^n(W)$ converges to a Bernoulli random variable.\\ \\
If $\tilde{d}\in\G$ is such that $\langle \tilde{d}\rangle=\langle d\rangle$ then it follows that $\tilde{d}\in H$ and $\tilde{d}\notin M_i$ for $i=1,\cdots,k$. Therefore if in the limit $Z_{d'_i}(W)=0$ for some $i=1,\cdots,k$ then $Z_{\tilde{d}}(W)=0$ and if $Z_{d'_i}(W)=0$ for all $i=1,\cdots,k$ then $Z_{\tilde{d}}(W)=1$. This proves that the random processes $Z^n_{\tilde{d}}(W)$ and $Z^n_d(W)$ converge to the same random variable.
\end{proof}


%

In the asymptotic regime, let $d_1,d_2,\cdots,d_m$ be all elements of $\G$ such that $Z_{d_i}(W)=1$ and assume that for all other elements $d\in \G$, $Z_d(W)= 0$ (we can make this assumption since in the limit $Z_d$'s are $\{0,1\}$-valued). We have seen that if $Z_{d_i}(W)=1$ for $i=1,\cdots,m$ then for any $\tilde{d}\in\langle d_1,d_2,\cdots,d_m\rangle$, $Z_{\tilde{d}}(W)=1$. Therefore, $\langle d_1,d_2,\cdots,d_m\rangle\subseteq \{d_1,d_2,\cdots,d_m\}$ and hence $\{d_1,d_2,\cdots,d_m\}=\langle d_1,d_2,\cdots,d_m\rangle=H$ for some subgroup $H$ of $\G$. This means all possible asymptotic cases can be indexed by subgroups of $\G$. i.e. for any $H\le \G$, one possible asymptotic case is
\begin{itemize}
\item \textbf{Case $H$: }$Z_d(W)=\left\{ \begin{array}{ll}
         1 & \mbox{if $d\in H$};\\
        0 & \mbox{Otherwise}.\end{array} \right.$
\end{itemize}
where for $H\le \G$, case $H$ happens with some probability $p_H$.\\
Next, We study the behavior of $I^n$ in each of these cases.

\begin{lemma}
For a channel $(\G,\mathcal{Y},W)$ and for a subgroup $S$ of $\G$, if $Z_d>1-\epsilon$ for $d\in S$ and $Z_d<\epsilon$ for $d\notin S$, then $I^0(W)\approx_{\epsilon}\log\frac{|\G|}{|S|}$.
\end{lemma}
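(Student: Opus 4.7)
The plan is to reduce the arbitrary-group statement to the prime-quotient dichotomy already extracted in the proof of Lemma \ref{Zd_0_1_valued}. First, I would choose a chain of subgroups
\begin{align*}
\{0\}=S_0\le S_1\le \cdots\le S_k=S\le S_{k+1}\le \cdots\le S_m=\G
\end{align*}
in which every $S_i$ is a maximal subgroup of $S_{i+1}$, so that $|S_{i+1}|/|S_i|$ is prime for each $i$. Such a refinement exists because $\G$ is a finite Abelian group and maximal subgroups of finite Abelian groups have prime index; one simply refines a composition series of $S$ and extends it to one of $\G$ passing through $S$. Using the definitions $I_\G(W)=I^0(W)$ and $I_{\{0\}}(W)=0$, I would then telescope
\begin{align*}
I^0(W)=\sum_{i=0}^{m-1}\bigl(I_{S_{i+1}}(W)-I_{S_i}(W)\bigr).
\end{align*}

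For each increment, I would invoke the prime-quotient dichotomy from the proof of Lemma \ref{Zd_0_1_valued}. With $H=S_{i+1}$, $M=S_i$, and $d_i'=\argmax_{a\in S_{i+1}\setminus S_i}Z_a(W)$, the auxiliary channel $\bar W$ from \eqref{eq:W_bar_def} has a prime-size input alphabet. Then the Bhattacharyya bounds established in Appendices \ref{section:Upper_Bound_ZWbar} and \ref{section:Lower_Bound_ZWbar1}, combined with \cite[Prop.~3]{sasoglu_polar_q}, yield the following dichotomy: if $Z_{d_i'}(W)<\epsilon$, then $I_{S_{i+1}}(W)-I_{S_i}(W)\approx_\epsilon \log(|S_{i+1}|/|S_i|)$; and if $Z_{d_i'}(W)>1-\epsilon$, then $I_{S_{i+1}}(W)-I_{S_i}(W)\approx_\epsilon 0$.

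Next, I would classify the increments using the hypothesis on $W$. For $i<k$ we have $S_{i+1}\subseteq S$, so every $a\in S_{i+1}\setminus S_i$ lies in $S$ and hence $Z_a(W)>1-\epsilon$; in particular $Z_{d_i'}(W)>1-\epsilon$, and the increment is $O(\epsilon)$. For $i\ge k$ we have $S_i\supseteq S$, so every $a\in S_{i+1}\setminus S_i$ lies outside $S$ and hence $Z_a(W)<\epsilon$; in particular $Z_{d_i'}(W)<\epsilon$, and the increment is $\log(|S_{i+1}|/|S_i|)+O(\epsilon)$. Summing, and noting that $m\le \log_2|\G|$ is a constant so the finitely many $O(\epsilon)$ terms aggregate to $O(\epsilon)$, I would conclude
\begin{align*}
I^0(W)\approx_\epsilon \sum_{i=k}^{m-1}\log\frac{|S_{i+1}|}{|S_i|}=\log\frac{|\G|}{|S|},
\end{align*}
which is the claim.

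The main obstacle is confirming that the prime-quotient dichotomy extracted from the proof of Lemma \ref{Zd_0_1_valued} applies uniformly to every pair $(S_i,S_{i+1})$ encountered in the chain, regardless of the ambient group structure. The construction of $\bar W$ and the bounds on its Bhattacharyya parameter only use that $S_i$ is maximal in $S_{i+1}$ with prime index, so they should carry over without modification, but some care is needed to ensure the implicit constants in the $O(\epsilon)$ error depend only on $|\G|$ and not on the particular link in the chain, so that their aggregation over the $m\le \log_2|\G|$ steps remains $O(\epsilon)$.
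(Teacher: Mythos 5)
Your proposal is correct and follows essentially the same route as the paper: both choose a composition series of $\G$ passing through $S$ with maximal (prime-index) links, telescope $I^0(W)$ over the chain $I_{S_{i+1}}(W)-I_{S_i}(W)$, and classify each increment via the auxiliary prime-input channel $\bar W$ from \eqref{eq:W_bar_def} together with the Bhattacharyya bounds of Appendices~\ref{section:Upper_Bound_ZWbar} and~\ref{section:Lower_Bound_ZWbar1} and Proposition~3 of \cite{sasoglu_polar_q}. The only cosmetic difference is that you state the dichotomy compactly in terms of $d_i'=\argmax_{a\in S_{i+1}\setminus S_i}Z_a(W)$ whereas the paper handles the two regimes of the chain separately, but the underlying steps are identical.
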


\begin{proof}
Let $0=M_0\subseteq M_1\subseteq \cdots \subseteq M_{t-1} \subseteq S=M_{t}\subseteq M_{t+1}\subseteq \cdots \G=M_{k}$ for some positive integer $k$ be any chain of subgroups such that $M_{s-1}$ is maximal in $M_{s}$ for $s=1,\cdots,k$.\\ \\
For $s=1,\cdots,t$ let $H=M_s$ and $M=M_{s-1}$ and let $T_{H}$ be a transversal of $H$ in $\G$ and let $T_{M}$ be a transversal of $M$ in $H$. For $d\in H$, we have $Z_d(W)>1-\epsilon$. For $t_{H}\in T_{H}$ define the channel $\bar{W}(y|t_H+t_M+M_{s-1})$ similar to \eqref{eq:W_bar_def}. We have shown in Appendix \ref{section:Lower_Bound_ZWbar1} that if for some $d\in H\backslash M$, $Z_d(W)>1-\epsilon$ then $Z_{d+t_H+M}(\bar{W})>1-O(\epsilon)$. Since the input alphabet of the channel $\bar{W}$ has a prime size, we can use \cite[Lemma 4]{sasoglu_polar_q} to conclude that $Z(\bar{W})>1-O(\epsilon)$. Now we use \cite[Prop. 3]{sasoglu_polar_q} to conclude $I(\bar{W})<O(\epsilon)$. This result is valid for all $t_H\in T_H$. Since $I(\bar{W})=I(\hat{\tilde{X}};Y|\hat{X}=t_H)$, we conclude that
\begin{align*}
I_H(W)-I_M(W)&=\sum_{t_H\in T_H} P(\hat{X}=t_H) I(\hat{\tilde{X}};Y|\hat{X}=t_H)\\
&<O(\epsilon)
\end{align*}
Therefore, for $s=1,\cdots,t$, $I_{M_s}(W)-I_{M_{s-1}}(W)\approx_{\epsilon}0$ and hence, $I_{M_t}(W)=I_S(W)\approx_\epsilon I_{M_0}(W)=0$.\\ \\
For $s=t+1,\cdots,k$ let $H=M_s$ and $M=M_{s-1}$ and let $T_{H}$ be a transversal of $H$ in $\G$ and let $T_{M}$ be a transversal of $M$ in $H$. For $d\in H\backslash M$, we have $Z_d(W)<\epsilon$. For the channel $\bar{W}$ defined as above, we have shown in Appendix \ref{section:Upper_Bound_ZWbar} that if for all $d\in H\backslash M$, $Z_d(W)<\epsilon$ then $Z(\bar{W})<O(\epsilon)$. Therefore, \cite[Prop. 3]{sasoglu_polar_q} implies $I(\bar{W})=\log \frac{|H|}{|M|}-O(\epsilon)$. Similar as above, we conclude that
\begin{align*}
I_H(W)-I_M(W)&=\log \frac{|H|}{|M|}-O(\epsilon)
\end{align*}
Therefore, for $s=t+1,\cdots,k$, $I_{M_s}(W)-I_{M_{s-1}}(W)\approx_{\epsilon}\log\frac{|M_s|}{|M_{s-1}|}$ and hence
\begin{align*}
I_{\G}(W)-I_{S}(W)&\approx_\epsilon \sum_{s=t+1}^k\log \frac{|M_s|}{|M_{s-1}|}\\
&=\log\frac{|\G|}{|S|}
\end{align*}
Since $I_S(W)\approx_\epsilon 0$, We conclude that $I^0(W)=I_{\G}(W)\approx_\epsilon \log\frac{|\G|}{|S|}$.
\end{proof}

We have shown that the process $I^n$ converges to the following discrete random variable: $I^{\infty}=\log\frac{|\G|}{|H|}$ with probability $p_H$ for $H\le \G$.\\
For $H\le \G$, define the random variable $Z^H(W_N^{(i)})=\sum_{d\notin H} Z_d(W_N^{(i)})$ and the random process $(Z^H)^{(n)}(W)=Z^H(W_N^{(J_n)})$ where $J_n$ is a uniform random variable over $\{1,2,\cdots,N=2^n\}$. Note that $(Z^H)^{(n)}(W)$ converges almost surely to a random variable $(Z^H)^{(\infty)}(W)$ and $P\left((Z^H)^{(\infty)}=0\right)=\sum_{S\le H}p_S$.\\
\subsubsection{Summary of Channel Transformation}
For the channel $(\G,\mathcal{Y},W)$, the convergence of the processes $I^n$ and $(Z^H)^n$ for $H\le \G$ implies that for all $\epsilon>0$, there exists a number $N=N(\epsilon)$ and a partition $\{A_H^\epsilon|H\le \G\}$ of $\{1,\cdots,N\}$ such that for $H\le \G$ and $i\in A_H^\epsilon$, $I(W_N^{(i)})=\log\frac{|\G|}{|H|}+O(\epsilon)$ and $Z^H(W_N^{(i)})=O(\epsilon)$. Moreover, as $\epsilon\rightarrow 0$, $\frac{|A_H^\epsilon|}{N}\rightarrow p_H$ for some probabilities $p_H, H\le \G$.\\
In Appendix \ref{section:Rate_of_Polarization}, we show that for any $\beta<\frac{1}{2}$ and for $H\le \G$,
\begin{align}\label{rate_of_pol}
\lim_{n\rightarrow \infty}P\left((Z^H)^{(n)}<2^{-2^{\beta n}}\right)&\ge P\left((Z^H)^{(\infty)}=0\right)\\
\nonumber &=\sum_{S\le H}^r p_S
\end{align}
This implies that for all $\epsilon>0$, there exists a number $N=N(\epsilon)=2^{n(\epsilon)}$ and a partition 
$\{A_H^\epsilon|H\le \G\}$ of $\{1,\cdots,N\}$ such that for $H\le \G$ and $i\in A_H^\epsilon$, $I(W_N^{(i)})=\log\frac{|\G|}{|H|}+O(\epsilon)$ and $Z^H(W_N^{(i)})<2^{-2^{\beta n(\epsilon)}}$. Moreover, as $\epsilon\rightarrow 0$, $\frac{|A_H^\epsilon|}{N}\rightarrow p_H$ for some probabilities $p_H, H\le \G$.
\subsection{Encoding and Decoding}
At the encoder, if $i\in A_H^\epsilon$ for some $H\le \G$, the information symbol is chosen from the transversal $T_H$ arbitrarily. Let $\mathcal{X}_N^\epsilon=\bigoplus_{H\le \G} T_H^{A_H^\epsilon}$ be the set of all valid input sequences. As in the $\mathds{Z}_{p^r}$ case, the message $u_1^N$ is dithered with a uniformly distributed random vector $b_1^N\in\bigoplus_{H\le \G} H^{A_H^\epsilon}$ revealed to both the encoder and the decoder. A message $v_1^N\in\mathcal{X}_N^\epsilon$ is encoded to the vector $x_1^N=(v_1^N+b_1^N) G_N$. Note that $u_1^N=v_1^N+b_1^N$ is uniformly distributed over $\G^N$.\\
At the decoder, after observing the output vector $y_1^N$, for $H\le \G$ and $i\in A_H^\epsilon$ , use the following decoding rule:
\begin{align*}
\hat{u}_i =f_i(y_1^N,\hat{u}_1^{i-1})=\argmax_{g\in b_i+T_H}W_N^{(i)}(y_1^N,\hat{u}_1^{i-1}|g)
\end{align*}

And finally, the message is recovered as $\hat{v}_1^N=\hat{u}_1^N-b_1^N$.\\
The total number of valid input sequences is equal to
\begin{align*}
&2^{NR}=\prod_{H\le \G} |T_H|^{|A_H|}=\prod_{H\le \G} \left(\frac{|\G|}{|H|}\right)^{|A_H|}
\end{align*}

Therefore the rate is equal to $R=\sum_{H\le \G} \frac{|A_H|}{N}\log\frac{|\G|}{|H|}$. On the other hand, since $I^n$ is a martingale, we have $\mathds{E}\{I^\infty\}=I^0$. Since $\mathds{E}\{I^\infty\}=\sum_{H\le \G} p_H \log\frac{|\G|}{|H|}$, we observe that the rate $R$ converges to the symmetric capacity $I^0$ as $\epsilon\rightarrow 0$. We will see in the next section that this rate is achievable.

\subsection{Error Analysis}
For $H\le G$ and $i\in A_H^\epsilon$, define the events $B_i$ and $E_i$ according to Equations \eqref{eqn:Bi} and \eqref{eqn:Ei}. 
%
Similar to the $\mathds{Z}_{p^r}$ case, it is straightforward to show that for $H\le G$ and $i\in A_H^\epsilon$, $P(E_i)\le q^2 Z^H(W_N^{(i)})$ where $q=|\G|$. The probability of block error is given by $P(err) = \sum_{H\le \G} \sum_{i\in A_H^\epsilon}P(B_i)$. Since $B_i\subseteq E_i$, we get
\begin{align*}
P(err) &\le\sum_{H\le \G} \sum_{i\in A_H^\epsilon} q^2 Z^H(W_N^{(i)})\\
&\le \sum_{H\le \G} |A_H^\epsilon| q^2 2^{-2^{\beta n}}\\
&\le q^2 N 2^{-2^{\beta n}}
\end{align*}
for any $\beta< \frac{1}{2}$. Therefore, the probability of block error goes to zero as $\epsilon\rightarrow 0$ ($n\rightarrow \infty$).

\section{Relation to Group Codes}\label{section:examples}
Recall that for an arbitrary group $\G$, the polar encoder of length $N$ introduced in this paper maps the set $\bigoplus_{H\le \G}T_H^{A_H}$ to $\G^N$ where for a subgroup $H$ of $\G$, $T_H$ is a transversal of $H$ and $\{A_H|H\le \G\}$ is some partition of $\{1,\cdots,N\}$. Note that the set of messages $\bigoplus_{H\le \G}T_H^{A_H}$ is not necessarily closed under addition and hence in general, the set of encoder outputs is not a subgroup of $\G^N$; i.e. polar codes constructed and analyzed in Sections \ref{rings} and \ref{section:abelian} are not group encoders. On the contrary, the standard polar codes (i.e. polar codes in which only perfect channels are used) are indeed group codes since their set of messages is of the form $\G^A\oplus \{0\}^{\{1,\cdots,N\}\backslash A}$ for some $A\subseteq \{1,\cdots,N\}$ which is closed under addition.\\


It is worth mentioning that polar encoders constructed in this paper fall into a larger class of structured codes called \emph{nested group codes}. Nested group codes consist of two group codes: the inner code $\mathds{C}_i$ and the outer code $\mathds{C}_o$ such that the inner code is a subgroup of the outer code ($\mathds{C}_i\le \mathds{C}_o$). The set of messages consists of cosets of $\mathds{C}_i$ in $\mathds{C}_o$. For the case of polar codes, the inner code is given by
\begin{align*}
\mathds{C}_i&=\left[\bigoplus_{H\le \G}H^{A_H}\right]G\\
&=\left\{mG\left|m\in \bigoplus_{H\le \G}H^{A_H}\right.\right\}
\end{align*}
and the outer code is the whole group space: $\mathds{C}_o=\G^N$. To verify that this is indeed the case, it suffices to show that the set of codewords of polar codes $\left[\bigoplus_{H\le \G}T_H^{A_H}\right]G$ has only one common element with each coset of $\mathds{C}_i$. Equivalently, it suffices to show that for $m_1,m_2\in\G^N$, if $m_1G-m_2G\in \mathds{C}_i$, then either $m_1\notin \bigoplus_{H\le \G}T_H^{A_H}$ or $m_2\notin \bigoplus_{H\le \G}T_H^{A_H}$.
\begin{lemma}
For $N=2^n$ where $n$ is a positive integer, the generator matrix corresponding to polar codes $G_N=B_NF^{\otimes n}$ is full rank.
\end{lemma}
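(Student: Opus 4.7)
The plan is to interpret ``full rank'' in the context of Abelian group codes as meaning that the map $u_1^N \mapsto u_1^N G_N$ from $\G^N$ to $\G^N$ is a bijection. Since $B_N$ is a permutation matrix its action on $\G^N$ is trivially a bijection (with inverse $B_N^T$), so it suffices to prove that the linear map induced by $F^{\otimes n}$ on $\G^N$ is a bijection.

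First I would observe that $F = \left[\begin{smallmatrix}1 & 0\\ 1 & 1\end{smallmatrix}\right]$ has $\mathbb{Z}$-inverse $F^{-1} = \left[\begin{smallmatrix}1 & 0\\ -1 & 1\end{smallmatrix}\right]$, which has entries in $\{-1,0,1\}$. Because every Abelian group $\G$ is a $\mathbb{Z}$-module, any integer matrix induces a well-defined endomorphism of $\G^N$ via the usual formula, and composition of such endomorphisms corresponds to integer matrix multiplication. In particular, the maps induced on $\G^2$ by $F$ and $F^{-1}$, namely $(u_1,u_2)\mapsto(u_1+u_2,u_2)$ and $(x_1,x_2)\mapsto(x_1-x_2,x_2)$, are mutually inverse bijections of $\G^2$. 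Thus $F$ acts as a bijection on $\G^2$.

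Next I would lift this to the $n$-fold Kronecker power. Using the standard identity $(A\otimes B)^{-1} = A^{-1}\otimes B^{-1}$ (valid over $\mathbb{Z}$ whenever both inverses exist), induction on $n$ yields $(F^{\otimes n})^{-1} = (F^{-1})^{\otimes n}$. Again this inverse is $\{-1,0,1\}$-valued and therefore induces a well-defined endomorphism of $\G^N$. Since integer-matrix multiplication transports to composition of induced maps, $F^{\otimes n}$ and $(F^{-1})^{\otimes n}$ are mutually inverse on $\G^N$, so $F^{\otimes n}$ is a bijection there. Composing with the permutation $B_N$, the map $G_N = B_N F^{\otimes n}$ is a bijection of $\G^N$, which is the desired full-rank property.

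The only real subtlety in this argument is the justification that integer matrices with negative entries act meaningfully on a group code, and that the formal matrix identities continue to govern the induced maps. This is handled by the observation that any Abelian group is a $\mathbb{Z}$-module together with the convention introduced in Appendix \ref{section:polar_abelian}; no structure beyond additive inverses is needed, so the argument works uniformly for every Abelian input alphabet considered in the paper.
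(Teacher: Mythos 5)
Your proof is correct but takes a genuinely different route from the paper's. The paper reduces to $F^{\otimes n}$ via the permutation $B_N$, invokes the identity $\mbox{rank}(A\otimes B)=\mbox{rank}(A)\cdot\mbox{rank}(B)$, and concludes from $\mbox{rank}(F)=2$ that $\mbox{rank}(F^{\otimes n})=2^n=N$. You instead exhibit the explicit $\mathds{Z}$-inverse $F^{-1}=\left[\begin{smallmatrix}1&0\\-1&1\end{smallmatrix}\right]$, propagate it through $(A\otimes B)^{-1}=A^{-1}\otimes B^{-1}$, and use the $\mathds{Z}$-module structure of an arbitrary Abelian group to conclude that $F^{\otimes n}$---and hence $G_N$---acts bijectively on $\G^N$. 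Your route is arguably the more careful one for the setting at hand: the role this lemma plays in the text is to guarantee that $m\mapsto mG_N$ is injective on $\G^N$ for an arbitrary Abelian group $\G$, and full rank over a field does not by itself secure this when $\G$ has torsion (a $3\times 3$ $\{0,1\}$-matrix with determinant $2$, for instance, is full rank over $\mathbb{R}$ yet fails to be injective on $\mathds{Z}_4^3$). The paper's rank argument is ultimately sound because $F$ is lower-unitriangular, so $\det(F^{\otimes n})=1$ is a unit in $\mathds{Z}$; your construction of the integer inverse is precisely what makes that unit-determinant fact explicit, at the cost of a somewhat longer write-up.
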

\begin{proof}
Since $G_N=B_NF^{\otimes n}$ where $B_N$ is a permutation of rows, it suffices to show that $F^{\otimes n}$ is full rank. Note that the rank of the Kronecker product of two matrices is equal to the product of the ranks of matrices and the rank of $F$ is equal to $2$. Hence we have $\mbox{rank}(G)=\mbox{rank}(F^{\otimes n})=2^n=N$.
\end{proof}

This lemma implies that if $m_1G-m_2G\in \mathds{C}_i$ then $m_1-m_2\in \bigoplus_{H\le \G}H^{A_H}$. This means either $m_1\notin \bigoplus_{H\le \G}T_H^{A_H}$ or $m_2\notin \bigoplus_{H\le \G}T_H^{A_H}$. This proves that polar codes are indeed nested group codes.\\

In this section, we consider two examples of channels over $\mathds{Z}_4$. The first example is Channel 1 introduced in Section \ref{section:example}. Based on the symmetry of this channel, we show that polar codes achieve the group capacity of this specific channel. The intent of the second example is to show that in general, polar codes do not achieve the group capacity of channels. In order to find the capacity of polar codes as group codes, we use the standard construction of polar codes, i.e. we only use perfect channels and fix partially perfect and useless channels.
\subsection{Example 1}
Consider Channel 1 of Figure \ref{fig:channel}. Define $H_0=\{0,1,2,3\}$, $H_1=\{0,2\}$ and $H_2=\{0\}$ and define $K_0=\{1,3\}$, $K_1=\{2\}$ and $K_2=\{0\}$. For this channel we have:
\begin{align*}
&I^0\triangleq I(X;Y)=2-\epsilon-2\lambda\\
&I_2^0\triangleq I(X_1;Y)=1-(\epsilon+\lambda)\\
&(I_2^\prime)^0\triangleq I(X_1';Y)=1-(\epsilon+\lambda)=I_2^0
\end{align*}
where $X$ is uniform over $\mathds{Z}_4$, $X_1$ is uniform over $H_1$ and $X_1'$ is uniform over $1+H_1$. The capacity of group codes over this symmetric channel is equal to \cite{abelianp2p_ieee2}:
\begin{align*}
C&=\min(I_4^0,I_2^0+(I_2^\prime)^0)=\min(2-\epsilon-2\lambda,2-2\epsilon-2\lambda)\\
&=2-2\epsilon-2\lambda
\end{align*}

All possible cases for this channel are
\begin{itemize}
\item \textbf{Case 0:} $Z_1^\infty=Z_3^\infty=1,Z_2^\infty=1$
\item \textbf{Case 1:} $Z_1^\infty=Z_3^\infty=0,Z_2^\infty=1$
\item \textbf{Case 2:} $Z_1^\infty=Z_3^\infty=0,Z_2^\infty=0$
\end{itemize}

As we saw in Figures \ref{fig:pol2} and \ref{fig:pol}, this result agrees with the asymptotic behavior of $I^n$ predicted by the recursion formulas (\ref{eqn:recursion1}) and (\ref{eqn:recursion2}). \\

Define $I(W^{b_1b_2\cdots b_n})=I(X;Y)$ where $X$, $Y$ are the input and output of $W^{b_1b_2\cdots b_n}$ and $X$ is uniform over $\mathds{Z}_4$. Similarly, define $I_2(W^{b_1b_2\cdots b_n})=I(X_1;Y)$ where $X_1$, $Y$ are the input and output of $W^{b_1b_2\cdots b_n}$ and $X_1$ is uniform over $H_1$ and define $I_2^\prime(W^{b_1b_2\cdots b_n})=I(X_1';Y)$ where $X_1'$, $Y$ are the input and output of $W^{b_1b_2\cdots b_n}$ and $X_1'$ is uniform over $1+H_1$. Define the mutual information processes $I_4^n$, $I_2^n$ and $(I_2')^n$ to be equal to $I(W^{b_1b_2\cdots b_n})$, $I_2(W^{b_1b_2\cdots b_n})$ and $I_2^\prime(W^{b_1b_2\cdots b_n})$ where for $i=1,\cdots,n$, $b_i$'s are iid Bernoulli$(0.5)$ random variables. For this channel, we can show that $I_2(W^{b_1b_2\cdots b_n})=I_2^\prime(W^{b_1b_2\cdots b_n})=1-(\epsilon_n+\lambda_n)$ and conclude that $(I_2+I_2^\prime)^n\triangleq I_2^n+(I_2^\prime)^n$ is a martingale. Therefore $I_4^n$ and $(I_2+I_2^\prime)^n$ converge almost surely to random variables $I_4^{\infty}$ and $(I_2+I_2^\prime)^{\infty}$ respectively. This observation provides us with an ad-hoc way to find the probabilities $p_t$, $t=0,1,2$ of the limit random variable $I_4^\infty$ for this simple channel. We can show the following for the final states:
\begin{itemize}
\item \textbf{case 0} $\Rightarrow I_4^\infty=0, (I_2+I_2^\prime)^\infty=0$
\item \textbf{case 1} $\Rightarrow I_4^\infty=1, (I_2+I_2^\prime)^\infty=0$
\item \textbf{case 2} $\Rightarrow I_4^\infty=2, (I_2+I_2^\prime)^\infty=2$
\end{itemize}

Therefore we obtain the following three equations:
\begin{align*}
&\mathds{E}\{I_4^\infty\}=p_0\cdot 0+p_1\cdot 1+p_2\cdot 2=I_4^0=2-\epsilon-2\lambda\\
&\mathds{E}\{(I_2+I_2^\prime)^\infty\}=p_0\cdot 0+p_1\cdot 0+p_2\cdot 2=(I_2+I_2^\prime)^0=2-2\epsilon-2\lambda\\
&p_0+p_1+p_2=1
\end{align*}

Solving this system of equations, we obtain:
\begin{align*}
&p_2=1-\epsilon-\lambda=C/2\\
&p_1=I_4^0-(I_2+I_2^\prime)^0\\
&p_0=1-\left(I_4^0-(I_2+I_2^\prime)^0/2\right)
\end{align*}

We see that the fraction of perfect channels is equal to the capacity of the channel achievable using group codes and therefore, polar codes achieve the capacity of group codes for this channel.
\subsection{Example 2}
The channel is depicted in Figure \ref{fig:channel2}. We call This Channel 3.
\begin{figure}[h]
\centering
\includegraphics[scale=.55]{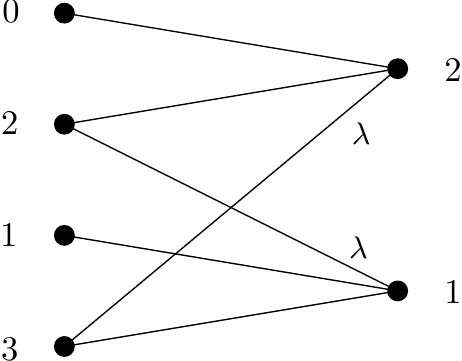}
\caption{\small Channel 3}
\label{fig:channel2}
\end{figure}
For this channel, when $\lambda=0.2$ we have:
\begin{align*}
&I^0=I(X;Y)=0.6390\\
&(I_2^0+I_2^\prime)^0=0.2161
\end{align*}
The rate $C=\min(I_4^0,(I_2+I_2^\prime)^0)=(I_2+I_2^\prime)^0=0.2161$ is achievable using group codes over this channel \cite{abelianp2p_ieee2}.\\
For this channel we have three possible asymptotic case:
\begin{itemize}
\item \textbf{Case 0:} $Z_1^\infty=1,Z_2^\infty=1\Rightarrow I_4^\infty=0, (I_2+I_2^\prime)^\infty=0$
\item \textbf{Case 1:} $Z_1^\infty=0,Z_2^\infty=1\Rightarrow I_4^\infty=1, (I_2+I_2^\prime)^\infty=0$
\item \textbf{Case 2:} $Z_1^\infty=0,Z_2^\infty=0\Rightarrow I_4^\infty=2, (I_2+I_2^\prime)^\infty=2$
\end{itemize}
Therefore we obtain the following three equations:
\begin{align*}
&\mathds{E}\{I_4^\infty\}=p_0\cdot 0+p_1\cdot 1+p_2\cdot 2\\
&\mathds{E}\{(I_2+I_2^\prime)^\infty\}=p_0\cdot 0+p_1\cdot 0+p_2\cdot 2\\
&p_0+p_1+p_2=1
\end{align*}
Therefore, the achievable rate using polar codes over this channel is equal to $R=2p_2=\mathds{E}\{(I_2+I_2^\prime)^\infty\}$. We have $\mathds{E}\{(I_2+I_2^\prime)^1\}=0.2063$ which is strictly less than $(I_2+I_2^\prime)^0$. The following lemma implies $R=\mathds{E}\{(I_2+I_2^\prime)^\infty\}\le \mathds{E}\{(I_2+I_2^\prime)^1\} <C=(I_2+I_2^\prime)^0$ and completes the proof.

\begin{lemma}
For a channel $(\mathds{Z}_4,\mathcal{Y},W)$, the process $(I_2+I^\prime_2)^n, n=0,1,2,\cdots$ is a super-martingale.
\end{lemma}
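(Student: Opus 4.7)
The plan is to reduce the claim to the general super\-martingale property already proved in Lemma \ref{lemma:supermartingale} by identifying $I_2 + I_2'$ with (twice) the conditional mutual information $I_H$ for the specific subgroup $H = H_1 = \{0,2\}$ of $\mathds{Z}_4$. The key observation is that the only two cosets of $H_1$ in $\mathds{Z}_4$ are $H_1 = \{0,2\}$ and $1+H_1 = \{1,3\}$, and that $I_2$ and $I_2'$ are simply the mutual informations when the channel input is restricted to be uniform on each of these cosets respectively.

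First, I would fix any channel $\tilde W$ with input alphabet $\mathds{Z}_4$ (which will later be instantiated as a synthesized channel) and let $X$ be uniform on $\mathds{Z}_4$ with output $Y$. Decompose $X = \hat X + \tilde X$ where $\hat X$ is uniform on the transversal $T = \{0,1\}$ of $H_1$ and $\tilde X$ is uniform on $H_1$, with $\hat X \perp \tilde X$. Then, conditioning on the value of $\hat X$,
\begin{align*}
I_{H_1}(\tilde W) \;=\; I(\tilde X;Y\mid \hat X)
\;=\; \tfrac{1}{2}\,I(\tilde X;Y\mid \hat X=0) + \tfrac{1}{2}\,I(\tilde X;Y\mid \hat X=1).
\end{align*}
When $\hat X = 0$, $X = \tilde X$ is uniform on $H_1$, so the conditional mutual information equals $I_2(\tilde W)$. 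When $\hat X = 1$, $X = 1 + \tilde X$ is uniform on $1+H_1$, and since $\tilde X \mapsto 1+\tilde X$ is a bijection between $H_1$ and $1+H_1$, the conditional mutual information equals $I(X_1';Y) = I_2'(\tilde W)$. Hence
\begin{align*}
I_2(\tilde W) + I_2'(\tilde W) \;=\; 2\, I_{H_1}(\tilde W).
\end{align*}

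Applying this identity to every synthesized channel $W_N^{(i)} = W^{b_1 b_2 \cdots b_n}$ with uniform random index (equivalently, i.i.d. Bernoulli $b_i$'s), we get $(I_2 + I_2')^n = 2\,I_{H_1}^n$ pointwise, as processes on the same underlying probability space. Since Lemma \ref{lemma:supermartingale} guarantees that $I_{H_1}^n$ is a super\-martingale for any subgroup $H_1$ of any Abelian group $\G$, and scaling by the constant $2$ preserves this property, we conclude that $(I_2+I_2')^n$ is a super\-martingale.

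There is really no hard step; the whole argument is a bookkeeping reduction to Lemma \ref{lemma:supermartingale}. The only thing one has to be careful about is verifying the coset decomposition for the $\hat X=1$ case, i.e.\ recognizing that $\tilde X$ and $X_1' = 1+\tilde X$ carry the same information about $Y$ so that $I(\tilde X;Y\mid \hat X=1) = I_2'(\tilde W)$, and checking that the identity $(I_2+I_2')(\tilde W) = 2 I_{H_1}(\tilde W)$ passes cleanly from the original channel to every synthesized channel (which it does, because the definitions of $I_2$, $I_2'$ and $I_{H_1}$ are generic functionals of a $\mathds{Z}_4$-input channel).
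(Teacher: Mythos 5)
Your proposal is correct and follows the same approach as the paper, whose one-line proof is simply "Follows from Lemma \ref{lemma:supermartingale} with $H=\{0,2\}$." You spell out the bookkeeping the paper leaves implicit, namely the identity $I_2(\tilde W)+I_2'(\tilde W)=2\,I_{H_1}(\tilde W)$ obtained by expanding $I(\tilde X;Y\mid \hat X)$ over the two values of $\hat X$, and that is exactly what makes the reduction valid.
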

\begin{proof}
Follow from Lemma \ref{lemma:supermartingale} with $H=\{0,2\}$.
\end{proof}

\section{Conclusion}
It has been shown that the original construction of polar codes suffices to achieve the symmetric capacity of discrete memoryless channels with arbitrary input alphabet sizes. It is shown that in general, channel polarization happens in several levels so that some synthesized channels are partially perfect and there needs to be a modification of the coding scheme to exploit these channels. It has also been shown that polar codes do not generally achieve the capacity of arbitrary channels achievable using group codes.

\appendix
\subsection{Polar Codes Over Abelian Groups}\label{section:polar_abelian}
Given a $k\times n$ matrix $G_n$ of $0$'s and $1$'s, one can construct a group code as follows: Given any message tuple $u^k\in G^k$, encode it to $u^k\cdot G_n$. Where the elements of $G_n$ determine whether an element of $u^k$ appears as a summand in the encoded word or not. For example consider the generator matrix
\begin{align*}
G_4=
\left( \begin{array}{cccc}
1 & 0 & 0 & 0\\
1 & 0 & 1 & 0\\
1 & 1 & 0 & 0\\
1 & 1 & 1 & 1\end{array} \right)
\end{align*}
Then $u^4\cdot G_4$ is defined as
\begin{align*}
[u_1 u_2 u_3 u_4]\cdot
\left( \begin{array}{cccc}
1 & 0 & 0 & 0\\
1 & 0 & 1 & 0\\
1 & 1 & 0 & 0\\
1 & 1 & 1 & 1\end{array} \right)
=\left( \begin{array}{c}
u_1+u_2+u_3+u_4\\
u_3+u_4\\
u_2+u_4\\
u_4\end{array} \right)
\end{align*}
Using this convention, we can define a group code based on a given binary matrix without actually defining a multiplication operation for the group.

\subsection{Recursion Formula for Channel 1}\label{appendix:recursion}
\subsubsection{Recursion for $W^+$}
We show that $W^+$ (corresponding to $b_1=1$) is equivalent to a channel of the same type as $W$ but with different parameters $\epsilon_1$ and $\lambda_1$ corresponding to $\epsilon$ and $\lambda$ respectively; where,
\begin{align*}
&\epsilon_1=\epsilon^2+2\epsilon\lambda\\
&\lambda_1=\lambda_1^2
\end{align*}
We say an output tuple $(y_1,y_2,u_1)$ is connected to an input $u_2\in \mathds{Z}_4$ if $W^+(y_1,y_2,u_1|u_2)=\frac{1}{4}W(y_1|u_1+u_2)W(y_2|u_2)$ is strictly positive.\\

First, let us assume the output tuple $(y_1,y_2,u_1)$ is connected to all $u_2\in \mathds{Z}_4$. Then $W(y_2|u_2)$ must be nonzero for all $u_2$ and hence $y_2=E_3$. Similarly since $W(y_1|u_1+u_2)$ is nonzero for all $u_2$ (and hence all $u_1+u_2$) it follows that $y_1=E_3$. Therefore $W^+(E_3,E_3,u_1|u_2)=\frac{1}{4}\lambda^2$ for all $u_1,u_2\in \mathds{Z}_4$ and these are all output tuples connected to all inputs (with positive probability). Since all of these output tuples are equivalent we can combine them to get a single output symbol connected to all four inputs with probability $\lambda^2$.\\

Next we show that if an output tuple is connected to an input from $\{0,2\}$ and an input from $\{1,3\}$, then it is connected to all inputs. Consider the case where the output tuple $(y_1,y_2,u_1)$ is connected to both $0$ and $1$ i.e. $W^+(y_1,y_2,u_1|0)$ and $W^+(y_1,y_2,u_1|1)$ are both nonzero. Then since $W(y_2|0)\ne 0$ and $W(y_2|1)\ne 0$, it follows that $y_2=E_3$. Similarly since $W(y_1|u_1)\ne 0$ and $W(y_1|u_1+1)\ne 0$, it follows that $y_1=E_3$. We have already seen that for all $u_1\in \mathds{Z}_4$, the output tuple $(E_3,E_3,u_1)$ is connected to all input symbols. The proof is similar for other three cases i.e. when $(y_1,y_2,u_1)$ is connected to $0$ and $3$, when $(y_1,y_2,u_1)$ is connected to $2$ and $1$, and when $(y_1,y_2,u_1)$ is connected to $2$ and $3$.\\

Next we find all output tuples which are connected to both $0$ and $2$ but are not connected to $1$ or $3$. Let $(y_1,y_2,u_1)$ be an output tuple such that $W^+(y_1,y_2,u_1|0)\ne 0$, $W^+(y_1,y_2,u_1|2)\ne 0$, $W^+(y_1,y_2,u_1|1)= 0$ and $W^+(y_1,y_2,u_1|3)= 0$.\\
First assume $u_1\in\{0,2\}$. Since $W(y_2|0)\ne 0$ and $W(y_2|2)\ne 0$, it follows that $y_2\in \{E_1,E_3\}$ and since $W(y_1|u_1)\ne 0$ and $W(y_1|u_1+2)\ne 0$, it follows that $y_1\in \{E_1,E_3\}$. Note that for $y_1=E_3$ and $y_3=E_3$, the output tuple is connected to all inputs and therefore all possible cases are $y_1=E_1, y_2=E_1$, $y_1=E_1, y_2=E_3$ and $y_1=E_3, y_2=E_1$. In all cases it can be shown that $W^+(y_1,y_2,u_1|1)= 0$ and $W^+(y_1,y_2,u_1|3)= 0$. Hence for $u_1\in\{0,2\}$, $(E_1,E_1,u_1)$ is connected to $0$ and $2$ with probabilities $\frac{1}{4}\epsilon^2$ and is not connected to $1$ or $3$. $(E_1,E_3,u_1)$ is connected to $0$ and $2$ with probabilities $\frac{1}{4}\epsilon\lambda$ and is not connected to $1$ or $3$. $(E_3,E_1,u_1)$ is connected to $0$ and $2$ with probabilities $\frac{1}{4}\epsilon\lambda$ and is not connected to $1$ or $3$.\\
Now assume $u_1\in\{1,3\}$. Same as above we have $y_2\in \{E_1,E_3\}$ and since $W(y_1|u_1)\ne 0$ and $W(y_1|u_1+2)\ne 0$, it follows that $y_1\in \{E_2,E_3\}$. In this case, all possible cases are $y_1=E_2, y_2=E_1$, $y_1=E_2, y_2=E_3$ and $y_1=E_3, y_2=E_1$. In all cases it can be shown that $W^+(y_1,y_2,u_1|1)= 0$ and $W^+(y_1,y_2,u_1|3)= 0$. Hence for $u_1\in\{1,3\}$, $(E_2,E_1,u_1)$ is connected to $0$ and $2$ with probabilities $\frac{1}{4}\epsilon^2$ and is not connected to $1$ or $3$. $(E_2,E_3,u_1)$ is connected to $0$ and $2$ with probabilities $\frac{1}{4}\epsilon\lambda$ and is not connected to $1$ or $3$. $(E_3,E_1,u_1)$ is connected to $0$ and $2$ with probabilities $\frac{1}{4}\epsilon\lambda$ and is not connected to $1$ or $3$.\\
Therefore, there are four equivalent outputs connected to $0$ and $2$ with probabilities $\frac{1}{4}\epsilon^2$ and not connected to $1$ or $3$ and there are eight equivalent outputs connected to $0$ and $2$ with probabilities $\frac{1}{4}\epsilon\lambda$ and not connected to $1$ or $3$. Since all of these outputs are equivalent, we can combine them into one output connected to $0$ and $2$ with probabilities
\begin{align*}
4\left(\frac{1}{4}\epsilon^2\right)+8\left(\frac{1}{4}\epsilon\lambda\right)=\epsilon^2+2\epsilon\lambda
\end{align*}

Now we find all output tuples which are connected to both $1$ and $3$ but are not connected to $0$ or $2$. Let $(y_1,y_2,u_1)$ be an output tuple such that $W^+(y_1,y_2,u_1|1)\ne 0$, $W^+(y_1,y_2,u_1|3)\ne 0$, $W^+(y_1,y_2,u_1|0)= 0$ and $W^+(y_1,y_2,u_1|2)= 0$.\\
First assume $u_1\in\{0,2\}$. Since $W(y_2|1)\ne 0$ and $W(y_2|3)\ne 0$, it follows that $y_2\in \{E_2,E_3\}$ and since $W(y_1|u_1+1)\ne 0$ and $W(y_1|u_1+3)\ne 0$, it follows that $y_1\in \{E_2,E_3\}$. Note that for $y_1=E_3$ and $y_3=E_3$, the output tuple is connected to all inputs and therefore all possible cases are $y_1=E_2, y_2=E_2$, $y_1=E_2, y_2=E_3$ and $y_1=E_3, y_2=E_2$. In all cases it can be shown that $W^+(y_1,y_2,u_1|0)= 0$ and $W^+(y_1,y_2,u_1|2)= 0$. Hence for $u_1\in\{0,2\}$, $(E_2,E_2,u_1)$ is connected to $1$ and $3$ with probabilities $\frac{1}{4}\epsilon^2$ and is not connected to $0$ or $2$. $(E_2,E_3,u_1)$ is connected to $1$ and $3$ with probabilities $\frac{1}{4}\epsilon\lambda$ and is not connected to $0$ or $2$. $(E_3,E_2,u_1)$ is connected to $1$ and $3$ with probabilities $\frac{1}{4}\epsilon\lambda$ and is not connected to $0$ or $2$.\\
Now assume $u_1\in\{1,3\}$. Same as above we have $y_2\in \{E_2,E_3\}$ and since $W(y_1|u_1+1)\ne 0$ and $W(y_1|u_1+3)\ne 0$, it follows that $y_1\in \{E_1,E_3\}$. In this case, all possible cases are $y_1=E_1, y_2=E_2$, $y_1=E_1, y_2=E_3$ and $y_1=E_3, y_2=E_2$. In all cases it can be shown that $W^+(y_1,y_2,u_1|0)= 0$ and $W^+(y_1,y_2,u_1|2)= 0$. Hence for $u_1\in\{1,3\}$, $(E_1,E_2,u_1)$ is connected to $1$ and $3$ with probabilities $\frac{1}{4}\epsilon^2$ and is not connected to $0$ or $2$. $(E_1,E_3,u_1)$ is connected to $1$ and $3$ with probabilities $\frac{1}{4}\epsilon\lambda$ and is not connected to $0$ or $2$. $(E_3,E_2,u_1)$ is connected to $1$ and $3$ with probabilities $\frac{1}{4}\epsilon\lambda$ and is not connected to $0$ or $2$.\\
Therefore, there are four equivalent outputs connected to $1$ and $3$ with probabilities $\frac{1}{4}\epsilon^2$ and not connected to $0$ or $2$ and there are eight equivalent outputs connected to $1$ and $3$ with probabilities $\frac{1}{4}\epsilon\lambda$ and not connected to $0$ or $2$. Same as above, since all of these outputs are equivalent, we can combine them into one output connected to $1$ and $3$ with probabilities $\epsilon^2+2\epsilon\lambda$.\\

We have shown that there is (equivalently) one channel output (call it $E_3^+$) connected to all inputs $u_2\in\mathds{Z}_4$ with conditional probability $\lambda_1=\lambda^2$ and we have shown that if a channel output is connected to more that one input but is not connected to all inputs, it is either connected to $\{0,2\}$ and is not connected to $\{1,3\}$ (call it $E_1^+$) or it is connected to $\{0,2\}$ and is not connected to $\{1,3\}$ (call it $E_2^+$). $0$ and $2$ are connected to $E_1^+$ with probabilities $\epsilon_1=\epsilon^2+2\epsilon\lambda$ and $1$ and $3$ are connected to $E_2^+$ with probabilities $\epsilon_1=\epsilon^2+2\epsilon\lambda$. Then for each input $u_2\in \mathds{Z}_4$ these exist several outputs which are only connected to $u_2$ and not other inputs and whose sum of probabilities add up to $1-\epsilon_1-\lambda_1$. This completes the proof for $W^+$.\\

\subsubsection{Recursion for $W^-$}
We show that $W^-$ (corresponding to $b_1=0$) is equivalent to a channel of the same type as $W$ but with different parameters $\epsilon_1$ and $\lambda_1$ corresponding to $\epsilon$ and $\lambda$ respectively; where,
\begin{align*}
&\epsilon_1=2\epsilon-\left(\epsilon^2+2\epsilon\lambda\right)\\
&\lambda_1=2\lambda-\lambda_1^2
\end{align*}
Note that each channel output is a pair $(y_1,y_2)\in\{0,1,2,3,E_1,E_2,E_3\}^2$. The channel $W^-$ can be shown to be as following:\\
Output pairs $(0,0)$, $(1,1)$, $(2,2)$, $(3,3)$ are only connected to input $0$ each with conditional probability $\frac{1}{4}(1-\epsilon-\lambda)^2$. This is equivalent to one channel output only connected to $0$ with probability $(1-\epsilon-\lambda)^2$.\\
Output pairs $(0,2)$, $(1,3)$, $(2,0)$, $(3,1)$ are only connected to input $2$ each with conditional probability $\frac{1}{4}(1-\epsilon-\lambda)^2$. This is equivalent to one channel output only connected to $2$ with probability $(1-\epsilon-\lambda)^2$.\\
Output pairs $(0,3)$, $(1,0)$, $(2,1)$, $(3,2)$ are only connected to input $1$ each with conditional probability $\frac{1}{4}(1-\epsilon-\lambda)^2$. This is equivalent to one channel output only connected to $1$ with probability $(1-\epsilon-\lambda)^2$.\\
Output pairs $(0,1)$, $(1,2)$, $(2,3)$, $(3,0)$ are only connected to input $3$ each with conditional probability $\frac{1}{4}(1-\epsilon-\lambda)^2$. This is equivalent to one channel output only connected to $3$ with probability $(1-\epsilon-\lambda)^2$.\\
Output pairs $(0,E_1)$, $(1,E_2)$, $(2,E_1)$, $(3,E_2)$, $(E_1,0)$, $(E_1,2)$, $(E_2,1)$, $(E_2,3)$ are only connected to inputs $0$ and $2$ each with conditional probability $\frac{1}{4}\epsilon(1-\epsilon-\lambda)$. Output pairs $(E_1,E_1)$, $(E_2,E_2)$ are only connected to inputs $0$ and $2$ each with conditional probability $\frac{1}{2}\epsilon^2$. This is equivalent to one channel output only connected to $0$ and $2$ with probability
\begin{align*}
\epsilon_1&=8\times\frac{1}{4}\epsilon(1-\epsilon-\lambda)+2\times \frac{1}{2}\epsilon^2\\
&=2\epsilon-\left(\epsilon^2+2\epsilon\lambda\right)
\end{align*}
Output pairs $(0,E_2)$, $(1,E_1)$, $(2,E_2)$, $(3,E_1)$, $(E_1,1)$, $(E_1,3)$, $(E_2,0)$, $(E_2,2)$ are only connected to inputs $1$ and $3$ each with conditional probability $\frac{1}{4}\epsilon(1-\epsilon-\lambda)$. Output pairs $(E_1,E_2)$, $(E_2,E_1)$ are only connected to inputs $1$ and $3$ each with conditional probability $\frac{1}{2}\epsilon^2$. This is equivalent to one channel output only connected to $1$ and $3$ with probability $2\epsilon-\left(\epsilon^2+2\epsilon\lambda\right)$.\\
Output pairs $(0,E_3)$, $(1,E_3)$, $(2,E_3)$, $(3,E_3)$, $(E_3,0)$, $(E_3,1)$, $(E_3,2)$, $(E_3,3)$ are  connected to all inputs each with conditional probability $\frac{1}{4}\lambda(1-\epsilon-\lambda)$. Output pairs $(E_1,E_3)$, $(E_2,E_3)$, $(E_3,E_1)$, $(E_3,E_2)$ are connected to all inputs each with conditional probability $\frac{1}{2}\epsilon\lambda$. Output pair $(E_3,E_3)$ is connected to all inputs with conditional probability $\lambda^2$. This is equivalent to one channel output only connected to all inputs with probability
\begin{align*}
\epsilon_1&=8\times\frac{1}{4}\lambda(1-\epsilon-\lambda)+4\times \frac{1}{2}\epsilon\lambda+\lambda^2\\
&=2\lambda-\lambda^2
\end{align*}
We have listed all $49$ channel outputs and the corresponding probabilities. This completes the proof for $W^-$.\\
\subsection{Upper Bound on $Z(\bar{W})$}\label{section:Upper_Bound_ZWbar}
Assume $Z_{d^\prime}(W)<\epsilon$. This implies
\begin{align*}
\frac{1}{q}\sum_{x\in\G}\sum_{y\in\mathcal{Y}}\sqrt{W(y|x)W(y|x+\tilde{d})}< \epsilon
\end{align*}
for all $\tilde{d}\in H\backslash M$. Therefore for each $x\in\G$,
\begin{align}\label{eqn:upper_bound_on_B}
\sum_{y\in\mathcal{Y}}\sqrt{W(y|x)W(y|x+\tilde{d})}< q\epsilon
\end{align}
The Bhattacharyya parameter of the channel $\bar{W}$ is given by:
\begin{align*}
Z(\bar{W})&=\frac{1}{\bar{q}(\bar{q}-1)} \sum_{\substack{t_M,t_M'\in T_M\\t_M\ne t_M'}} \sum_{y\in\mathcal{Y}} \sqrt{\bar{W}(y|t_H+t_M+M)\bar{W}(y|t_H+t_M'+M)}\\
&=\frac{1}{\bar{q}(\bar{q}-1)}\frac{1}{|M|} \sum_{\substack{t_M,t_M'\in T_M\\t_M\ne t_M'}} \sum_{y\in\mathcal{Y}} \sqrt{\left(\sum_{m\in M}W(y|t_H+t_M+m)\right)\left(\sum_{m'\in M}W(y|t_H+t_M'+m')\right)}\\
&=\frac{1}{\bar{q}(\bar{q}-1)}\frac{1}{|M|} \sum_{\substack{t_M,t_M'\in T_M\\t_M\ne t_M'}} \sum_{y\in\mathcal{Y}} \sqrt{\sum_{m,m'\in M}W(y|t_H+t_M+m)W(y|t_H+t_M'+m')}\\
&\le \frac{1}{\bar{q}(\bar{q}-1)}\frac{1}{|M|} \sum_{\substack{t_M,t_M'\in T_M\\t_M\ne t_M'}} \sum_{y\in\mathcal{Y}} \sum_{m,m'\in M} \sqrt{W(y|t_H+t_M+m)W(y|t_H+t_M'+m')}\\
\end{align*}
Let $x=t_H+t_M+m$ and $x'=t_H+t_M'+m'$. Note that $x-x'=t_M-t_M'+m-m'\in H$ since $t_M,t_M',m,m'\in H$. Also note that since $t_M\ne t_M'$ and $m-m'\in M$, it follows that $x-x'\notin M$. Now we use \eqref{eqn:upper_bound_on_B} to conclude:
\begin{align*}
Z(\bar{W})&\le \frac{1}{\bar{q}(\bar{q}-1)}\frac{1}{|M|} \sum_{\substack{t_M,t_M'\in T_M\\t_M\ne t_M'}} \sum_{m,m'\in M} q\epsilon\\
&\le \frac{1}{\bar{q}(\bar{q}-1)}\frac{1}{|M|} (\frac{|H|}{|M|})^2 |M|^2 q\epsilon
=\frac{|M|\cdot |H|\cdot |G|}{|H|-|M|}\epsilon
\end{align*}

\begin{remark}
For an arbitrary Abelian group $\G$, let $H\le \G$ be an arbitrary subgroup and let $M$ be any maximal subgroup of $H$. If for all $\tilde{d}\in H\backslash M$, $Z_{\tilde{d}}(W)<\epsilon$ then with a similar argument as above we can show that $Z(\bar{W})<O(\epsilon)$ where $\bar{W}$ is defined by \eqref{eq:W_bar_def}.
\end{remark}
\subsection{Lower Bound on $Z_{d'+t_H+M}(\bar{W})$}\label{section:Lower_Bound_ZWbar1}
Assume $Z_{d'}(W)>1-\epsilon$. Define
\begin{align*}
D_{d'}(W)=\frac{1}{2q}\sum_{x\in\G}\sum_{y\in\mathcal{Y}}\left|W(y|x)-W(y|x+d')\right|
\end{align*}
First we show that $Z_{d'}(W)>1-\epsilon$ implies $D_{d'}(W)<O(\epsilon)$. Define the following quantities:
\begin{align*}
q_{x,y}=\frac{W(y|x)+W(y|x+d')}{2}\\
\delta_{x,y}=\frac{1}{2}\left|W(y|x)-W(y|x+d')\right|
\end{align*}
Then we have
\begin{align*}
Z_{d'}(W)&=\frac{1}{q} \sum_{x\in\G} \sum_{y\in\mathcal{Y}} \sqrt{(q_{x,y}-\delta_{x,y}) (q_{x,y}+\delta_{x,y})}\\
&=\frac{1}{q} \sum_{x\in\G} \sum_{y\in\mathcal{Y}} \sqrt{q^2_{x,y}-\delta^2_{x,y}}
\end{align*}
Also we have
\begin{align*}
&D\triangleq\frac{1}{q}\sum_{x\in\G} \sum_{y\in\mathcal{Y}}\delta_{x,y}=D_{d'}(W),
\end{align*}
and
\begin{align*}
&0\le \delta_{x,y}\le q_{x,y}
\end{align*}
Note that
\begin{align*}
Z_{d'}(W)\le \max_{\substack{d_{x,y}:\frac{1}{q}\sum_{x\in\G} \sum_{y\in\mathcal{Y}} d_{x,y}=D}} \frac{1}{q} \sum_{x\in\G} \sum_{y\in\mathcal{Y}} \sqrt{q^2_{x,y}-d^2_{x,y}}
\end{align*}
The Lagrangian for this optimization problem is given by
\begin{align*}
\mathcal{L}=\frac{1}{q} \sum_{x\in\G} \sum_{y\in\mathcal{Y}} \sqrt{q^2_{x,y}-d^2_{x,y}} -\lambda \left(\frac{1}{q} \sum_{x\in\G} \sum_{y\in\mathcal{Y}} d_{x,y}-D\right)
\end{align*}
we have
\begin{align*}
\frac{\partial}{\partial d_{x,y}}\mathcal{L}=-\frac{d_{x,y}}{\sqrt{q^2_{x,y}-d^2_{x,y}}}- \frac{\lambda}{q}
\end{align*}
and
\begin{align*}
\frac{\partial^2}{\partial d_{x,y}^2}\mathcal{L}=-\frac{q^2_{x,y}}{(q^2_{x,y}-d^2_{x,y})^{\frac{3}{2}}}\le 0
\end{align*}
Define $\gamma=- \frac{\lambda}{q}$ to get $d_{x,y}=\sqrt{\frac{\gamma^2}{1+\gamma^2}}q_{x,y}$. We have $\sum_{y\in\mathcal{Y}}q_{x,y}=1$, therefore,
\begin{align*}
\frac{1}{q} \sum_{x\in\G} \sum_{y\in\mathcal{Y}} d_{x,y}&=\frac{1}{q} \sum_{x\in\G} \sum_{y\in\mathcal{Y}} \sqrt{\frac{\gamma^2}{1+\gamma^2}}q_{x,y}\\
&=\sqrt{\frac{\gamma^2}{1+\gamma^2}} \frac{1}{q} \sum_{x\in\G} \sum_{y\in\mathcal{Y}} q_{x,y}\\
&=\sqrt{\frac{\gamma^2}{1+\gamma^2}}
\end{align*}
Therefore we have $D=\sqrt{\frac{\gamma^2}{1+\gamma^2}}$ and hence $d_{x,y}=D q_{x,y}$. For this choice of $d_{x,y}$ we have
\begin{align*}
\frac{1}{q} \sum_{x\in\G} \sum_{y\in\mathcal{Y}} \sqrt{q^2_{x,y}-d^2_{x,y}}&=\frac{\sqrt{1-D^2}}{q} \sum_{x\in\G} \sum_{y\in\mathcal{Y}} q_{x,y}\\
&=\sqrt{1-D^2}
\end{align*}
Therefore, we have shown that $Z_{d'}(W)\le\sqrt{1-D_{d'}(W)^2}$. This implies that $D_{d'}(W)<2\epsilon-\epsilon^2=O(\epsilon)$.\\

Next, we show that $D_{d'}(W)<\epsilon$ implies $Z_{d'}(W)>1-O(\epsilon)$. We need the following lemma:
\begin{lemma}\label{lemma:ab2}
For constants $0\le a\le b\le 1$, with $b-a\le \delta$,
\begin{align*}
\sqrt{ab}\ge \frac{a+b}{2}-\frac{\delta}{2}
\end{align*}
\end{lemma}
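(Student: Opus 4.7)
The plan is to recognize that the quantity $\frac{a+b}{2}-\sqrt{ab}$ admits a clean closed-form rewrite as $\frac{(\sqrt{b}-\sqrt{a})^2}{2}$, since expanding $(\sqrt{b}-\sqrt{a})^2 = a + b - 2\sqrt{ab}$ and dividing by $2$ gives exactly this. So the inequality we want reduces to showing $(\sqrt{b}-\sqrt{a})^2 \le \delta$.

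Next, I would factor the difference of squares: $b - a = (\sqrt{b}-\sqrt{a})(\sqrt{b}+\sqrt{a})$. Since $a \ge 0$, we have $\sqrt{a} \ge 0$, hence $\sqrt{b}+\sqrt{a} \ge \sqrt{b}-\sqrt{a} \ge 0$ (the second inequality uses $a \le b$). Multiplying the smaller factor $\sqrt{b}-\sqrt{a}$ by itself gives something no larger than multiplying it by $\sqrt{b}+\sqrt{a}$, so
\begin{align*}
(\sqrt{b}-\sqrt{a})^2 \;\le\; (\sqrt{b}-\sqrt{a})(\sqrt{b}+\sqrt{a}) \;=\; b - a \;\le\; \delta.
\end{align*}
Dividing by $2$ and rearranging yields the claim.

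There is no real obstacle here; the lemma is an elementary inequality, and the only ``trick'' is the identity $\frac{a+b}{2}-\sqrt{ab} = \frac{1}{2}(\sqrt{b}-\sqrt{a})^2$, which makes the bound $b-a \le \delta$ directly applicable via the factorization. The hypothesis $a,b \in [0,1]$ is actually not needed for the inequality itself (only $a \ge 0$ and $a \le b$ are used); it is presumably stated because in the intended application $a$ and $b$ are probabilities. I would present the proof in two short displays and note this observation in passing.
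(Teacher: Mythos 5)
Your proof is correct, and it takes a genuinely different route from the paper. The paper establishes the lemma by a small calculus argument: it writes
\begin{align*}
\frac{a+b}{2}-\sqrt{ab}\le \max_{0\le x-a\le\delta}\left[\frac{a+x}{2}-\sqrt{ax}\right],
\end{align*}
checks that the partial derivative in $x$ is nonnegative for $x\ge a$ (so the maximum over $x$ is attained at $x=a+\delta$), and then observes that the resulting expression in $a$ is maximized at $a=0$, yielding $\delta/2$. You instead use the purely algebraic identity $\frac{a+b}{2}-\sqrt{ab}=\tfrac{1}{2}(\sqrt{b}-\sqrt{a})^2$ together with the factorization $b-a=(\sqrt{b}-\sqrt{a})(\sqrt{b}+\sqrt{a})$, and compare the two factors to get $(\sqrt{b}-\sqrt{a})^2\le b-a\le\delta$. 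Both arguments are valid; yours is shorter, avoids derivatives entirely, and makes the mechanism transparent. Your side remark is also accurate: neither proof actually uses the upper bound $b\le 1$, only $0\le a\le b$, and the hypothesis is presumably stated because in the application $a$ and $b$ are conditional probabilities.
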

\begin{proof}
Note that
\begin{align*}
\frac{a+b}{2}-\sqrt{ab}\le \max_{0\le x-a\le\delta}\frac{a+x}{2}-\sqrt{ax}
\end{align*}
We have
\begin{align*}
\frac{\partial}{\partial x}\left[\frac{a+x}{2}-\sqrt{ax}\right]=\frac{1}{2}-\frac{a}{2\sqrt{ax}}\ge 0
\end{align*}
for all $x\ge a$. Therefore the maximum is attained at $x=a+\delta$. Therefore,
\begin{align*}
\frac{a+b}{2}-\sqrt{ab}\le \frac{a+(a+\delta)}{2}-\sqrt{a(a+\delta)}
\end{align*}
The maximum of the right hand side is attained at $a=0$, hence,
\begin{align*}
\frac{a+b}{2}-\sqrt{ab}\le \frac{\delta}{2}
\end{align*}
\end{proof}

Assume $D_{d'}(W)<\epsilon$. We have
\begin{align*}
1-Z_{d'}(W)&=1-\frac{1}{q} \sum_{x\in\G} \sum_{y\in\mathcal{Y}} \sqrt{W(y|x)W(y|x+d')}\\
&=\frac{1}{q} \sum_{x\in\G} \sum_{y\in\mathcal{Y}} \left(\frac{W(y|x)+W(y|x+d')}{2}-\sqrt{W(y|x)W(y|x+d')}\right)\\
&\stackrel{(a)}{\le} \frac{1}{q} \sum_{x\in\G} \sum_{y\in\mathcal{Y}} \frac{1}{2}\left|W(y|x)-W(y|x+d')\right|\\
&=D_{d'}(W)
\end{align*}
where $(a)$ follows from Lemma \ref{lemma:ab2} with $a=W(y|x)$, $b=W(y|x+d')$ and $\delta=\left|W(y|x)-W(y|x+d')\right|$. This shows that $D_{d'}(W)<\epsilon$ implies $Z_{d'}(W)>1-\epsilon$.\\ \\

Next, we show that $D_{d'}(W)<\epsilon$ implies $D_{d'+t_H+M}(\bar{W})<O(\epsilon)$. We have
\begin{align*}
D_{d'+t_H+M}(\bar{W})&=\frac{1}{\bar{2q}} \sum_{t_M\in T_M} \sum_{y\in\mathcal{Y}} \left|\bar{W}(y|t_H+t_M+M)-\bar{W}(y|t_H+t_M+d'+M)\right|\\
&=\frac{1}{\bar{q}} \frac{1}{|M|} \sum_{t_M\in T_M} \sum_{y\in\mathcal{Y}} \left|\sum_{m\in M}W(y|t_H+t_M+m)-\sum_{m\in M}W(y|t_H+t_M+d'+m)\right|\\
&\le \frac{1}{\bar{q}} \frac{1}{|M|} \sum_{t_M\in T_M} \sum_{y\in\mathcal{Y}} \sum_{m\in M} \left|W(y|t_H+t_M+m)-W(y|t_H+t_M+d'+m)\right|\\
&\le \frac{1}{\bar{q}} \frac{1}{|M|} 2q D_{d'}(W)
\end{align*}
This shows that $D_{d'}(W)<\epsilon$ implies $D_{d'+t_H+M}(\bar{W})<\frac{2q\epsilon}{\bar{q}|M|}=O(\epsilon)$.\\

We have shown that $Z_{d'}(W)>1-\epsilon$ implies $D_{d'}(W)<2\epsilon-\epsilon^2=O(\epsilon)$. This implies $D_{d'+t_H+M}(\bar{W})<\frac{2q(2\epsilon-\epsilon^2))}{\bar{q}|M|}=O(\epsilon)$ and this in turn implies $Z_{d'+t_H+M}(\bar{W})>1-\frac{2q(2\epsilon-\epsilon^2)}{\bar{q}|M|}=1-O(\epsilon)$.

\begin{remark}
For an arbitrary Abelian group $\G$, let $H\le \G$ be an arbitrary subgroup and let $M$ be any maximal subgroup of $H$. If for some $\tilde{d}\in H\backslash M$, $Z_{\tilde{d}}(W)>1-\epsilon$ then with a similar argument as above, we can show that $Z_{\tilde{d}+t_H+M}(\bar{W})>1-O(\epsilon)$ where $\bar{W}$ is defined by \eqref{eq:W_bar_def}.
\end{remark}
\subsection{Alternate Proof for a Lower Bound on $Z_{d'+t_H+M}(\bar{W})$} \label{section:Alt_Lower_Bound_ZWbar}
In Appendix \ref{section:Lower_Bound_ZWbar1}, we proved that $Z_{d'}(W)>1-\epsilon$ implies $Z_{d'+t_H+M}(\bar{W})>1-O(\epsilon)$. In this part, we give an alternate proof of this statement for the $\mathds{Z}_{p^r}$ case.\\
Assume $Z_{d'}(W)>1-\epsilon$. It follows that
\begin{align*}
\sum_{x\in\G}\left[1-\sum_{y\in\mathcal{Y}}\sqrt{W(y|x)W(y|x+d')}\right]<q\epsilon
\end{align*}
Similar to the previous case, we have for all $x\in\G$,
\begin{align*}
\sqrt{1-Z(W_{\{x,x+2d'\}})}\le 2\sqrt{q\epsilon}
\end{align*}
Repeated application of the above lemma yields $\forall x,x'\in\G:x-x'\in \langle d'\rangle$,
\begin{align}\label{eqn:Wxxprime}
\sqrt{1-Z(W_{\{x,x'\}})}\le q\sqrt{q\epsilon}
\end{align}
We have
\begin{align*}
Z_{d'+t_H+M}(\bar{W})&=\frac{1}{\bar{q}}\sum_{t_M\in T_M}\sum_{y\in\mathcal{Y}} \sqrt{\bar{W}(y|t_H+t_M+M)\bar{W}(y|t_H+t_M+d'+M)}\\
&=\frac{1}{\bar{q}} \sum_{t_M\in T_M} \sum_{y\in\mathcal{Y}} \sqrt{\sum_{m,m'\in M}\frac{1}{|M|^2}W(y|t_H+t_M+m)W(y|t_H+t_M+d'+m')}\\
&\stackrel{(a)}{\ge} \frac{1}{\bar{q}} \sum_{t_M\in T_M} \sum_{y\in\mathcal{Y}} \sum_{m,m'\in M} \frac{1}{|M|^2} \sqrt{W(y|t_H+t_M+m)W(y|t_H+t_M+d'+m')}\\
&\ge \frac{1}{\bar{q}} \sum_{t_M\in T_M} \min_{m,m'\in M} \sum_{y\in\mathcal{Y}} \sqrt{W(y|t_H+t_M+m)W(y|t_H+t_M+d'+m')}
\end{align*}
where $(a)$ follows since $\sqrt{\cdot}$ is a concave function. Let $x=t_H+t_M+m$ and $x'=t_H+t_M+d'+m'$. It follows that $x'-x=d'+(m'-m)$. Since $d',m',m\in H$ we have $x'-x\in H$. Since $\G$ and hence $H$ are $\mathds{Z}_{p^r}$ rings it follows that $d'\in H\backslash M$ generates $H$; hence $x'-x\in \langle d'\rangle$. We can use \eqref{eqn:Wxxprime} to get
\begin{align*}
Z_{d'+t_H+M}(\bar{W})&\ge \frac{1}{\bar{q}} \sum_{t_M\in T_M} \min_{m,m'\in M} (1-q^3\epsilon)=1-\frac{q^3 \epsilon}{\bar{q}}
\end{align*}
It follows that $Z_{d'+t_H+M}(\bar{W})>1-O(\epsilon)$.
\subsection{The Rate of Polarization}\label{section:Rate_of_Polarization}
Recall that for $t=0,\cdots,r$, $(Z^t)^{(n)}=\sum_{d\notin H_t}Z_d(W_N^{(J_n)})$ where $J_n$ is uniform over $\{1,2,\cdots,2^n\}$. For $t=0,\cdots,r$, define $(Z^t_{\max})^{(n)}=\max_{d\notin H_t}Z_d(W_N^{(J_n)})$ where $J_n$ is same as above. Since for all $d\in \G$, $Z_d(W^+)=Z_d(W)^2$ it follows that $Z^t_{\max}(W^+)\le Z^t_{\max}(W)^2$. It has been shown in \cite[p. 6]{sasoglu_polar_q} that
\begin{align*}
Z_d(W^-)\le 2Z_d(W)+\sum_{\substack{\Delta\ne 0\\\Delta\ne -d}}Z_{\Delta}(W)Z_{d+\Delta}(W)
\end{align*}
Note that for any $\Delta\in G$, $d\notin H_t$ implies that either $\Delta\notin H_t$ or $d+\Delta\notin H_t$. Therefore, $d\notin H_t$ implies either $Z_{\Delta}(W)\le Z^t_{\max}(W)$ or $Z_{d+\Delta}(W)\le Z^t_{\max}(W)$ (or both). Since $Z_{\Delta}(W)$ and $Z_{d+\Delta}(W)$ both take values from $[0,1]$, it follows that
\begin{align*}
Z_{\Delta}(W)Z_{d+\Delta}(W)\le Z^t_{\max}(W)
\end{align*}
Therefore, for any $d\notin H_t$, $Z_d(W^-)\le 2Z_d(W)+q Z^t_{\max}(W)$. Hence
\begin{align*}
Z^t_{\max}(W^-)&=\max_{d\notin H_t} Z_d(W^-)\\
&\le \max_{d\notin H_t} \left(2Z_d(W)+ q Z^t_{\max}(W)\right)\\
&\le (q+2)Z^t_{\max}(W)
\end{align*}
Since for all $d$ $Z_d^n$ converges to a Bernoulli random variable it follows that $(Z^t_{\max})^{(n)}$ also converges to a $\{0,1\}$-valued random variable $(Z^t_{\max})^{(\infty)}$. Note that $P\left((Z^t_{\max})^{(\infty)}=0\right)=P\left((Z^t)^{\infty}=0\right)=\sum_{s=t}^r p_s$. Therefore, $(Z^t_{\max})^{(n)}$ satisfies the conditions of \cite[Theorem 1]{Arikan_rate_of_pol_ieee} and hence
\begin{align*}
\lim_{n\rightarrow \infty} P\left((Z^t_{\max})^{(n)}<2^{-2^{\beta n}}\right)=P\left((Z^t_{\max})^{(\infty)}=0\right)
\end{align*}
for any $\beta<\frac{1}{2}$. It clearly follows that $\lim_{n\rightarrow \infty} P\left(q(Z^t_{\max})^{(n)}<2^{-2^{\beta n}}\right)=P\left((Z^t_{\max})^{(\infty)}=0\right)$. Note that the event $\{(Z^t)^{(n)}<2^{-2^{\beta n}}\}$ includes the event $\{q(Z^t_{\max})^{(n)}<2^{-2^{\beta n}}\}$. Therefore,
\begin{align*}
\lim_{n\rightarrow \infty} P\left((Z^t)^{(n)}<2^{-2^{\beta n}}\right)\ge P\left((Z^t)^{\infty}=0\right)
\end{align*}

Similarly, for an arbitrary Abelian group $\G$ and a subgroup $H$ of $\G$, define $(Z^H_{\max})^{(n)}=\max_{d\notin H}Z_d(W_N^{(J_n)})$ where $J_n$ is defined as above. It is straightforward to show that $(Z^H_{\max})^{(n)}$ satisfies the conditions of \cite[Theorem 1]{Arikan_rate_of_pol_ieee}. Therefore, with an argument similar to above, we can show that,
\begin{align*}
\lim_{n\rightarrow \infty} P\left((Z^H)^{(n)}<2^{-2^{\beta n}}\right)\ge P\left((Z^H)^{\infty}=0\right)
\end{align*}
for any $\beta<\frac{1}{2}$.
\bibliographystyle{IEEEtran}
\bibliography{IEEEabrv,ariabib}
\end{document}